\documentclass[]{article}
\usepackage[english]{babel}
\usepackage{tikz}
\usepackage{amssymb}
\usepackage[T1]{fontenc}
\usepackage[utf8]{inputenc}
\usepackage{amsmath}
\usepackage{amsthm}
\usepackage{enumerate}
\usepackage{caption}
\usepackage{graphicx}
\usepackage{authblk}
\usetikzlibrary[shapes.misc]
\usetikzlibrary[shapes.geometric]
\newtheorem{de}{Definition}[section]
\newtheorem{theo}{Theorem}[section]
\newtheorem{prop}[theo]{Proposition}
\newtheorem{cor}[theo]{Corollary}
\newtheorem{lem}[theo]{Lemma}
\newtheorem{obs}[theo]{Observation}

\title{A characterization of b-chromatic and partial Grundy numbers by induced subgraphs}

\author[2]{Brice Effantin}
\author[1]{Nicolas Gastineau}
\author[1]{Olivier Togni}
\affil[1]{LE2I UMR6306, CNRS, Arts et Métiers, \textit{Univ. Bourgogne Franche-Comté, F-21000 Dijon, France} }
\affil[2]{\textit{Université de Lyon, CNRS, Université Lyon 1, LIRIS, UMR5205, F-69622, France} }

\begin{document}
\maketitle
\begin{abstract}
Gyárfás et al. and Zaker have proven that the Grundy number of a graph $G$ satisfies $\Gamma(G)\ge t$ if and only if $G$ contains an induced subgraph called a $t$-atom.
The family of $t$-atoms has bounded order and contains a finite number of graphs.
In this article, we introduce equivalents of $t$-atoms for b-coloring and partial Grundy coloring.
This concept is used to prove that determining if $\varphi(G)\ge t$ and $\partial\Gamma(G)\ge t$ (under conditions for the b-coloring), for a graph $G$, is in XP with parameter $t$.
We illustrate the utility of the concept of $t$-atoms by giving results on b-critical vertices and edges, on b-perfect graphs and on graphs of girth at least $7$.
\end{abstract}

\section{Introduction}
Given a graph $G$, a \emph{proper $k$-coloring} of $G$ is a surjective function $c:V(G) \rightarrow\{1,\ldots,k\}$ such that $c(u)\neq c(v)$ for any $uv\in E(G)$; the \emph{color class} $V_i$ is the set $\{u\in V| c(u)=i\}$ and a vertex $v$ has \emph{color} $i$ if $v\in V_i$. We denote by $N(u)$ the set of neighbors of a vertex $u$ and by $N[u]$ the set $N(u)\cup\{u\}$.
A vertex $v$ of color $i$ is a \emph{Grundy vertex} if it is adjacent to at least one vertex colored $j$, for every $j<i$.
A \emph{Grundy $k$-coloring} is a proper $k$-coloring such that every vertex is a Grundy vertex. 
The \emph{Grundy number} of a graph $G$, denoted by $ \Gamma(G)$, is the largest integer $k$ such that there exists a Grundy $k$-coloring of $G$ \cite{GR1939}.
A \emph{partial Grundy $k$-coloring} is a proper $k$-coloring such that every color class contains at least one Grundy vertex.
The \emph{partial Grundy number} of a graph $G$, denoted by $\partial \Gamma(G)$, is the largest integer $k$ such that there exists a partial Grundy $k$-coloring of $G$.
Let $G$ and $G'$ be two graphs. By $G\cup G'$ we denote the graph with vertex set $V(G)\cup V(G')$ and edge set $E(G)\cup E(G')$.
Let $m(G)$ be the largest integer $m$ such that $G$ has at least $m$ vertices of degree at least $m-1$.
A graph $G$ is \emph{tight} if it has exactly $m(G)$ vertices of degree $m(G)-1$.

Another coloring parameter with domination constraints on the colors is the \emph{b-chromatic number}.
In a proper-$k$-coloring, a vertex $v$ of color $i$ is a \emph{b-vertex} if $v$ is adjacent to at least one vertex colored $j$, $1\le j\neq i\le k$.
A \emph{b-$k$-coloring}, also called b-coloring when $k$ is not specified, is a proper $k$-coloring such that every color class contains at least one b-vertex.
The \emph{b-chromatic number} of a graph $G$, denoted by $\varphi(G)$, is the largest integer $k$ such that there exists a b-$k$-coloring of $G$.
In this paper, we introduce the concept of \emph{b-relaxed number}, denoted by $\varphi_r(G)$.
A \emph{b-$k$-relaxed coloring} of $G$ is a b-$k$-coloring of a subgraph of $G$.
The b-relaxed number of $G$ is $\varphi_r(G)=\max_{H\subseteq G}( \varphi(H))$, for $H$ an induced subgraph of $G$.
Note that we have $\varphi(G)\le \varphi_r(G) \le \partial \Gamma(G)$. The difference between $\varphi(G)$ and $\varphi_r(G)$ can be arbitrary large. Let $K^{-}_{n,n}$ denotes the complete bipartite graph $K_{n,n}$ in which we remove $n-1$ pairwise non incident edges (or $n-1$ edges of a perfect matching in $K_{n,n}$) \cite{BA2013}. For this graph we have $\varphi(K^{-}_{n,n})=2$ and $\varphi_r(K^{-}_{n,n})=n$ as Figure \ref{b-diff} illustrates it (for $n=3$).

\begin{figure}[t]
\begin{center}
\begin{tikzpicture}

\draw (0,0) -- (1,1);
\draw (0,0) -- (2,1);
\draw (1,0) -- (0,1);
\draw (1,0) -- (2,1);
\draw (2,0) -- (0,1);
\draw (2,0) -- (1,1);
\draw (2,0) -- (2,1);
\node at (0,0) [circle,draw=black,fill=black, scale=0.7] {};
\node at (1,0) [circle,draw=black,fill=black, scale=0.7] {};
\node at (2,0) [circle,draw=black,fill=black, scale=0.7] {};
\node at (0,1) [circle,draw=red,fill=red, scale=0.7] {};
\node at (1,1) [circle,draw=red,fill=red, scale=0.7] {};
\node at (2,1) [circle,draw=red,fill=red, scale=0.7] {};
\node at (0,0-0.3) {1};
\node at (1,0-0.3) {1};
\node at (2,0-0.3) {1};
\node at (0,1+0.3) {2};
\node at (1,1+0.3) {2};
\node at (2,1+0.3) {2};

\draw (0+5,0) -- (1+5,1);
\draw (0+5,0) -- (2+5,1);
\draw (1+5,0) -- (0+5,1);
\draw (1+5,0) -- (2+5,1);
\draw (2+5,0) -- (0+5,1);
\draw (2+5,0) -- (1+5,1);
\draw (2+5,0) -- (2+5,1);
\node at (0+5,0) [circle,draw=black,fill=black, scale=0.7] {};
\node at (0+5,1) [circle,draw=black,fill=black, scale=0.7] {};
\node at (1+5,0) [circle,draw=red,fill=red, scale=0.7] {};
\node at (1+5,1) [circle,draw=red,fill=red, scale=0.7] {};
\node at (2+5,0) [circle,draw=blue,fill=blue, scale=0.7] {};
\node at (2+5,1) [circle,draw=black,fill=black, scale=0.5] {};
\node at (0+5,0-0.3) {1};
\node at (1+5,0-0.3) {2};
\node at (2+5,0-0.3) {3};
\node at (0+5,1+0.3) {1};
\node at (1+5,1+0.3) {2};

\end{tikzpicture}
\end{center}
\caption{The graph $K^{-}_{3,3}$ with $\varphi(K^{-}_{3,3})=2$ (on the left) and $\varphi_r (K^{-}_{3,3})=3$ (on the right).}
\label{b-diff}
\end{figure}
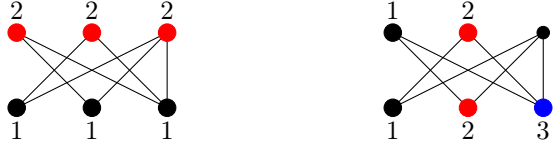

The concept of b-coloring has been introduced by Irving and Manlove \cite{IR1999}, and a large number of papers was published (see e.g. \cite{EF2003,MA2013}).
The b-chromatic number of regular graphs has been investigated in a serie of papers (\cite{CA2011,KL2010,EL2009,SH2012}).
Determining the b-chromatic number of a tight graph is NP-hard even for a connected bipartite graph \cite{KR2002} and a tight chordal graph \cite{HA2012}.

In this paper, we study the decision problems b-COL, b-r-COL and pG-COL with parameter $t$ from Table \ref{tab1}.
\begin{center}
\begin{table}[!ht]
\centering\begin{tabular}{|c|c|c|c|c|}
\hline
 & b-COL & b-r-COL & G-COL& pG-COL \\\hline
Question & Does $\varphi(G)\ge t$? & Does $\varphi_r(G)\ge t$? & Does $ \Gamma(G)\ge t$? & Does $\partial \Gamma(G)\ge t$?  \\ \hline
Complexity &  undetermined &  XP &  XP \cite{ZA2006} & XP \\
class &  & & &  \\ \hline
\end{tabular}
\caption{\label{tab1} The different decision problems with input a graph $G$ and parameter $t$ and their complexity class.}
\end{table}
\end{center}
A decision problem is in FPT with parameter $t$ if there exists an algorithm which resolves the problem in time $O(f(t)\ n^{c} )$, for an instance of size $n$, a computable function $f$ and a constant $c$.
A decision problem is in XP with parameter $t$ if there exists an algorithm which resolves the problem in time $O(f(t)\ n^{g(t)} )$, for an instance of size $n$ and two computable functions $f$ and $g$.

The concept of $t$-atom was introduced independently by Gyárfás et al. \cite{GAR1997} and by Zaker \cite{ZA2006}. The family of $t$-atoms is finite and the presence of a $t$-atom can be determined in polynomial time for a fixed $t$. The following definition is slightly different from the definitions of Gyárfás et al. or Zaker, insisting more on the construction of every $t$-atom (some $t$-atoms can not be obtained with the initial construction of Zaker).
\begin{de}[\cite{ZA2006}]
The family of $t$-atoms is denoted by $\mathcal{A}^{\Gamma}_t$, for $t\ge 1$, and is defined by induction. The family $\mathcal{A}^{\Gamma}_1$ only contains $K_1$.
A graph $G$ is in $\mathcal{A}^{\Gamma}_{t+1}$ if there exists a graph $G'$ in $\mathcal{A}^{\Gamma}_t$ and an integer $m$, $m\le|V(G')|$, such that $G$ is composed of $G'$ and an independent set $I_m$ of order $m$, adding edges between $G'$ and $I_m$ such that every vertex in $G'$ is adjacent to at least one vertex in $I_m$.
\end{de}
Moreover, in the following sections, we say that a graph $G$ in a family of graphs $\mathcal{F}$ is \emph{minimal}, if no graphs of $\mathcal{F}$ is a proper induced subgraph of $G$. For example, a minimal $t$-atom $A$ is a $t$-atom for which there are no $t$-atoms which are induced in $A$ other than itself. 
\begin{theo}[\cite{GAR1997,ZA2006}] 
A graph $G$ satisfies $\Gamma(G)\ge t$ if and only if it contains an induced minimal $t$-atom.
\end{theo}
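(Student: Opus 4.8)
The plan is to prove the two implications separately, noting first that the passage from ``$G$ contains an induced $t$-atom'' to ``$G$ contains an induced \emph{minimal} $t$-atom'' costs nothing: since the family $\mathcal{A}^{\Gamma}_{t}$ is finite, if $A$ is any $t$-atom induced in $G$, then a $t$-atom $A'$ induced in $A$ that is minimal for the induced-subgraph order is, by transitivity of that order, a minimal $t$-atom induced in $G$. So it suffices to show that $\Gamma(G)\ge t$ if and only if $G$ contains an induced $t$-atom.

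For sufficiency, I would first record the monotonicity of $\Gamma$ under induced subgraphs: if $c$ is a Grundy coloring of an induced subgraph $H$ of $G$, then listing the vertices of $V(G)\setminus V(H)$ in any order and giving each of them the least color absent from its already-colored neighborhood extends $c$ to a Grundy coloring of $G$ using at least as many colors (vertices of $H$ keep all their smaller-colored neighbors, each newly colored vertex is Grundy by the greedy rule, and the set of colors used remains an initial segment $\{1,\dots,k'\}$). Hence it is enough to prove $\Gamma(A)\ge t$ for every $A\in\mathcal{A}^{\Gamma}_{t}$, by induction on $t$: $\Gamma(K_1)=1$; and if $A=G'\cup I_m$ with $G'\in\mathcal{A}^{\Gamma}_{t-1}$, $I_m$ independent, and every vertex of $G'$ having a neighbor in $I_m$, take a Grundy $(t-1)$-coloring of $G'$, relabel its colors $1,\dots,t-1$ as $2,\dots,t$, and give color $1$ to every vertex of $I_m$. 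This is proper, every vertex of $G'$ is a Grundy vertex (its shifted neighbors in $G'$ together with a color-$1$ neighbor in $I_m$), every vertex of $I_m$ is trivially Grundy, and all $t$ colors occur; so $\Gamma(A)\ge t$.

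For necessity, fix a Grundy $k$-coloring $c$ of $G$ with $k\ge t$; in particular some vertex has color $t$. I would build the layers of an atom top-down. Put $R_t=\{v\}$ for some $v$ with $c(v)=t$. Having built pairwise disjoint nonempty sets $R_t,\dots,R_{j+1}$ with $R_i$ a set of vertices of color $i$, define $R_j$ by choosing, for each $u\in R_{j+1}\cup\cdots\cup R_t$ (which has color $>j$, hence, $c$ being Grundy, a neighbor of color $j$), one such neighbor $f(u)$, and setting $R_j=\{f(u): u\in R_{j+1}\cup\cdots\cup R_t\}$. Each $R_j$ lies in the $j$-th color class, so it is independent and nonempty, and $|R_j|\le|R_{j+1}|+\cdots+|R_t|$ since $f$ is a function; moreover every vertex of $R_{j+1}\cup\cdots\cup R_t$ has a neighbor in $R_j$. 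One then checks by induction on $t-j$ that $G[R_j\cup\cdots\cup R_t]\in\mathcal{A}^{\Gamma}_{t-j+1}$: the base case $G[R_t]=K_1\in\mathcal{A}^{\Gamma}_1$ is immediate, and the inductive step is exactly the defining construction applied to $G'=G[R_{j+1}\cup\cdots\cup R_t]$ and the independent set $I_m=R_j$, whose order $m=|R_j|\le|V(G')|$ and which dominates $G'$. Taking $j=1$ yields an induced $t$-atom of $G$, whence an induced minimal $t$-atom by the first paragraph.

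The delicate point, and where I would spend the care, is the bookkeeping in the necessity direction: verifying that the $R_j$ are genuinely pairwise disjoint, independent and nonempty, that the inequality $|R_j|\le|R_{j+1}|+\cdots+|R_t|$ is precisely the condition $m\le|V(G')|$ in the definition of $\mathcal{A}^{\Gamma}$, and that passing to the induced subgraph $G[R_1\cup\cdots\cup R_t]$ with \emph{all} $G$-edges among the selected vertices (not merely the edges through $f$) still conforms to the inductive construction — which it does, because the definition allows any set of cross edges subject to the domination requirement and forbids edges inside a layer only via independence. The rest, including the reduction to a \emph{minimal} atom via finiteness of $\mathcal{A}^{\Gamma}_{t}$, is routine.
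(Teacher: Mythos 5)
Your proof is correct. Note, however, that the paper does not prove this statement at all: it is quoted from Gy\'arf\'as et al.\ and Zaker with citations, so there is no in-paper argument to compare against line by line. The closest internal analogues are the paper's proofs of Theorem 2.5 for pG-$t$-atoms and Proposition 3.4 for b-$t$-atoms, and your argument follows the same template (greedy extension of the atom's coloring for sufficiency, extraction of an atom from the closed neighborhoods of witnessing vertices for necessity), but your necessity direction is necessarily more delicate than theirs: for pG- and b-atoms only the center vertices $c_i$ must be dominated, so the paper can simply take $N[u_1]\cup\cdots\cup N[u_{t'}]$ and prune, whereas the recursive definition of $\mathcal{A}^{\Gamma}_t$ requires \emph{every} vertex of the level-$(t-j)$ atom to have a neighbor in the next independent layer. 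Your top-down construction of the layers $R_t,\dots,R_1$ via the selection function $f$ handles exactly this, and the bookkeeping you flag (disjointness and independence of the $R_j$ via color classes, $|R_j|\le|R_{j+1}|+\cdots+|R_t|$ matching the condition $m\le|V(G')|$, and the fact that taking all induced cross edges is permitted by the definition) all checks out. The reduction from ``induced $t$-atom'' to ``induced minimal $t$-atom'' via minimality in the induced-subgraph order is also fine.
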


In this paper we prove equivalent theorems for b-relaxed number and partial Grundy number. In contrast with the minimal $t$-atoms, we can not define the minimal $t$-atoms for b-coloring as the smallest graphs such that $G$ satisfies $\varphi(G)=t$ (also called b-critical graphs). 

The paper is organized as follows: Section 2 is devoted to the definition of $t$-atoms for the partial Grundy coloring. This concept allows us to prove that the partial Grundy coloring problem is in XP with parameter $t$. Section 3 is similar to Section 2 but for $b$-relaxed-coloring. Section 4 is devoted to the concept of b-critical vertices and edges. Section 5 is about b-perfect graphs. Finally, Section 6 deals with graphs for which the b-relaxed and the b-chromatic numbers are equal. 
\section{Partial-Grundy-$t$-atoms: $t$-atoms for partial Grundy coloring}
We start this section with the definition of $t$-atoms for partial Grundy coloring.
\begin{de}
Given an integer $t$, a partial Grundy $t$-atom (or pG-$t$-atom, for short) is a graph $A$ whose
vertex-set can be partitioned into $t$ sets $D_1$, $\ldots$, $D_t$, where $D_i$ contains a special vertex $c_i$ for each $i\in \{1,\ldots,t\}$ such that the following holds:
\begin{itemize}
\item For all $i\in \{1,\ldots,t\}$, $D_i$ is an independent set and $|D_i|\le t-i+1$;
\item For all $i\in \{2,\ldots,t\}$, $c_i$ has a neighbor in each of $D_1$, $\ldots$, $D_{i-1}$.
\end{itemize}
The set $\{c_1 , \ldots,c_t\}$ is called the \emph{center} of $A$ and denoted by $C(A)$.
\end{de}

Note that the sets $D_1$, $\ldots$, $D_t$ induce a partial Grundy coloring of the pG-$t$-atom.
Figure \ref{pgatom} illustrates several pG-$t$-atoms (and their induced colorings) obtained using the previous definition.
\begin{obs}
For every pG-$t$-atom $G$, we have $|V(G)|\le \frac{t(t+1)}{2}$.
\end{obs}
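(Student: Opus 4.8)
The plan is to argue directly from the defining partition of a pG-$t$-atom. By definition, the vertex set of $G$ is partitioned into the sets $D_1,\ldots,D_t$, so $|V(G)|=\sum_{i=1}^{t}|D_i|$. The first bullet of the definition gives the size constraint $|D_i|\le t-i+1$ for every $i\in\{1,\ldots,t\}$, and substituting this into the sum yields $|V(G)|\le\sum_{i=1}^{t}(t-i+1)$.

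It then remains to evaluate this sum. I would reindex by setting $j=t-i+1$, so that as $i$ runs over $\{1,\ldots,t\}$ the index $j$ also runs over $\{1,\ldots,t\}$; hence $\sum_{i=1}^{t}(t-i+1)=\sum_{j=1}^{t}j=\frac{t(t+1)}{2}$, which is the claimed bound.

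There is essentially no obstacle here: the only thing one must be careful about is that the $D_i$ genuinely form a partition (a disjoint cover of $V(G)$), so that the cardinalities add without overcounting — but this is exactly what the definition of a pG-$t$-atom stipulates. I would therefore keep the proof to a single short displayed computation. As a minor aside, one could also note that equality $|V(G)|=\frac{t(t+1)}{2}$ is attainable, namely whenever $|D_i|=t-i+1$ for all $i$ (and such atoms do exist, as in the examples of Figure~\ref{pgatom}), so the bound is tight; including this remark is optional.
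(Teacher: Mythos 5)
Your proof is correct and is exactly the computation the paper intends: the paper states this as an Observation without proof, and summing the bounds $|D_i|\le t-i+1$ over the partition $D_1,\ldots,D_t$ gives $\sum_{i=1}^{t}(t-i+1)=\frac{t(t+1)}{2}$. Nothing further is needed.
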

\begin{lem}\label{recptat}
Let $t$ and $t'$ be two integers such that $1\le t' <t$. Every pG-$t$-atom contains a pG-$t'$-atom as induced subgraph.
\end{lem}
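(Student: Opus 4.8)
The plan is to produce an explicit induced subgraph witnessing the claim: inside a pG-$t$-atom, the subgraph induced on the ``last'' $t'$ classes of the defining partition is a pG-$t'$-atom. By iterating, it suffices to treat the case $t'=t-1$; but the bookkeeping is equally easy for general $t'$, so I would argue directly.

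Let $A$ be a pG-$t$-atom with vertex partition $D_1,\ldots,D_t$ and centers $c_1,\ldots,c_t$ as in the definition. Put $A' = A\bigl[D_{t-t'+1}\cup\cdots\cup D_t\bigr]$, and for $j\in\{1,\ldots,t'\}$ set $D_j' = D_{j+t-t'}$ and $c_j' = c_{j+t-t'}$, so that $c_j'\in D_j'$ and $D_1',\ldots,D_{t'}'$ partition $V(A')$. I would then verify the two defining properties of a pG-$t'$-atom for this data. Independence of each $D_j'$ is immediate, since $D_j'=D_{j+t-t'}$ is an independent set of $A$ and independence is preserved under vertex deletion. For the size bound, $|D_j'| = |D_{j+t-t'}| \le t - (j+t-t') + 1 = t' - j + 1$, which is exactly the required inequality; note this is precisely where the constraint $|D_i|\le t-i+1$ in the original atom is used.

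It remains to check the neighbor condition. Fix $j\in\{2,\ldots,t'\}$; then $j+t-t'\ge 2$, so by definition of $A$ the vertex $c_j' = c_{j+t-t'}$ has a neighbor in each of $D_1,\ldots,D_{j+t-t'-1}$, in particular in each of $D_{1+t-t'},\ldots,D_{j-1+t-t'}$, i.e. in each of $D_1',\ldots,D_{j-1}'$. All of these vertices lie in $V(A')$ and the corresponding edges survive in $A'$, because passing from $A$ to $A'$ only deletes vertices of $D_1\cup\cdots\cup D_{t-t'}$, none of which is $c_j'$ or lies in $D_1'\cup\cdots\cup D_{j-1}'$. Hence $A'$ is a pG-$t'$-atom induced in $A$.

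I do not expect a real obstacle here: the whole argument is index bookkeeping, and the only genuine point is the observation that the size constraint was designed so that discarding the lowest classes shifts the indices consistently. As a byproduct one gets that the center of the sub-atom can be chosen inside the center of the original atom, namely $C(A') = \{c_{t-t'+1},\ldots,c_t\}\subseteq C(A)$, which may be worth recording for later use.
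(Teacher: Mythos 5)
Your proof is correct, but it takes a genuinely different route from the paper's. The paper obtains the pG-$t'$-atom by keeping the \emph{lowest}-indexed classes $D_1,\ldots,D_{t'}$ and then deleting the vertices not adjacent to any of $c_1,\ldots,c_{t'}$; you instead keep the \emph{highest}-indexed classes $D_{t-t'+1},\ldots,D_t$ and reindex. The neighbor condition comes out the same way in both versions (each retained center's required neighbors all lie in retained classes), but the two choices behave quite differently with respect to the cardinality constraint. Your choice makes it automatic: $|D_{j+t-t'}|\le t-(j+t-t')+1=t'-j+1$ is exactly the bound required of $D_j'$, with no further surgery. The paper's choice does not: after restricting to $D_1,\ldots,D_{t'}$ one still has only $|D_1|\le t$ rather than $|D_1|\le t'$, and even after the paper's pruning step a class $D_i$ could retain up to $t$ vertices all adjacent to a single kept center (and, read literally, the pruning could even delete a center $c_i$ that happens to be nonadjacent to the other centers), so the one-line argument there leaves the size bound unjustified. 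In short, your version is cleaner and actually verifies all clauses of the definition; the only thing the paper's version ``buys'' is that the same template also covers the b-atom case (Lemma 3.3), where the size bound $|D_i|\le t$ is uniform in $i$ and discarding high classes alone cannot shrink the remaining ones. Your closing remark that $C(A')\subseteq C(A)$ is a correct and potentially useful byproduct.
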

\begin{proof}
Every pG-$t$-atom $G$ contains a pG-$t'$-atom $G'$: we can obtain $G'$ by removing every vertex in $D_k$, for $t'<k\le t$, and by removing, afterwards, the vertices of $G'$ not adjacent to any vertex in $\{c_1,\ldots ,c_{t'}\}$.
\end{proof}
Note that the only minimal pG-$2$-atom is $P_2$.
The minimal pG-$3$-atoms are $C_3$, $P_4$ and $P_2\cup P_3 $.
These graphs are illustrated in Figure \ref{pgatom}.
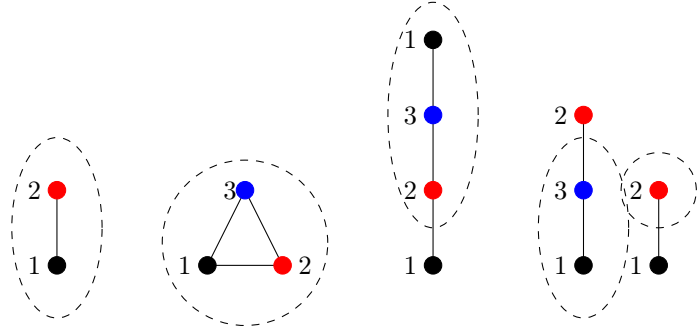
\begin{figure}[t]
\begin{center}
\begin{tikzpicture}
\draw (0,0) -- (0,1);
\node at (0,0) [circle,draw=black,fill=black, scale=0.7] {};
\node at (0,1) [circle,draw=red,fill=red, scale=0.7] {};
\node at (-0.3,0) {1};
\node at (-0.3,1) {2};
\draw[dashed] (0,0.5) ellipse (0.6cm and 1.2cm);

\draw (2,0) -- (3,0);
\draw (2,0) -- (2.5,1);
\draw (3,0) -- (2.5,1);
\node at (2,0) [circle,draw=black,fill=black, scale=0.7] {};
\node at (3,0) [circle,draw=red,fill=red, scale=0.7] {};
\node at (2.5,1) [circle,draw=blue,fill=blue, scale=0.7] {};
\node at (1.7,0) {1};
\node at (3.3,0) {2};
\node at (2.3,1) {3};
\draw[dashed] (2.5,0.3) circle (1.1cm);

\draw (5,0) -- (5,3);
\node at (5,0) [circle,draw=black,fill=black, scale=0.7] {};
\node at (5,1) [circle,draw=red,fill=red, scale=0.7] {};
\node at (5,2) [circle,draw=blue,fill=blue, scale=0.7] {};
\node at (5,3) [circle,draw=black,fill=black, scale=0.7] {};
\node at (4.7,0) {1};
\node at (4.7,1) {2};
\node at (4.7,2) {3};
\node at (4.7,3) {1};
\draw[dashed] (5,2) ellipse (0.6cm and 1.5cm);

\draw (7,0) -- (7,2);
\draw (8,0) -- (8,1);
\node at (7,0) [circle,draw=black,fill=black, scale=0.7] {};
\node at (7,2) [circle,draw=red,fill=red, scale=0.7] {};
\node at (7,1) [circle,draw=blue,fill=blue, scale=0.7] {};
\node at (8,0) [circle,draw=black,fill=black, scale=0.7] {};
\node at (8,1) [circle,draw=red,fill=red, scale=0.7] {};
\node at (6.7,0) {1};
\node at (6.7,1) {3};
\node at (6.7,2) {2};
\node at (7.7,0) {1};
\node at (7.7,1) {2};

\draw[dashed] (7,0.5) ellipse (0.6cm and 1.2cm);
\draw[dashed] (8,1) circle (0.5cm);
\end{tikzpicture}
\end{center}
\caption{The minimal pG-$2$-atom (on the left) and the three minimal pG-$3$-atoms (the numbers are the colors of the vertices and the surrounded vertices form the centers).}
\label{pgatom}
\end{figure}
\begin{theo}\label{psensd}
For a graph $G$, we have $\partial \Gamma(G)\ge t$ if and only if $G$ contains an induced minimal pG-$t$-atom.
\end{theo}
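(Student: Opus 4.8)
The plan is to prove the slightly more flexible equivalence ``$\partial\Gamma(G)\ge t$ if and only if $G$ contains an \emph{any} induced pG-$t$-atom'', and then pass to a minimal one for free. Indeed, if $G$ contains an induced pG-$t$-atom $A$, choose among all induced subgraphs of $A$ that are pG-$t$-atoms one, say $A'$, with the fewest vertices; any pG-$t$-atom properly induced in $A'$ would also be induced in $A$ and have fewer vertices than $A'$, a contradiction, so $A'$ is a minimal pG-$t$-atom induced in $G$. Conversely a minimal pG-$t$-atom is in particular a pG-$t$-atom, so only the version without ``minimal'' needs a real argument.

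For the forward direction, suppose $\partial\Gamma(G)\ge t$, so $G$ admits a partial Grundy $k$-coloring with $k\ge t$ and color classes $V_1,\ldots,V_k$. For each $i\in\{1,\ldots,t\}$ pick a Grundy vertex $c_i\in V_i$; thus $c_i$ has color $i$ and, for every $j<i$, a neighbor of color $j$. I would then set $D_i$ to be $c_i$ together with, for every $j$ with $i<j\le t$, \emph{one} neighbor of $c_j$ of color $i$ (which exists since $c_j$ is a Grundy vertex and $i<j$). Since $D_i\subseteq V_i$, the set $D_i$ is independent and $|D_i|\le 1+(t-i)=t-i+1$; the sets $D_1,\ldots,D_t$ are nonempty and pairwise disjoint because they lie in distinct color classes. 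Finally, for $i\ge 2$ and $l<i$, the chosen color-$l$ neighbor of $c_i$ belongs to $D_l$, so $c_i$ has a neighbor in each of $D_1,\ldots,D_{i-1}$. Hence the subgraph of $G$ induced by $D_1\cup\cdots\cup D_t$ is a pG-$t$-atom with center $\{c_1,\ldots,c_t\}$, and minimalizing as above completes this direction.

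For the converse, assume $G$ contains an induced pG-$t$-atom $A$ with partition $D_1,\ldots,D_t$ and centers $c_1,\ldots,c_t$, and color $A$ by giving color $i$ to every vertex of $D_i$; since each $D_i$ is independent this is a proper coloring of $A$, and each $c_i$ witnesses that color class $i$ contains a Grundy vertex. I would extend this to a coloring $c$ of all of $G$ by first-fit: process the vertices of $V(G)\setminus V(A)$ in an arbitrary order, giving each the smallest positive integer not yet used on its already-colored neighbors. Then $c$ is proper, and whenever a vertex $v$ receives a color $j$ during this process, each of $1,\ldots,j-1$ appears on a (permanently colored) neighbor of $v$; consequently the color set of $c$ is an initial segment $\{1,\ldots,k\}$, and since $A$ already uses $1,\ldots,t$ we get $k\ge t$. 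For $i\le t$ the vertex $c_i$ keeps neighbors of all colors $1,\ldots,i-1$ in $c$ (we never recolor $A$), so it is still a Grundy vertex of color class $i$; for $t<j\le k$ every vertex of color $j$ was colored by first-fit and therefore, by the observation just made, is a Grundy vertex of color class $j$. Hence $c$ is a partial Grundy $k$-coloring and $\partial\Gamma(G)\ge k\ge t$.

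The two steps that need care are the exact bookkeeping in the forward direction — one must add to $c_i$ only a single neighbor per $c_j$ with $j>i$ to keep the bound $|D_i|\le t-i+1$, and note that these sets cannot overlap harmfully since they are monochromatic of distinct colors — and, in the converse, the (lucky) fact that first-fit automatically makes every newly created color class contain a Grundy vertex, which is precisely what allows the extension to spill over beyond $t$ colors without destroying the partial Grundy property. The reduction from an arbitrary induced pG-$t$-atom to a minimal one is purely formal. I expect the forward-direction construction of the $D_i$ (getting the sizes exactly right and verifying the adjacency condition simultaneously) to be the main point to execute carefully.
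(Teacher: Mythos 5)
Your proof is correct and follows essentially the same route as the paper's: extract a pG-$t$-atom from the neighborhoods of $t$ Grundy vertices, pass to a vertex-minimum (hence minimal) one, and for the converse extend the atom's induced coloring greedily by first-fit. Your forward direction merely makes explicit the construction that the paper asserts in one line, and your careful check that first-fit keeps every color class dominated is the same observation the paper leaves implicit.
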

\begin{proof}
Suppose that $\partial \Gamma(G)=t'$ with $t'\ge t$. By definition, there exists a partial Grundy coloring of $G$ with $t'$ colors. Let $u_1$, $\ldots$, $u_{t'}$ be a set of Grundy vertices, each in a different color class of $V(G)$.
The graph induced by $N[u_1] \cup \ldots \cup N[u_{t'}]$ contains a pG-$t'$-atom. Hence, by Lemma \ref{recptat}, since $G$ contains an induced pG-$t'$-atom, then it also contains an induced minimal pG-$t$-atom.

Suppose $G$ contains an induced minimal pG-$t$-atom. Thus, the sets $D_1$, \ldots, $D_t$ induce a partial-Grundy coloring of this pG-$t$-atom.
We can extend this coloring to a partial Grundy coloring of $G$ with at least $t$ colors in a greedy way by coloring the remaining vertices in any order, assigning to each of them the smallest color not used by its neighbors.
\end{proof}

\begin{prop}
Let $G$ be a graph of order $n$ and let $t$ be an integer.
There exists an algorithm in time $O(n^{\frac{t(t+1)}{2}})$ to determine if $\partial \Gamma(G)\ge t$. Hence, the problem pG-COL with parameter $t$ is in XP.
\end{prop}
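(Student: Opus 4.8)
The plan is to read the algorithm directly off Theorem~\ref{psensd} and the observation that every pG-$t$-atom has at most $\tfrac{t(t+1)}{2}$ vertices. First I would record the precise equivalence to be used: $\partial\Gamma(G)\ge t$ if and only if $G$ contains an induced pG-$t$-atom. The ``if'' direction is exactly the second half of the proof of Theorem~\ref{psensd} (a partial Grundy colouring of an induced pG-$t$-atom extends greedily to all of $G$). For the ``only if'' direction, Theorem~\ref{psensd} produces an induced minimal pG-$t$-atom, which is in particular an induced pG-$t$-atom; conversely, by finiteness, any induced pG-$t$-atom of $G$ contains an induced minimal one, so the two formulations coincide.

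Next I would describe the search. Since an induced pG-$t$-atom of $G$ lives on a vertex set of size at most $K:=\tfrac{t(t+1)}{2}$, it suffices to enumerate every subset $S\subseteq V(G)$ with $|S|\le K$ and to test, for each, whether $G[S]$ is a pG-$t$-atom. The number of such subsets is $\sum_{k=0}^{K}\binom{n}{k}\le (K+1)\,n^{K}=O(n^{K})$, the hidden constant depending only on $t$. Testing whether a fixed graph on at most $K$ vertices is a pG-$t$-atom is done by brute force: iterate over all ordered partitions of $S$ into $t$ nonempty classes $D_1,\dots,D_t$ and over all choices of a distinguished vertex $c_i\in D_i$, then check that each $D_i$ is independent with $|D_i|\le t-i+1$ and that, for $i\ge 2$, $c_i$ has a neighbour in each of $D_1,\dots,D_{i-1}$. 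Because $|S|\le K$ is bounded by a function of $t$ alone, this test runs in time bounded by some function $h(t)$ independent of $n$. The algorithm answers ``yes'' precisely when some $S$ passes the test, and correctness is immediate from the equivalence above.

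Putting the pieces together, the total running time is $O\!\big(h(t)\,n^{K}\big)=O\!\big(n^{t(t+1)/2}\big)$ for fixed $t$, which is of the form $O(f(t)\,n^{g(t)})$ with $g(t)=\tfrac{t(t+1)}{2}$, so pG-COL with parameter $t$ is in XP. There is no genuinely hard step here; the only points needing care are (i) justifying that searching for an arbitrary induced pG-$t$-atom is equivalent to the condition of Theorem~\ref{psensd}, which is supplied by the remark of the first paragraph, and (ii) making sure the per-subset membership test is truly constant in $n$, which holds because the candidate subsets have size depending only on $t$.
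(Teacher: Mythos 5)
Your proof is correct and follows the same approach as the paper: reduce via Theorem~\ref{psensd} to searching for an induced (minimal) pG-$t$-atom, and exploit the bound $|V(A)|\le \tfrac{t(t+1)}{2}$ to enumerate candidate vertex subsets in time $O(n^{t(t+1)/2})$. The extra details you supply (the equivalence between finding an arbitrary versus a minimal pG-$t$-atom, and the constant-time membership test per subset) are correct refinements of what the paper leaves implicit.
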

\begin{proof}
By Theorem \ref{psensd}, it suffices to verify that $G$ contains an induced minimal pG-$t$-atom to have $\partial \Gamma(G)\ge t$.
Since the order of a minimal pG-$t$-atom is bounded by $\frac{t(t+1)}{2}$, we obtain an algorithm in time $O(n^{\frac{t(t+1)}{2}})$.
\end{proof}
We finish this section by determining every graph $G$ with $\partial \Gamma(G)=2$.
\begin{prop}
For a graph $G$ without isolated vertices, we have $\partial \Gamma(G)=2$ if and only if $G=K_{n,m}$, for $n\ge 2$ and $m\ge1$ or $G$ only contains isolated edges.
\end{prop}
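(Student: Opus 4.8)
The plan is to translate the statement into a forbidden--induced--subgraph condition by means of Theorem~\ref{psensd}, and then to identify exactly which graphs satisfy that condition.

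Since $G$ has no isolated vertices it has an edge, hence an induced $P_2$, the minimal pG-$2$-atom; so $\partial\Gamma(G)\ge 2$ by Theorem~\ref{psensd}. Consequently $\partial\Gamma(G)=2$ if and only if $\partial\Gamma(G)\not\ge 3$, which by Theorem~\ref{psensd} and the fact recalled above that the minimal pG-$3$-atoms are precisely $C_3$, $P_4$ and $P_2\cup P_3$, is equivalent to $G$ containing no induced $C_3$, no induced $P_4$ and no induced $P_2\cup P_3$. Thus the proposition reduces to showing that a graph without isolated vertices avoids these three graphs as induced subgraphs exactly when it is some $K_{n,m}$ with $n\ge 2$, $m\ge 1$, or a disjoint union of edges.

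For the ``if'' direction I would simply verify both families. Every induced subgraph of a complete bipartite graph is again complete bipartite (allowing an empty side, i.e.\ edgeless), while none of $C_3$, $P_4$, $P_2\cup P_3$ is complete bipartite: $C_3$ contains a triangle, $P_4$ is connected but not complete bipartite, and $P_2\cup P_3$ is disconnected yet has edges. A disjoint union of edges contains no path on three vertices even as a subgraph, so it avoids $C_3$ and $P_2\cup P_3$, and each of its components is too small to contain the connected graph $P_4$. For the ``only if'' direction, assume $G$ has no isolated vertices and no induced $C_3$, $P_4$ or $P_2\cup P_3$. The key step is that every connected component $C$ of $G$ (which has at least two vertices) is complete bipartite: fixing an edge $uv$ of $C$ and putting $A=N(u)$, $B=N(v)$, triangle--freeness makes $A$ and $B$ independent and disjoint with $v\in A$, $u\in B$; then $P_4$--freeness forces $A\cup B=V(C)$, and for any $a\in A$, $b\in B$ one gets $a\sim b$, since otherwise $a,u,v,b$ induce a $P_4$. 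Hence $C=K_{|A|,|B|}$ with both sides nonempty. Finally, for the global structure: if $G$ is connected we are done (a $K_{a,b}$ with $a+b\ge 3$ is of the desired form, and $K_{1,1}$ is a single edge); if $G$ is disconnected and some component has at least three vertices, it contains an induced $P_3$ (two vertices from its larger side plus one from the other), while another component — being nontrivial — provides an edge, so together they induce a $P_2\cup P_3$, a contradiction; therefore every component is a $K_2$ and $G$ is a disjoint union of edges.

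The main obstacle is the middle step, proving each component is complete bipartite, and within it the verification that $A\cup B$ exhausts $V(C)$: here one takes a shortest path from a hypothetical vertex outside $A\cup B$ to $u$ and must extract a genuine \emph{induced} $P_4$ from its last few vertices together with $u$ and $v$, checking distinctness and treating the shortest (length--two) case separately. Everything else is routine.
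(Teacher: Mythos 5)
Your proof is correct, and both it and the paper's proof begin the same way: reduce via Theorem~\ref{psensd} to the statement that $\partial\Gamma(G)=2$ exactly when $G$ has an edge but no induced $C_3$, $P_4$ or $P_3\cup P_2$, and verify the ``if'' direction by inspection of $K_{n,m}$ and disjoint unions of edges. Where you diverge is the ``only if'' direction. The paper does not analyse the forbidden subgraphs structurally at all: it imports Zaker's characterization of graphs with $\Gamma(G)=2$ as disjoint unions of complete bipartite graphs (using implicitly that $\Gamma(G)\le\partial\Gamma(G)$, so $\partial\Gamma(G)=2$ forces $\Gamma(G)=2$), and then only has to observe that two nontrivial components, one of which is not a single edge, yield an induced $P_3\cup P_2$. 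You instead prove the complete-bipartite decomposition from scratch: fixing an edge $uv$ in a component, showing $N(u)$ and $N(v)$ are disjoint independent sets by triangle-freeness, that they exhaust the component and are completely joined by $P_4$-freeness, and then using $P_3\cup P_2$-freeness for the global structure. Your route is more self-contained and elementary (it reproves the relevant special case of Zaker's result rather than citing it), at the cost of the one genuinely fiddly step you correctly identify -- extracting an induced $P_4$ from a vertex outside $N(u)\cup N(v)$; this can be streamlined by picking such a vertex with a neighbour $y\in N(u)\cup N(v)$, say $y\in N(u)$, and checking directly that $x$-$y$-$u$-$v$ is induced, with no case distinction on path length. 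The paper's route is shorter but leans on an external theorem; yours buys independence from it.
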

\begin{proof}
Zaker \cite{ZA2006} has proven that $\Gamma(G)=2$ if and only if $G$ is the disjoint union of copies of some $K_{n,m}$, for $n\ge 1$ and $m\ge1$. Let $n$ and $m$ be positive integers. We can note that a graph containing a copy of $K_{n,m}$, for $n\ge 2$ and $m\ge1$ and a copy of $K_{n,m}$, for $n\ge 1$ and $m\ge1$ contains an induced $P_3\cup P_2$, hence a pG-$3$-atom. Hence, if $\partial \Gamma(G)=2$, then $G=K_{n,m}$, for $n\ge 2$ and $m\ge1$ or $G$ only contains isolated edges.

 Moreover, neither $K_{n,m}$ nor $P_2\cup\ldots\cup P_2$ does contain an induced $C_3$, $P_4$ or $P_3\cup P_2$. Hence, $\partial \Gamma(K_{n,m})=2$.

\end{proof}
\section{b-$t$-atoms: $t$-atoms for b-coloring}
As in the previous section, we start this section with the definition of b-$t$-atoms (the notion of $t$-atom for b-coloring).
\begin{de}
Given an integer $t$, a b-$t$-atom is a graph A whose vertex-set can be partitioned into $t$ sets $D_1$, \ldots, $D_t$, where $D_i$ contains a special vertex $c_i$ for each $i\in\{1,\ldots,t\}$ such that the following holds:
\begin{itemize}
\item For each $i \in \{1,\ldots, t\}$, $D_i$ is an independent set and $|D_i|\le t$;
\item For all $i, j\in \{1,\ldots t\}$, with $i \neq j$, $c_i$ has a neighbor in $D_j$.
\end{itemize}
The set $\{c_1, \ldots, c_t\}$ is called the center of $A$ and denoted by $C(A)$.
\end{de}

Note that the sets $D_1$, $\ldots$, $D_t$ induce a b-coloring of the b-$t$-atom.
Figure \ref{batom} illustrates several b-$t$-atoms (and their induced coloring) obtained using the previous definition.
\begin{obs}\label{taillbatom}
For every b-$t$-atom $G$, we have $|V(G)|\le t^2$.
\end{obs}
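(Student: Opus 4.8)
The plan is to read off the bound directly from the definition by adding up the sizes of the classes $D_1,\ldots,D_t$, exactly as was done for pG-$t$-atoms in the earlier observation. By definition, the vertex set of a b-$t$-atom $G$ is partitioned into the $t$ sets $D_1,\ldots,D_t$, and each of them satisfies $|D_i|\le t$. Since the $D_i$ form a partition of $V(G)$, every vertex is counted exactly once in the sum $\sum_{i=1}^{t}|D_i|$, so
\[
|V(G)|=\sum_{i=1}^{t}|D_i|\le\sum_{i=1}^{t}t=t^2.
\]

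In contrast with the pG-$t$-atom case, where the sharper constraint $|D_i|\le t-i+1$ yields the triangular bound $\frac{t(t+1)}{2}$, here it is the uniform bound $t$ on each class size that produces $t^2$. There is no genuine obstacle: both the inequality $|D_i|\le t$ and the fact that the $D_i$ partition $V(G)$ are explicit parts of the definition of a b-$t$-atom, so the counting argument above is complete and nothing further is required. This bound is what will later make the search for an induced minimal b-$t$-atom (and hence the XP algorithm for b-r-COL) run in time $O(n^{t^2})$.
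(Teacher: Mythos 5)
Your proof is correct and matches the intended argument: the paper states this as an immediate observation from the definition, since the $D_i$ partition $V(G)$ and each satisfies $|D_i|\le t$, giving $|V(G)|=\sum_{i=1}^{t}|D_i|\le t^2$. Nothing further is needed.
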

\begin{lem}\label{recbtat}
Let $t$ and $t'$ be two integers such that $1\le t'<t$. Every b-$t$-atom contains a b-$t'$-atom as induced subgraph.
\end{lem}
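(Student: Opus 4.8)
The plan is to follow the same idea as the proof of Lemma~\ref{recptat}: pass to the induced subgraph supported on the first $t'$ color classes. The extra difficulty, compared with the partial-Grundy case, is that in a b-$t$-atom each set $D_i$ may contain up to $t$ vertices, which is larger than the bound $t'$ required for a b-$t'$-atom; so merely keeping $D_1,\ldots,D_{t'}$ in full does not produce a b-$t'$-atom, and each of these classes has to be pruned while still retaining enough vertices to witness all the b-vertex conditions.

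Concretely, let $A$ be a b-$t$-atom with vertex partition $D_1,\ldots,D_t$ and center $c_1,\ldots,c_t$. For every ordered pair $(i,j)$ with $i,j\in\{1,\ldots,t'\}$ and $i\neq j$, the definition of a b-$t$-atom guarantees that $c_i$ has a neighbor in $D_j$; fix one such neighbor and call it $v_{i,j}\in D_j$. For each $j\in\{1,\ldots,t'\}$ set
\[
D'_j=\{c_j\}\cup\{\,v_{i,j}\ :\ i\in\{1,\ldots,t'\}\setminus\{j\}\,\},
\]
and let $A'$ be the subgraph of $A$ induced by $D'_1\cup\cdots\cup D'_{t'}$.

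It then remains to verify that $A'$ is a b-$t'$-atom with center $c_1,\ldots,c_{t'}$. The sets $D'_1,\ldots,D'_{t'}$ are pairwise disjoint since $D'_j\subseteq D_j$, and their union is $V(A')$ by construction, so they partition $V(A')$. Each $D'_j$ is independent because it is contained in the independent set $D_j$ and taking an induced subgraph creates no new edges. The size bound holds because $D'_j$ consists of $c_j$ together with at most $t'-1$ further vertices, so $|D'_j|\le t'$. Finally, for distinct $i,j\in\{1,\ldots,t'\}$ the vertex $v_{i,j}$ belongs to $D'_j\subseteq V(A')$ and is adjacent to $c_i$ in $A$, hence also in the induced subgraph $A'$; thus $c_i$ has a neighbor in $D'_j$ for every $j\neq i$. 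Therefore $A'$ is a b-$t'$-atom, and since it is an induced subgraph of $A$ the lemma follows.

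The routine parts (independence of the classes and the adjacency requirements) come for free from $A$; the one point that needs care is balancing the pruning, namely keeping exactly the witness set $\{c_j\}\cup\{v_{i,j}:i\neq j\}$ in each class, which is simultaneously large enough to preserve every ``$c_i$ has a neighbor in $D'_j$'' condition and small enough to meet the $|D'_j|\le t'$ constraint. If one instead wanted a proof by induction on $t$, a single application of this construction from $t$ to $t-1$ (iterated) would work equally well, but the direct construction above is shorter.
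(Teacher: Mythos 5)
Your proof is correct and follows the same basic strategy as the paper's: discard the classes $D_k$ with $k>t'$ and prune what remains of $D_1,\ldots,D_{t'}$. The difference lies in how the pruning is done. The paper keeps every surviving vertex that is adjacent to some vertex of $\{c_1,\ldots,c_{t'}\}$, whereas you keep, in each $D_j$, only $c_j$ together with one chosen witness $v_{i,j}$ for each $i\neq i$ (one per center $c_i$ with $i\le t'$, $i\neq j$). Your version is actually the more careful one: the paper's pruning rule does not by itself guarantee the cardinality constraint $|D'_j|\le t'$, since many vertices of $D_j$ could all be adjacent to the same $c_i$ and would all be retained, while your explicit witness selection gives $|D'_j|\le 1+(t'-1)=t'$ immediately. (It also sidesteps a small ambiguity in the paper's phrasing, namely whether $c_j$ itself survives when it happens to have no neighbor among the other centers.) So this is the intended argument, carried out with the bookkeeping made explicit.
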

\begin{proof}
Every b-$t$-atom $G$ contains a b-$t'$-atom $G'$: we can obtain $G'$ by removing every vertex in $D_k$, for $t'<k\le t$, and by removing, afterwards, the vertices not adjacent to any vertex in $\{c_1,\ldots ,c_{t'}\}$.
\end{proof}
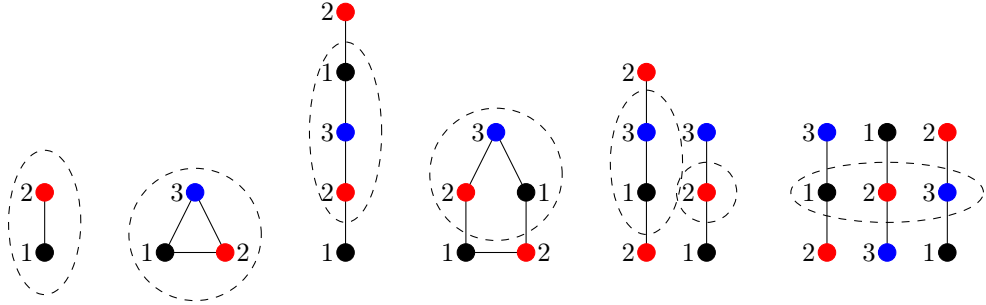
\begin{figure}[t]
\begin{center}
\begin{tikzpicture}[scale=0.8]
\draw (0,0) -- (0,1);
\node at (0,0) [circle,draw=black,fill=black, scale=0.7] {};
\node at (0,1) [circle,draw=red,fill=red, scale=0.7] {};
\node at (-0.3,0) {1};
\node at (-0.3,1) {2};
\draw[dashed] (0,0.5) ellipse (0.6cm and 1.2cm);

\draw (2,0) -- (3,0);
\draw (2,0) -- (2.5,1);
\draw (3,0) -- (2.5,1);
\node at (2,0) [circle,draw=black,fill=black, scale=0.7] {};
\node at (3,0) [circle,draw=red,fill=red, scale=0.7] {};
\node at (2.5,1) [circle,draw=blue,fill=blue, scale=0.7] {};
\node at (1.7,0) {1};
\node at (3.3,0) {2};
\node at (2.2,1) {3};
\draw[dashed] (2.5,0.3) circle (1.1cm);

\draw (5,0) -- (5,4);
\node at (5,0) [circle,draw=black,fill=black, scale=0.7] {};
\node at (5,1) [circle,draw=red,fill=red, scale=0.7] {};
\node at (5,2) [circle,draw=blue,fill=blue, scale=0.7] {};
\node at (5,3) [circle,draw=black,fill=black, scale=0.7] {};
\node at (5,4) [circle,draw=red,fill=red, scale=0.7] {};
\node at (4.7,0) {1};
\node at (4.7,1) {2};
\node at (4.7,2) {3};
\node at (4.7,3) {1};
\node at (4.7,4) {2};
\draw[dashed] (5,2) ellipse (0.6cm and 1.5cm);

\draw (7,0) -- (7,1);
\draw (8,0) -- (8,1);
\draw (7,0) -- (8,0);
\draw (7,1) -- (7.5,2);
\draw (8,1) -- (7.5,2);
\node at (7,0) [circle,draw=black,fill=black, scale=0.7] {};
\node at (8,0) [circle,draw=red,fill=red, scale=0.7] {};
\node at (7.5,2) [circle,draw=blue,fill=blue, scale=0.7] {};
\node at (7,1) [circle,draw=red,fill=red, scale=0.7] {};
\node at (8,1) [circle,draw=black,fill=black, scale=0.7] {};
\node at (6.7,0) {1};
\node at (6.7,1) {2};
\node at (7.2,2) {3};
\node at (8.3,0) {2};
\node at (8.3,1) {1};
\draw[dashed] (7.5,1.3) circle (1.1cm);

\draw (10,0) -- (10,3);
\draw (11,0) -- (11,2);
\node at (10,0) [circle,draw=red,fill=red, scale=0.7] {};
\node at (10,2) [circle,draw=blue,fill=blue, scale=0.7] {};
\node at (10,1) [circle,draw=black,fill=black, scale=0.7] {};
\node at (10,3) [circle,draw=red,fill=red, scale=0.7] {};
\node at (11,0) [circle,draw=black,fill=black, scale=0.7] {};
\node at (11,1) [circle,draw=red,fill=red, scale=0.7] {};
\node at (11,2) [circle,draw=blue,fill=blue, scale=0.7] {};
\node at (9.7,0) {2};
\node at (9.7,1) {1};
\node at (9.7,2) {3};
\node at (9.7,3) {2};
\node at (10.7,0) {1};
\node at (10.7,1) {2};
\node at (10.7,2) {3};

\draw[dashed] (10,1.5) ellipse (0.6cm and 1.2cm);
\draw[dashed] (11,1) circle (0.5cm);

\draw (13,0) -- (13,2);
\draw (14,0) -- (14,2);
\draw (15,0) -- (15,2);
\node at (13,0) [circle,draw=red,fill=red, scale=0.7] {};
\node at (13,2) [circle,draw=blue,fill=blue, scale=0.7] {};
\node at (13,1) [circle,draw=black,fill=black, scale=0.7] {};
\node at (14,1) [circle,draw=red,fill=red, scale=0.7] {};
\node at (14,0) [circle,draw=blue,fill=blue, scale=0.7] {};
\node at (14,2) [circle,draw=black,fill=black, scale=0.7] {};
\node at (15,2) [circle,draw=red,fill=red, scale=0.7] {};
\node at (15,1) [circle,draw=blue,fill=blue, scale=0.7] {};
\node at (15,0) [circle,draw=black,fill=black, scale=0.7] {};
\node at (12.7,0) {2};
\node at (12.7,1) {1};
\node at (12.7,2) {3};
\node at (13.7,2) {1};
\node at (13.7,1) {2};
\node at (13.7,0) {3};
\node at (14.7,0) {1};
\node at (14.7,1) {3};
\node at (14.7,2) {2};
\draw[dashed] (14,1) ellipse (1.6cm and 0.5cm);
\end{tikzpicture}
\end{center}
\caption{The minimal b-$2$-atom (on the left) and the five minimal b-$3$-atoms.}
\label{batom}
\end{figure}

Note that the only minimal b-$2$-atom is $P_2$.
The minimal b-$3$-atoms are $C_3$, $P_5$, $C_5$, $P_3\cup P_4 $ and $P_3\cup P_3 \cup P_3$.
These graphs are illustrated in Figure \ref{batom}.
\begin{obs}
Every minimal pG-$t$-atom is an induced subgraph of a minimal b-$t$-atom or a minimal $t$-atom (an atom for the Grundy number).
\end{obs}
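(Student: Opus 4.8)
The plan is to separate an easy case from the real work. Throughout, let $A$ be a minimal pG-$t$-atom. The one fact I would use repeatedly is an immediate consequence of minimality: every proper induced subgraph $H\subsetneq A$ has $\partial\Gamma(H)\le t-1$, since otherwise Theorem \ref{psensd} would produce an induced minimal pG-$t$-atom inside $H$, hence a pG-$t$-atom strictly induced in $A$, a contradiction.

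First I would dispose of the case where $A$ contains an induced $t$-atom or an induced b-$t$-atom. Such an atom $H$ carries a Grundy (resp.\ a b-) $t$-colouring, which is in particular a partial Grundy $t$-colouring, so $\partial\Gamma(H)\ge t$; by the fact above, $H$ cannot be a proper induced subgraph of $A$, so $H=A$, i.e.\ $A$ is itself a $t$-atom (resp.\ a b-$t$-atom). The same fact then shows that $A$ is minimal in its family (a strictly smaller atom of that type would again have $\partial\Gamma\ge t$). So in this case we are done, and from now on I would assume $A$ contains no induced $t$-atom and no induced b-$t$-atom.

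In the remaining case I would build a b-$t$-atom $B$ containing $A$ as an induced subgraph. Take the partition $D_1,\dots,D_t$ and centre $c_1,\dots,c_t$ of $A$; for every pair $i<j$ such that $c_i$ has no neighbour in $D_j$, add a new vertex to the class $D_j$ joined only to $c_i$. In the resulting graph $B$ each class is still independent, each $c_i$ now has a neighbour in every other class (it already had one in $D_1,\dots,D_{i-1}$), so each $c_i$ is a b-vertex, and $|D_j|$ grows by at most $j-1$, hence stays at most $(t-j+1)+(j-1)=t$; thus $B$ is a b-$t$-atom, and $A=B[V(A)]$ is induced in $B$. I would also note that $B$ is already structurally minimal: each added vertex is the unique neighbour in its class of exactly one centre, and no centre that already had a neighbour in some $D_j$ received a new one, so every non-central vertex of $A$ remains the unique class-neighbour of some centre as well.

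Finally I would pass from $B$ to a genuinely minimal b-$t$-atom that still contains $A$: among all induced subgraphs of $B$ that are b-$t$-atoms and contain $A$, take a smallest one $B_0$ (the graph $B$ qualifies), and argue that $B_0$ is a minimal b-$t$-atom, which finishes the proof since $A\subseteq B_0$. If $B_0$ were not minimal, a proper induced b-$t$-atom $B_1\subsetneq B_0\subseteq B$ would, by the choice of $B_0$, omit a vertex of $A$; its b-colouring being a partial Grundy colouring, Theorem \ref{psensd} gives an induced minimal pG-$t$-atom $A_1\subseteq B_1$, necessarily not induced in $A$, hence using one of the added degree-one vertices. \textbf{The crux, and the step I expect to be the only delicate point, is to rule this out.} A degree-one vertex can sit in a minimal pG-$t$-atom only as the centre of class $1$ or of class $2$, or as the unique class-neighbour of a later centre; in each case its single neighbour must be a centre $c_i$ of $A$ playing a prescribed role, and since in $A$ the centre $c_i$ has neighbours in all of $D_1,\dots,D_{i-1}$, one can transfer $A_1$ into $A$, replacing each used degree-one vertex by a genuine neighbour of the corresponding centre, to obtain a pG-$t$-atom strictly induced in $A$ — the desired contradiction. (The alternative ``minimal $t$-atom'' in the statement is needed only for the first case, where $A$ itself may turn out to be a $t$-atom rather than a b-$t$-atom.)
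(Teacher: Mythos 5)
The paper states this as an \emph{Observation} and offers no proof, so there is nothing to compare against; I can only assess your argument on its own terms. Your first case (if $A$ contains an induced $t$-atom or b-$t$-atom, then $A$ equals it and is minimal in that family) is correct, and your construction of $B$ — one pendant vertex added to $D_j$ for each centre $c_i$, $i<j$, lacking a neighbour there, giving independence, $|D_j|\le(t-j+1)+(j-1)=t$, and all centre adjacencies — is clean and is surely the intended core of the observation. The problem is exactly where you flag it, and your sketch does not close the gap.

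Two concrete holes in the transfer step. First, the existence of a valid replacement is not established: to swap a pendant $w\in D'_k$ of $A_1$ (with unique neighbour $c_i$) for a vertex of $A$, you need some $v\in N_A(c_i)$ that is \emph{not already in} $V(A_1)$ and has no neighbour in $D'_k\setminus\{w\}$ (else the class loses independence). Nothing you say guarantees such a $v$: all of $c_i$'s neighbours in $A$ (there may be only $i-1$ of them) can already sit in $V(A_1)$ in other classes, and a free candidate can be adjacent into $D'_k$. Several pendants must moreover be replaced simultaneously without the substitutes colliding or becoming adjacent within a common class, and the cases $w=c'_1$ or $w=c'_2$ need their own treatment. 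Second, even granting the transfer, your claimed contradiction does not follow: the transferred graph $A_1'$ has $|V(A_1')|=|V(A_1)|$, which is not shown to be smaller than $|V(A)|$, so minimality of $A$ only yields $A_1'=A$ or a contradiction — and $A_1'=A$ is entirely consistent with everything assumed (e.g.\ $A_1$ could be $A$ with some centre-neighbours traded for pendants). So the assertion that $B_0$ is a minimal b-$t$-atom remains unproven. To repair the argument you would need either a direct structural proof that some induced b-$t$-atom of $B$ containing $A$ is minimal, or a substitution lemma proved with the independence and injectivity requirements made explicit, together with a genuine reason why the resulting pG-$t$-atom is a \emph{proper} induced subgraph of $A$.
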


\begin{prop}\label{bsensd}
Let $G$ be a graph.
If $\varphi(G)\ge t$, then $G$ contains an induced minimal b-$t$-atom.
\end{prop}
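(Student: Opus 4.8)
The plan is to start from a b-coloring of $G$ with the maximum number $k=\varphi(G)\ge t$ of colors and to carve out of it an explicit b-$t$-atom sitting inside $G$ as an induced subgraph; a minimal one is then obtained by a finiteness argument. Concretely, I would fix a b-$k$-coloring $c$ of $G$ and, after relabeling, restrict attention to the colors $1,\dots,t$. For each $i\in\{1,\dots,t\}$ choose a b-vertex $c_i$ of color $i$, and for each ordered pair $i\neq j$ in $\{1,\dots,t\}$ choose a neighbor $v_{i,j}$ of $c_i$ with $c(v_{i,j})=j$; such a vertex exists because $c_i$, being a b-vertex, is adjacent to a vertex of every color other than its own, in particular of color $j$. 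Note that here it is important to work with a $\varphi(G)$-coloring rather than to assume a b-$t$-coloring of $G$ exists, since the b-chromatic number need not be ``interpolating''.

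Next I would set $D_j=\{c_j\}\cup\{\,v_{i,j}\ :\ i\in\{1,\dots,t\}\setminus\{j\}\,\}$ for each $j\in\{1,\dots,t\}$ and let $A=G\big[\bigcup_{j=1}^t D_j\big]$. The sets $D_1,\dots,D_t$ are pairwise disjoint, since $D_j$ lies inside color class $j$; each $D_j$ is independent, again because $c$ is proper and $D_j$ is contained in a single color class; and $|D_j|\le 1+(t-1)=t$. Finally, for $i\neq j$ the vertex $c_i$ has the neighbor $v_{i,j}\in D_j$. Hence the partition $D_1,\dots,D_t$ witnesses that $A$ is a b-$t$-atom with center $\{c_1,\dots,c_t\}$. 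One subtle point to spell out is that the induced subgraph $A$ may carry extra edges between distinct $D_i$ and $D_j$, but this is harmless: the definition of a b-$t$-atom only forbids edges inside a single $D_i$, which cannot occur here.

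It then remains to pass from the b-$t$-atom $A\subseteq G$ to a minimal one. By Observation \ref{taillbatom} every b-$t$-atom has at most $t^2$ vertices, so among the induced subgraphs of $A$ that are themselves b-$t$-atoms (a finite, nonempty set, as it contains $A$) one may pick one, say $A'$, with the fewest vertices. By minimality of the vertex count, no b-$t$-atom is a proper induced subgraph of $A'$, so $A'$ is a minimal b-$t$-atom, and $A'$ is an induced subgraph of $A$, hence of $G$, which is what we want.

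I do not anticipate a genuine obstacle in this argument; the two places that merely require care are the choice of coloring just mentioned and the observation that taking the induced subgraph on $\bigcup_j D_j$ does not spoil the b-$t$-atom structure despite the additional edges it may contain.
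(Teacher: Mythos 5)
Your proof is correct and follows essentially the same route as the paper: fix a maximum b-coloring, take one b-vertex per color class together with representative neighbors of each other color, and observe that the induced subgraph on these vertices is a b-$t$-atom, then pass to a minimum-order one. The only cosmetic difference is that you truncate to colors $1,\dots,t$ directly, whereas the paper first extracts a b-$\varphi(G)$-atom and then descends to a b-$t$-atom via Lemma~\ref{recbtat}; the underlying construction is the same.
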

\begin{proof}
Suppose that $\varphi(G)=t'$, with $t'\ge t$. Thus, there exists a b-coloring of $G$ with $t'$ colors. Let $u_1$, $\ldots$, $u_{t' }$ be a set of b-vertices, each in a different color class of $V(G)$.
The graph induced by $N[u_1] \cup \ldots \cup N[u_{t'}]$ contains a b-$t'$-atom. Hence, by Lemma \ref{recbtat}, since $G$ contains an induced b-$t'$-atom, then it also contains an induced minimal b-$t$-atom.
\end{proof}
\begin{theo}\label{bsensd2}
For a graph $G$, we have $\varphi_r (G)\ge t$ if and only if $G$ contains an induced minimal b-$t$-atom.
\end{theo}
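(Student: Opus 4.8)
The plan is to prove the two implications separately. For the forward direction I would invoke Proposition~\ref{bsensd}; for the converse I would read off the required b-coloring directly from the structure of the b-$t$-atom.

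First, assume $\varphi_r(G)\ge t$. By the definition $\varphi_r(G)=\max_{H\subseteq G}\varphi(H)$, there is an induced subgraph $H$ of $G$ with $\varphi(H)\ge t$. Applying Proposition~\ref{bsensd} to $H$ produces an induced minimal b-$t$-atom $A$ of $H$, and since an induced subgraph of an induced subgraph is again an induced subgraph, $A$ is induced in $G$. For the converse, suppose $G$ contains an induced minimal b-$t$-atom $A$ with vertex partition $D_1,\ldots,D_t$ and center $c_1,\ldots,c_t$. Coloring every vertex of $D_i$ with color $i$ yields a proper $t$-coloring of $A$ (each $D_i$ is independent), and since $c_i$ has a neighbor in every $D_j$ with $j\neq i$, the vertex $c_i$ is a b-vertex of color class $i$; hence every color class contains a b-vertex and $\varphi(A)\ge t$. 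As $A$ is an induced subgraph of $G$, we get $\varphi_r(G)\ge\varphi(A)\ge t$.

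I do not expect a genuine obstacle here: the theorem is essentially a repackaging of Proposition~\ref{bsensd} together with the already-noted remark that the classes $D_1,\ldots,D_t$ of a b-$t$-atom form a b-$t$-coloring of it. The one conceptual point worth stressing is why this becomes an equivalence for $\varphi_r$ but only an implication for $\varphi$: the relaxed parameter is monotone under taking induced subgraphs, whereas $\varphi$ is not (witness $K^{-}_{n,n}$), so a statement about the presence of an induced subgraph can characterize $\varphi_r$ but cannot characterize $\varphi$ itself.
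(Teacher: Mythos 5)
Your proof is correct and follows essentially the same route as the paper: the forward direction reduces to Proposition~\ref{bsensd} applied to the induced subgraph realizing $\varphi_r$, and the converse reads off the b-$t$-coloring from the partition $D_1,\ldots,D_t$ (a fact the paper records as a remark just after the definition of b-$t$-atoms). Your write-up merely makes both of these steps more explicit than the paper does.
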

\begin{proof}
Suppose that the graph $G$ contains an induced b-$t$-atom $A$. Since $A$ admits, by definition, a b-$t$-coloring, we have $\varphi_r (G)\ge t$. Using Proposition \ref{bsensd}, we obtain the converse.
\end{proof}
\begin{de}
Let $G$ be a graph.
For an induced subgraph $A$ of $G$, let $N(A)=\{v\in V(G) \setminus V(A) |\ uv \in E(G),\ u\in V(A)\}$.
A b-$t$-atom $A$ is \emph{feasible} in $G$ if there exists a b-$t$-coloring of $V(A)$ that can be extended to the vertices of $N(A)$ without using new colors.
\end{de}
\begin{prop}\label{bsensi}
Let $G$ be a graph.
If $G$ contains an induced feasible minimal b-$t$-atom and no induced feasible minimal b-$t'$-atom, for $t'>t$, then $\varphi(G)=t$.
\end{prop}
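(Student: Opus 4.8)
The plan is to prove the two inequalities $\varphi(G)\le t$ and $\varphi(G)\ge t$; each uses one of the hypotheses, together with one auxiliary lemma.

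\emph{Upper bound.} By contraposition, assume $\varphi(G)=t'\ge t$ and fix a b-$t'$-coloring $\phi$ of $G$ with b-vertices $u_1,\ldots,u_{t'}$, one per color class. As in the proof of Proposition~\ref{bsensd}, the vertices $u_1,\ldots,u_{t'}$ together with, for every $i\ne j$, one $\phi$-neighbor $n_{ij}$ of $u_i$ of color $j$, induce a b-$t'$-atom $B$ whose color classes under $\phi$ are exactly the sets $D_j=\{u_j\}\cup\{n_{ij}:i\ne j\}$ (so $|D_j|\le t'$, in line with Observation~\ref{taillbatom}); thus $\phi|_{V(B)}$ is a b-$t'$-coloring of $B$, and, $\phi$ being defined on all of $G$, it extends $\phi|_{V(B)}$ to $N(B)$ without new colors, hence $B$ is feasible. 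By the auxiliary lemma below, $G$ then contains a feasible \emph{minimal} b-$t'$-atom, so the second hypothesis forces $t'=t$.

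\emph{Lower bound.} Let $A$ be a feasible minimal b-$t$-atom of $G$, with center $\{c_1,\ldots,c_t\}$, and fix a b-$t$-coloring $c$ of $V(A)$ that extends to a proper $t$-coloring $c'$ of $G[V(A)\cup N(A)]$. Each $c_i$ is a b-vertex of $c'$, and enlarging a coloring without introducing new colors cannot destroy a b-vertex, so it suffices to extend $c'$ to a proper $t$-coloring of all of $G$, which will automatically be a b-$t$-coloring. I would do this greedily, assigning the vertices of $V(G)\setminus(V(A)\cup N(A))$, one at a time, the least color absent from their neighborhood, the point being that color $t+1$ is never needed. If a vertex $w$ were forced onto color $t+1$, then $w$ has no neighbor in $V(A)$ but has an already-colored neighbor of each color $1,\ldots,t$; I would argue that $w$, these neighbors and $A$ --- after a suitable local recoloring of the neighbors so that every color class carries a b-vertex --- span an induced subgraph of $G$ of b-chromatic number at least $t+1$, from which, feeding it into the upper-bound construction, one extracts a feasible minimal b-$t'$-atom with $t'>t$, contradicting the second hypothesis. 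Hence $\varphi(G)\ge t$, and combining the two bounds gives $\varphi(G)=t$.

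\emph{The auxiliary lemma and the main obstacle.} Both parts rest on the statement that \emph{if $G$ contains a feasible b-$k$-atom then it contains a feasible minimal one}. I would prove this by shrinking a feasible b-$k$-atom $H$, with witnessing coloring $\beta$ of $V(H)\cup N(H)$, for as long as possible: delete any vertex $v\in V(H)$ such that $H-v$ still admits a b-$k$-coloring extending to $N(H-v)$ without new colors --- for instance any $v$ that is neither a b-vertex of $\beta$ nor the only neighbor, within its own color class, of some b-vertex of $\beta$ --- and iterate. When no deletion is possible, $H$ is a b-$k$-atom every vertex of which is needed, and one wants to conclude it is minimal. This last inference is the crux, and where I expect the genuine difficulty: feasibility is recorded by a \emph{fixed} coloring of $V(H)\cup N(H)$, whereas a b-$k$-atom lying properly inside $H$ may use entirely different color classes, so one must show that extendability transfers to the sub-atom (equivalently, that some b-$k$-coloring of it extends to its own neighborhood in $G$). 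The recoloring step in the lower bound raises exactly the same issue --- one must check that the b-$t'$-atom produced near $w$ is feasible in $G$, not merely in the subgraph on which it was built --- and exploiting the minimality of $A$ to control which colors are forced on the vertices of $N(A)$ is where I would look for the missing leverage.
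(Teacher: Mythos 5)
Your overall strategy coincides with the paper's. The lower bound is obtained exactly as in the paper: start from the coloring of $V(A)\cup N(A)$ guaranteed by feasibility, extend it greedily, and at a vertex that cannot receive any of the colors $1,\ldots,t$ either recolor its neighborhood to free a color or extract, from the vertices at distance at most $2$, a b-$t'$-atom with $t'>t$ whose feasibility is witnessed by a coloring of the whole graph. The upper bound is the construction of Proposition~\ref{bsensd} applied to a b-$\varphi(G)$-coloring; the paper leaves this direction implicit in the proof of Proposition~\ref{bsensi} (it is essentially Proposition~\ref{bsensd1}), but there is no methodological divergence.

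The one step you could not close --- descending from a feasible b-$k$-atom to a feasible \emph{minimal} b-$k$-atom --- is indeed the crux, and you have diagnosed the obstruction correctly: if $A'$ is a minimal b-$k$-atom induced in a feasible atom $H$, the restriction of the witnessing coloring to $V(A')$ remains proper and still extends to $N(A')$ (since $V(A')\cup N(A')\subseteq V(H)\cup N(H)$), but it need not be a b-coloring of $A'$ because the b-vertices of some color classes may not survive the deletion; conversely the canonical partition $D'_1,\ldots,D'_k$ of $A'$ is a b-$k$-coloring but carries no extension guarantee. You should know, however, that the paper does not supply this step either: its proof ends by declaring that the extracted feasible b-$(t'+1)$-atom contradicts the hypothesis, although the hypothesis only excludes feasible \emph{minimal} b-$t'$-atoms, and the same silent reduction is used in Proposition~\ref{bsensd1}. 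So your proposal reproduces the paper's argument together with its unproved lemma; to make either version airtight one must prove that a graph admitting a b-$k$-coloring contains a feasible minimal b-$k$-atom (for instance by a careful shrinking argument that re-chooses the witnessing coloring at each deletion), or else restate the hypotheses of the proposition in terms of feasible, not necessarily minimal, atoms.
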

\begin{proof}
Suppose that $G$ contains an induced feasible minimal b-$t$-atom $A$ and no b-$t$-coloring of $G$ exists.
We begin by considering that the vertices of $A\cup N(A)$ are already colored with $t$ colors.
We can note that, by assumption, no coloring of $A\cup N(A)$ (from the definition) can be extended to the whole graph using only $t$ colors.
Let $t'$ be the largest integer such that the coloring can not be extended to a b-$t'$-coloring of the whole graph and let $v$ be a vertex that can not be given a color among $\{1,\ldots, t'\}$. Thus, we suppose that the coloring can be extended to a b-$(t'+1)$-coloring where $v$ is colored by $t'+1$.
Since $A\cup N(A)$ is already colored, we have $v\in V(G)\setminus (A\cup N(A))$.
The vertex $v$ should be adjacent to vertices of every color, otherwise it could be colored.
One vertex of each color class in $N(v)$ should be adjacent to vertices of each color class (except its color).
Otherwise, the colors of the vertices of $N(v)$ could be changed in order that some color $c$ no longer appear in $N(v)$, and consequently $v$ can be recolored with color $c$. Then, the graph induced by the vertices at distance at most $2$ from $v$ contains a b-$(t'+1)$-atom where $N[v]$ contains the center of this b-$(t'+1)$-atom.
Moreover, this b-$(t'+1)$-atom is feasible as the whole graph is $b$-$(t'+1)$-colorable, contradicting the hypothesis.
\end{proof}
\begin{prop}\label{bsensd1}
Let $G$ be a graph.
If $\varphi(G)= t$, then $G$ contains an induced feasible minimal b-$t$-atom and no induced feasible minimal b-$t'$-atom, for $t'>t$.
\end{prop}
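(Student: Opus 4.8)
The plan is to derive both halves of the statement from a fixed $b$-$t$-coloring $c$ of $G$ (which exists because $\varphi(G)=t$), together with Proposition~\ref{bsensi} used as a black box.

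For the half ``$G$ contains no induced feasible minimal $b$-$t'$-atom with $t'>t$'' I would argue by contradiction. Suppose such an atom exists for some $t'>t$. A $b$-$k$-atom has at least $k$ vertices (its centers receive pairwise distinct colors), so among all $k$ for which $G$ contains an induced feasible minimal $b$-$k$-atom there is a largest one, $s$, and $s\ge t'>t$. Then $G$ contains an induced feasible minimal $b$-$s$-atom and, by maximality of $s$, no induced feasible minimal $b$-$s''$-atom with $s''>s$, so Proposition~\ref{bsensi} gives $\varphi(G)=s$, contradicting $s>t=\varphi(G)$. This half uses nothing about $c$ nor about the internal structure of atoms.

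For the half ``$G$ contains an induced feasible minimal $b$-$t$-atom'' I would proceed constructively. Let $\mathcal{H}$ be the (nonempty, since $G\in\mathcal{H}$) family of induced subgraphs $H$ of $G$ such that $c|_H$ is a $b$-$t$-coloring of $H$, and let $M$ be a member of $\mathcal{H}$ of smallest order. I would first check that $M$ is a $b$-$t$-atom: taking as $D_1,\ldots,D_t$ the color classes of $c|_M$ and as $c_i$ a $b$-vertex of $M$ in class $i$, the $D_i$ are independent and each $c_i$ has a neighbor in every other $D_j$, so the only remaining condition is $|D_i|\le t$ (cf.\ Observation~\ref{taillbatom}). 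If some $|D_k|$ exceeded $t$, then since at most $t$ vertices of $D_k$ are really needed (namely $c_k$, together with, for each $i\neq k$, the unique $D_k$-neighbor of $c_i$ when $c_i$ has only one), some $v\in D_k$ could be deleted with $c|_{M-v}$ still a $b$-$t$-coloring of $M-v$, contradicting the choice of $M$. Moreover $M$ is feasible, because $c|_M$ is a $b$-$t$-coloring of $M$ and $c$ extends it to $N(M)$ without new colors.

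The step I expect to be the real obstacle is showing that this $M$ is a \emph{minimal} $b$-$t$-atom, not merely of minimum order inside $\mathcal{H}$: a proper induced $b$-$t$-atom $M''\subsetneq M$ need not lie in $\mathcal{H}$, because the canonical coloring of $M''$ may disagree with the colors $c$ assigns to it, so its presence does not by itself contradict the minimality of $M$. I would try to remove this difficulty with a structural argument on (minimal) $b$-$t$-atoms: show that from any $b$-$t$-atom $M''\subsetneq M$ one can re-select centers and witnessing neighbors so as to land on an even smaller induced subgraph on which $c$ \emph{does} restrict to a $b$-$t$-coloring, contradicting the choice of $M$; this is the only point where more than the order bound $|V(A)|\le t^2$ would be used. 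If instead ``feasible minimal $b$-$t$-atom'' is read as ``an induced feasible $b$-$t$-atom of $G$ that is minimal among these'', this last step is unnecessary: one just exhibits a single induced feasible $b$-$t$-atom (for instance $M$, or the one obtained from $c$ as in the proof of Proposition~\ref{bsensd}) and takes an inclusion-minimal one.
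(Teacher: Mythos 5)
Your argument is correct in substance but follows a genuinely different, and in places more careful, route than the paper. For the existence half, the paper simply invokes Proposition~\ref{bsensd} to get an induced minimal b-$t$-atom and then asserts that if no induced minimal b-$t$-atom were feasible there would be no b-$t$-coloring of $G$; you instead build the witness explicitly, by taking a minimum-order induced subgraph $M$ on which a fixed b-$t$-coloring $c$ of $G$ restricts to a b-$t$-coloring, checking $|D_i|\le t$ by deleting redundant vertices, and reading feasibility off from $c$ itself. For the other half, the paper's proof says nothing at all about the non-existence of feasible minimal b-$t'$-atoms for $t'>t$; your maximality argument via Proposition~\ref{bsensi} (choose the largest $s$ admitting a feasible minimal b-$s$-atom, note $s\le|V(G)|$, and derive $\varphi(G)=s>t$) is exactly the missing justification, and it is valid since Proposition~\ref{bsensi} is established before this statement.

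The obstacle you flag at the end is real and you are right not to wave it away: minimality of $|V(M)|$ within your family $\mathcal{H}$ only forbids proper induced subgraphs of $M$ that carry a b-$t$-atom structure \emph{compatible with} $c$, whereas a proper induced b-$t$-atom $M''\subsetneq M$ could be witnessed by an entirely different partition, so $M$ need not be a minimal b-$t$-atom in the paper's sense, and passing to an inclusion-minimal b-$t$-atom inside $M$ loses control of feasibility (its canonical coloring need not extend to its neighborhood). Note, however, that this is precisely the step the paper's own one-line proof silently elides as well, and your proposed repair --- reading ``feasible minimal b-$t$-atom'' as ``inclusion-minimal among the induced feasible b-$t$-atoms of $G$'' --- is the reading under which both this proposition and Proposition~\ref{bsensi} combine into a correct characterization; with that convention your proof is complete, and with the stricter reading both your proof and the paper's have the same unclosed step.
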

\begin{proof}
Suppose $\varphi(G)=t$.
By Proposition \ref{bsensd}, $G$ contains an induced minimal b-$t$-atom. If no induced minimal b-$t$-atom is feasible, then there exists no b-$t$-coloring of $G$, a contradiction.
\end{proof}

A direct consequence of Proposition \ref{bsensi} and Proposition \ref{bsensd1} is the following.

\begin{theo}
For a graph $G$, we have $\varphi(G)=t$ if and only if $G$ contains an induced feasible minimal b-$t$-atom and no induced feasible minimal b-$t'$-atom, for $t'>t$.
\end{theo}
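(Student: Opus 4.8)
The plan is to derive the theorem directly from Propositions \ref{bsensi} and \ref{bsensd1}, which are set up precisely to be the two implications of the claimed equivalence. Writing $P$ for the property ``$G$ contains an induced feasible minimal b-$t$-atom and no induced feasible minimal b-$t'$-atom for any $t'>t$'', Proposition \ref{bsensd1} states $\varphi(G)=t \Rightarrow P$, and Proposition \ref{bsensi} states $P \Rightarrow \varphi(G)=t$. Since the hypothesis of each is exactly the conclusion of the other, chaining them yields $\varphi(G)=t \iff P$, which is the theorem. So the proof is a one-line assembly.

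Concretely, I would first treat the ``only if'' direction: assume $\varphi(G)=t$ and apply Proposition \ref{bsensd1} to conclude that $G$ contains an induced feasible minimal b-$t$-atom and no induced feasible minimal b-$t'$-atom for $t'>t$. Then I would treat the ``if'' direction: assume $G$ contains an induced feasible minimal b-$t$-atom and no induced feasible minimal b-$t'$-atom for $t'>t$, and apply Proposition \ref{bsensi} to conclude $\varphi(G)=t$. Together these give the equivalence, and there is nothing further to verify beyond checking that the quantifier range ($t'>t$) and the notion of feasibility are literally the same in the two propositions, which they are.

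There is no genuine obstacle at this level: all the mathematical content has already been spent, in particular the recoloring/extension argument inside the proof of Proposition \ref{bsensi} (showing that if a b-$t$-coloring of a feasible b-$t$-atom together with its neighborhood cannot be globally extended, then the obstruction vertex, at distance at most $2$, forces a larger feasible b-atom). If one wanted a self-contained statement rather than a corollary, one could simply inline those two proofs; but as presented the cleanest route is to cite them. Hence the proof reads: ``The two implications are exactly Propositions \ref{bsensd1} and \ref{bsensi}.''
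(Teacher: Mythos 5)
Your proposal is correct and matches the paper exactly: the theorem is stated there as a direct consequence of Propositions \ref{bsensi} and \ref{bsensd1}, which supply the two implications of the equivalence just as you describe. Nothing further is needed.
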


The following proposition will be useful in the last section.
\begin{prop}\label{poutineroidumonde}
Let $G$ be a graph and let $t=\varphi_r(G)$.
If every minimal b-$t$-atom is feasible in $G$, then $\varphi(G)=\varphi_r(G)$.
\end{prop}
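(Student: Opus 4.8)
The plan is to derive the statement directly from Proposition~\ref{bsensi} together with Theorem~\ref{bsensd2}, with essentially no extra work. Recall from the introduction that $\varphi(G)\le\varphi_r(G)=t$ always holds, so it suffices to establish $\varphi(G)\ge t$, and for this I would simply check that the two hypotheses of Proposition~\ref{bsensi} are met.

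First, since $\varphi_r(G)=t\ge t$, Theorem~\ref{bsensd2} guarantees that $G$ contains an induced minimal b-$t$-atom; by the assumption of the proposition this atom is feasible in $G$, so $G$ contains an induced feasible minimal b-$t$-atom. Second, I would argue that for every $t'>t$ the graph $G$ contains no induced minimal b-$t'$-atom at all (in particular no feasible one): a minimal b-$t'$-atom is in particular a b-$t'$-atom, which admits a b-$t'$-coloring by definition, so if such a graph were an induced subgraph of $G$ then Theorem~\ref{bsensd2} would give $\varphi_r(G)\ge t'>t$, contradicting $t=\varphi_r(G)$. With both hypotheses of Proposition~\ref{bsensi} verified, we conclude $\varphi(G)=t=\varphi_r(G)$.

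There is no genuine combinatorial obstacle here; the proof is a short combination of earlier results. The only point that deserves a word of care is the reading of ``feasible in $G$'': the hypothesis must be understood as asserting that every minimal b-$t$-atom \emph{occurring as an induced subgraph of $G$} is feasible, which is precisely the form needed to invoke Proposition~\ref{bsensi}. Once this is made explicit, the argument above goes through verbatim.
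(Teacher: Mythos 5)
Your proof is correct and follows essentially the same route as the paper: the paper likewise observes that $\varphi_r(G)=t$ rules out any b-$t'$-atom for $t'>t$ and then invokes Proposition~\ref{bsensi}. Your version is slightly more careful in that it also explicitly verifies the first hypothesis of Proposition~\ref{bsensi} (the existence of an induced feasible minimal b-$t$-atom, via Theorem~\ref{bsensd2}), which the paper leaves implicit.
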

\begin{proof}
Since $t=\varphi_r(G)$, $G$ does not contain a b-$(t+1)$-atom.
Thus, by Proposition \ref{bsensi}, we obtain $\varphi(G)=t$.
\end{proof}
Note that the problem of determining if a graph has a b-$t$-coloring is NP-complete even if $t$ is fixed \cite{PHDSA}.
However, it does not imply that determining if $\varphi(G)\ge t$ for a graph $G$ is NP-complete.
In contrast with the b-chromatic number, determining if a graph has b-relaxed number at least $t$ is in XP.
\begin{prop}
Let $G$ be a graph of order $n$ and let $t$ be an integer.
There exists an algorithm in time $O(n^{t^2})$ to determine if $\varphi_r(G)\ge t$. In particular, the problem b-r-COL with parameter $t$ is in XP.
\end{prop}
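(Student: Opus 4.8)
The plan is to mirror the argument already used for pG-COL. The key ingredient is Theorem~\ref{bsensd2}: $\varphi_r(G)\ge t$ if and only if $G$ contains an induced minimal b-$t$-atom. Moreover, if $G$ contains \emph{any} b-$t$-atom as an induced subgraph, then by iterating the passage to a proper induced b-$t$-atom (as long as one exists) we reach a minimal one that is still an induced subgraph of $G$; hence it suffices to decide whether $G$ contains an induced b-$t$-atom at all, and the word ``minimal'' in Theorem~\ref{bsensd2} carries no extra algorithmic cost. By Observation~\ref{taillbatom}, every b-$t$-atom has at most $t^2$ vertices.

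First I would describe the outer loop of the algorithm: enumerate every subset $S\subseteq V(G)$ with $|S|\le t^2$; there are $\sum_{k=0}^{t^2}\binom{n}{k}=O(n^{t^2})$ such subsets. Second, for each such $S$, test in time depending only on $t$ whether $G[S]$ is a b-$t$-atom: run over all ways of partitioning $S$ into ordered classes $D_1,\dots,D_t$ and of choosing a center vertex $c_i\in D_i$ for each $i$ (at most $(t+1)^{t^2}t^{t}$ configurations), and for each configuration check that every $D_i$ is an independent set with $|D_i|\le t$ and that each $c_i$ has a neighbor in every $D_j$ with $j\neq i$. Answer ``yes'' as soon as some configuration passes the test for some $S$, and ``no'' otherwise. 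Correctness follows immediately from the previous paragraph.

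Combining the two steps, the algorithm runs in time $O\bigl(h(t)\,n^{t^2}\bigr)$ for a computable function $h$, i.e.\ $O(n^{t^2})$ for fixed $t$; this is of the form $O(f(t)\,n^{g(t)})$ with $g(t)=t^2$, so b-r-COL with parameter $t$ is in XP. I do not expect a genuine obstacle here: the only point that needs a word of care is the reduction from ``induced minimal b-$t$-atom'' to ``induced b-$t$-atom'' (handled above), together with the routine observation that the inner feasibility-free test depends only on $t$ and not on $n$, so that it is absorbed into the constant.
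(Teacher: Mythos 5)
Your proposal is correct and follows the same approach as the paper: reduce to detecting an induced (minimal) b-$t$-atom via Theorem~\ref{bsensd2}, use the $t^2$ bound of Observation~\ref{taillbatom}, and enumerate subsets of that size. The paper's proof is just a terser version of yours; your extra remark that ``minimal'' costs nothing algorithmically is a reasonable clarification but not a different argument.
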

\begin{proof}
By Theorem \ref{bsensd2}, it suffices to verify that $G$ contains an induced minimal b-$t$-atom to determine if $\varphi_r (G)\ge t$.
By Observation \ref{taillbatom}, the order of a minimal b-$t$-atom is bounded by $t^2$. Thus, we obtain an algorithm in time $O(n^{t^2})$.
\end{proof}
Another NP-complete problem is to determine the \emph{b-spectrum} of a graph $G$ \cite{BA2007}, i.e. the set of integers $k$ such that $G$ is b-$k$-colorable.
For a graph $G$ satisfying $\varphi(G)=\varphi_r(G)$, our algorithm can be used. Thus, proving that for a class of graphs, every graph $G$ satisfies $\varphi(G)=\varphi_r(G)$, implies that the problem b-COL with parameter $t$ is in XP for this class of graphs.
\section{b-critical vertices and edges}
The concept of \emph{b-critical vertices} and \emph{b-critical edges} has been introduced recently and since five years a large number of articles are considering this subject \cite{BA2013,BL2013,BL2014,IK2010,ZAM2015}. In this section, we illustrate how this notion is strongly connected with the concept of b-$t$-atom.
\begin{de}[\cite{BL2013,IK2010}]
Let $G$ be a graph. A vertex $v$ of $G$ is b-critical if $\varphi(G-v)<\varphi(G)$. An edge $e$ is b-critical if $\varphi(G-e)<\varphi(G)$.
A vertex $v$ (edge $e$, respectively) in a graph $G$ is a \emph{b-$t$-trap}, if there exists a b-$t$-atom of $G$ that becomes feasible by removing $v$ ($e$, respectively). 
\end{de}
\begin{prop}\label{propcrit1}
Let $G$ be a graph. A vertex $v$ is b-critical if and only if it is in every feasible minimal b-$\varphi(G)$-atom and $v$ is not a b-$\varphi(G)$-trap.
\end{prop}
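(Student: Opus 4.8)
Write $t=\varphi(G)$. The whole argument rests on two observations about how deleting $v$ affects feasible b-$t$-atoms, after which both implications drop out. First, if $A$ is a feasible minimal b-$t$-atom of $G$ with $v\notin V(A)$, then $A$ is still a feasible minimal b-$t$-atom of $G-v$: minimality is an internal property of $A$, $A$ is still induced in $G-v$, and feasibility is preserved because $N_{G-v}(A)\subseteq N_G(A)$, so any b-$t$-coloring of $V(A)$ extending over $N_G(A)$ also extends over $N_{G-v}(A)$; hence $\varphi(G-v)\ge t$ (a feasible minimal b-$t$-atom in a graph $H$ forces $\varphi(H)\ge t$, by applying Proposition~\ref{bsensi} to the largest feasible minimal atom present in $H$). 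Second, if $v$ is a b-$t$-trap then, by definition, some b-$t$-atom $A$ of $G$ that is not feasible in $G$ becomes feasible in $G-v$; in particular $v\notin V(A)$, so $G-v$ contains a feasible b-$t$-atom and again $\varphi(G-v)\ge t$.

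Given these, the forward implication is immediate: if $v$ is b-critical then $\varphi(G-v)<t$, so the first observation forbids any feasible minimal b-$t$-atom of $G$ from avoiding $v$ (that is, $v$ lies in every feasible minimal b-$t$-atom), and the second observation forbids $v$ from being a b-$t$-trap.

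For the converse I would argue by contradiction: assume $v$ lies in every feasible minimal b-$t$-atom of $G$ and is not a b-$t$-trap, but $\varphi(G-v)\ge t$. Take a b-$\varphi(G-v)$-coloring of $G-v$; as in the proof of Proposition~\ref{bsensd}, the closed neighborhoods of its b-vertices induce a feasible b-$\varphi(G-v)$-atom $A'$ of $G-v$ with $v\notin V(A')$, whose witnessing coloring is the restriction of the global coloring. Applying the deletion step of Lemma~\ref{recbtat} to $A'$, but carried out relative to this witnessing coloring (so that one keeps, for each retained center, exactly one neighbor of each color), produces an induced minimal b-$t$-atom $A''$ of $G$ with $v\notin V(A'')$ whose $D_i$-coloring is again a restriction of the global coloring, and therefore still extends over $N_{G-v}(A'')$ using only $t$ colors; thus $A''$ is feasible in $G-v$. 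Since $v\notin V(A'')$ and $v$ meets every feasible minimal b-$t$-atom of $G$, $A''$ cannot be feasible in $G$; but then $A''$ becomes feasible by deleting $v$, so $v$ is a b-$t$-trap, a contradiction. Hence $\varphi(G-v)<t$, i.e.\ $v$ is b-critical.

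I expect the real work to be that last reduction: turning the b-$\varphi(G-v)$-coloring of $G-v$ into a feasible (in $G-v$) minimal b-$t$-atom of $G$ avoiding $v$. Lemma~\ref{recbtat} only produces a b-$t$-atom as an abstract induced subgraph, and one has to check that the restriction of the witnessing coloring to the pruned atom is a valid b-$t$-coloring that still extends over the smaller neighborhood without using any color above $t$ --- in particular, that no retained neighbor is forced to keep a large color. When $\varphi(G-v)=t$ this pruning is trivial, but since $\varphi$ is not monotone under vertex deletion the case $\varphi(G-v)>t$ genuinely occurs and must be handled. The companion fact used in the two observations, that a feasible b-$t$-atom (minimal or not) forces $\varphi\ge t$, reduces to the minimal case by the same pruning and then follows from Proposition~\ref{bsensi}.
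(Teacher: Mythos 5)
Your argument follows essentially the same route as the paper's: the forward direction rests on the same two observations (a feasible minimal b-$t$-atom avoiding $v$ survives in $G-v$ with its feasibility intact since $N_{G-v}(A)\subseteq N_G(A)$, and a b-$t$-trap likewise yields a feasible atom in $G-v$), and the converse shows that $G-v$ can contain no minimal b-$t$-atom feasible in $G-v$, since such an atom would either be feasible in $G$ (contradicting that $v$ meets them all) or witness that $v$ is a trap. The only point of divergence is that you explicitly confront the case $\varphi(G-v)>t$: the paper passes directly from ``$G-v$ contains no feasible minimal b-$t$-atom'' to ``$v$ is b-critical'', which tacitly uses exactly the fact you try to establish, namely that $\varphi(H)\ge t$ forces a feasible minimal b-$t$-atom in $H$. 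Your pruning argument for this is, as you yourself flag, incomplete --- the restriction of a b-$t'$-coloring with $t'>t$ may assign colors above $t$ to vertices of $N(A'')$, so it does not by itself witness feasibility of the pruned b-$t$-atom --- but this gap is shared with, not introduced by, your write-up, and everything else matches the paper's proof.
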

\begin{proof}
Let $t=\varphi(G)$. First, if $v$ is not in a feasible minimal b-$t$-atom, then $\varphi(G-v)=t$ and $v$ is not b-critical.
If $v$ is a b-$t$-trap, then, by definition, $\varphi(G-v)=t$.
Second, suppose $v$ is not a b-$t$-trap. 
If $v$ is in every feasible minimal b-$t$-atom, then, since every minimal b-$t$-atom in $G$ does not contain any other feasible minimal b-$t$-atom as induced subgraph, $G-v$ does not contain a feasible minimal b-$t$-atom. Thus, $v$ is b-critical.
\end{proof}
\begin{cor}
If a graph $G$ contains two induced feasible minimal b-$\varphi(G)$-atoms with disjoint set of vertices, then it contains no b-critical vertex.
\end{cor}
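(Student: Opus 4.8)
The plan is to derive this immediately from Proposition \ref{propcrit1}, which already characterizes b-critical vertices. Set $t=\varphi(G)$ and let $A_1$, $A_2$ be two induced feasible minimal b-$t$-atoms of $G$ with $V(A_1)\cap V(A_2)=\emptyset$. The key observation is that, by disjointness, no vertex of $G$ can lie in both $A_1$ and $A_2$; hence no vertex of $G$ lies in \emph{every} feasible minimal b-$t$-atom of $G$.

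Concretely, I would argue as follows. Fix an arbitrary vertex $v\in V(G)$. Since $V(A_1)\cap V(A_2)=\emptyset$, the vertex $v$ belongs to at most one of $A_1$ and $A_2$, so there is a feasible minimal b-$t$-atom (namely one of $A_1$, $A_2$) not containing $v$. By Proposition \ref{propcrit1}, being contained in every feasible minimal b-$\varphi(G)$-atom is a necessary condition for $v$ to be b-critical; as this condition fails, $v$ is not b-critical. Because $v$ was arbitrary, $G$ contains no b-critical vertex.

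There is essentially no real obstacle here: the corollary is a direct logical consequence of the ``in every feasible minimal b-$\varphi(G)$-atom'' clause of Proposition \ref{propcrit1}. The only point worth a sentence of care is that the family of feasible minimal b-$\varphi(G)$-atoms is nonempty (guaranteed by Proposition \ref{bsensd1}, since $\varphi(G)=t$), so that the phrase ``in every feasible minimal b-$t$-atom'' is not vacuously satisfied; but in fact the disjointness hypothesis already exhibits two such atoms, so the argument goes through regardless. The b-$t$-trap clause of Proposition \ref{propcrit1} plays no role: failure of the first necessary condition alone suffices to rule out b-criticality.
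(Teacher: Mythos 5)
Your argument is correct and is exactly the intended derivation: the two disjoint feasible minimal b-$\varphi(G)$-atoms witness that no vertex lies in every such atom, so the necessary condition from Proposition \ref{propcrit1} fails for every vertex. This matches the paper, which states the corollary as an immediate consequence of that proposition.
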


\begin{prop}\label{cogenial2}
Let $G$ be a graph and $v$ be a vertex of $V(G)$.
If $\varphi(G-v)> \varphi(G)$, then $G$ contains a minimal b-$\varphi(G-v)$-atom which is not feasible.
If $\varphi(G-v)< \varphi(G)-1$, then $G-v$ contains no feasible minimal b-$t$-atom, for $\varphi(G-v)< t\le \varphi(G)$.
\end{prop}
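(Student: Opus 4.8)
The plan is to deduce both halves from the atom characterisations already in hand, chiefly Proposition~\ref{bsensd} ($\varphi(G)\ge t$ forces a minimal b-$t$-atom) and Proposition~\ref{bsensd1} (the value $\varphi$ pins down exactly which feasible minimal b-atoms can occur), applied sometimes to $G$ and sometimes to the induced subgraph $G-v$. Throughout I would use the trivial but essential monotonicity facts that an induced subgraph of $G-v$ is an induced subgraph of $G$, and that $N_{G-v}(A)\subseteq N_G(A)$ for every induced subgraph $A$ of $G-v$, so that being feasible in $G$ is at least as demanding as being feasible in $G-v$.

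For the first statement, write $t'=\varphi(G-v)$ and assume $t'>\varphi(G)$. Applying Proposition~\ref{bsensd} to $G-v$ (with $t=t'$) produces an induced minimal b-$t'$-atom $A$ of $G-v$, which is then also an induced minimal b-$t'$-atom of $G$. To finish I would invoke Proposition~\ref{bsensd1} for the graph $G$ with $t=\varphi(G)$: since $t'>\varphi(G)$, the graph $G$ contains no induced feasible minimal b-$t'$-atom, so $A$ is the desired non-feasible minimal b-$\varphi(G-v)$-atom. A self-contained alternative, avoiding the ``no larger feasible atom'' clause of Proposition~\ref{bsensd1}: if, on the contrary, every minimal b-$t'$-atom of $G$ were feasible, let $t^{*}$ be the largest index for which $G$ has a feasible minimal b-$t^{*}$-atom (finite, since a b-$t$-atom has at most $t^{2}$ vertices by Observation~\ref{taillbatom}); then Proposition~\ref{bsensi} gives $\varphi(G)=t^{*}\ge t'>\varphi(G)$, a contradiction.

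For the second statement, write $t_0=\varphi(G-v)$ and suppose $t_0<\varphi(G)-1$. Let $A$ be any minimal b-$t$-atom contained in $G-v$ with $t_0<t\le\varphi(G)$; I would show it cannot be feasible. By the monotonicity noted above, if $A$ were feasible in $G$ it would a fortiori be feasible in $G-v$, so in either reading of the statement it suffices to rule out feasibility in $G-v$. That is immediate from Proposition~\ref{bsensd1} applied to $G-v$: because $\varphi(G-v)=t_0$, the graph $G-v$ has no induced feasible minimal b-$t$-atom for any $t>t_0$, in particular for $t_0<t\le\varphi(G)$. The hypothesis $\varphi(G-v)<\varphi(G)-1$ serves only to place us in the non-degenerate regime, guaranteeing that the set $\{t_0+1,\dots,\varphi(G)\}$ of relevant indices is nonempty (indeed of size at least two), i.e.\ that removing $v$ drops $\varphi$ by more than one.

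The arguments are deliberately short: all the substance is carried by Propositions~\ref{bsensd}, \ref{bsensi} and~\ref{bsensd1}, and the only delicate point is the bookkeeping between the two host graphs $G$ and $G-v$. Concretely, one must be careful that enlarging the ambient graph from $G-v$ to $G$ can only enlarge $N(A)$, so the extension condition defining feasibility becomes harder, never easier, in the larger graph; once this is recorded, the rest is a matter of comparing the indices $\varphi(G)$, $\varphi(G-v)$ and the indices of feasible atoms present in each graph. I do not expect a genuine obstacle; if anything, the only risk is overlooking that $G-v$ may still contain (non-feasible) minimal b-$t$-atoms of large index, which is precisely why the conclusions are phrased in terms of \emph{feasible} atoms.
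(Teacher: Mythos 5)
Your proposal is correct and follows essentially the same route as the paper: both parts rest on the observation that a (minimal) b-$t$-atom of $G-v$ is also one of $G$, combined with Propositions~\ref{bsensd}, \ref{bsensi} and~\ref{bsensd1} to rule out feasibility in the appropriate host graph. Your remark that feasibility in $G$ is at least as demanding as in $G-v$, and that the hypothesis $\varphi(G-v)<\varphi(G)-1$ is not actually needed for the second conclusion, are both accurate and consistent with the paper's (more terse) argument.
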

\begin{proof}
Note that every b-$t$-atom contained in $G-v$ is also contained in $G$, for any integer $t$. Thus, if $\varphi(G-v)> \varphi(G)$, then $G$ contains a b-$\varphi(G-v)$-trap and consequently a minimal b-$\varphi(G-v)$-atom which is not feasible.
Moreover, if $\varphi(G-v)< \varphi(G)-1$ and $G-v$ contains a feasible b-$t$-atom for $\varphi(G-v)< t\le \varphi(G)$, then $\varphi(G-v)\ge t$.
\end{proof}
In \cite{BA2013}, Balakrishnan and Raj have proved the following theorem.
\begin{theo}[\cite{BA2013}] \label{cogenial}
Let $G$ be a graph and $v$ be a vertex of $V(G)$. We have $\varphi(G)- \lfloor \frac{|V(G)|}{2}\rfloor+2\le \varphi(G-v)\le \varphi(G)+ \lfloor \frac{|V(G)|}{2}\rfloor-2$.
\end{theo}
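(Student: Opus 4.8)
The plan is to prove the two inequalities separately. In each case I would start from an optimal b-coloring of one of $G$, $G-v$ and convert it into a b-coloring of the other, keeping track of how many colors are lost. The basic tool common to both directions is the following \emph{clean-up move}: if a proper coloring has a color class $V_i$ with no b-vertex, then every $u\in V_i$ misses some color in its neighborhood, and we may recolor each such $u$ with a color it misses; since $V_i$ is independent this remains proper, color $i$ disappears, and no color outside the current palette is used. The crucial point is that the move destroys no b-vertex of any other class: a b-vertex $w$ of a class $V_j$ keeps all its neighbors of colors different from $i$ and $j$, hence still sees every remaining color other than $j$. Iterating the move on deficient classes therefore turns any proper $k$-coloring into a b-coloring using at least $k$ minus the number of classes eventually discarded colors.

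For the upper bound $\varphi(G-v)\le \varphi(G)+\lfloor |V(G)|/2\rfloor-2$, write $n=|V(G)|$, $t'=\varphi(G-v)$, and fix a b-$t'$-coloring of $G-v$ with one distinguished b-vertex $w_i$ per color class. Re-insert $v$. If some color of $\{1,\ldots,t'\}$ is absent from $N(v)$, give it to $v$: the palette is unchanged, every $w_i$ stays a b-vertex, so $\varphi(G)\ge t'$ and we are done. Otherwise $N(v)$ meets all $t'$ colors, so $\deg(v)\ge t'\ge 1$, whence $G$ has an edge and $\varphi(G)\ge 2$; then give $v$ the new color $t'+1$ and run the clean-up. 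A class $i\le t'$ can be discarded only if no vertex of color $i$ is adjacent to $v$ and to all other colors, which forces $w_i\not\sim v$; hence the discarded classes inject (via $i\mapsto w_i$) into the non-neighbors of $v$, of which there are at most $n-1-\deg(v)\le n-1-t'$. Since the class $t'+1$ survives (with b-vertex $v$), we obtain $\varphi(G)\ge (t'+1)-(n-1-t')=2t'-n+2$. If $t'\ge\lceil n/2\rceil$ this yields $\varphi(G-v)-\varphi(G)\le n-t'-2\le\lfloor n/2\rfloor-2$; and if $t'\le\lceil n/2\rceil-1$ then, using $\varphi(G)\ge 2$, $\varphi(G-v)-\varphi(G)\le t'-2\le\lceil n/2\rceil-3\le\lfloor n/2\rfloor-2$.

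For the lower bound $\varphi(G-v)\ge \varphi(G)-\lfloor |V(G)|/2\rfloor+2$, set $t=\varphi(G)$, fix a b-$t$-coloring of $G$ with distinguished b-vertices $w_1,\ldots,w_t$, delete $v$, and run the clean-up on the coloring induced on $G-v$; we may assume $\varphi(G-v)<t$. A class $i$ becomes deficient only if all of its vertices, in particular $w_i$, stop being b-vertices; since $w_i$ loses at most its neighbor $v$, this happens only when $w_i=v$ (so $i=c(v)$), or when $w_i\sim v$ and $v$ is the unique neighbor of $w_i$ of color $c(v)$, in which case $w_i$ is short of exactly the color $c(v)$. Two remarks then bound the number $d$ of classes eventually discarded: every such $w_i$ lies in $N[v]$, so $d$ is small when $\deg(v)$ is small; and discarding one of these classes recolors its vertex (in the singleton case, $w_i$ itself) with color $c(v)$, which supplies a color-$c(v)$ neighbor to the other still-short $w_j$'s adjacent to it and thereby rescues their classes, so the deficiency does not cascade without limit and $d$ remains bounded even when $\deg(v)$ is large. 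Converting these remarks into the estimate $d\le\lfloor n/2\rfloor-2$ --- using also $\varphi(G)\le m(G)$, so that a large $t$ forces $n$ large --- gives the claim; small values of $n$ are checked directly.

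I expect this last estimate to be the main obstacle: the clean-up can chain together, so one must argue carefully which of the short classes are actually discarded and in what order the color $c(v)$ leaves the palette, and then charge the discarded classes injectively against disjoint pairs of vertices of $G$ to squeeze out the constant $\lfloor n/2\rfloor-2$. The remaining ingredients --- the clean-up move and the upper-bound count --- are routine.
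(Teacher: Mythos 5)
The paper does not prove this theorem; it quotes it from \cite{BA2013}, so there is no internal proof to measure you against. On its own merits, your proposal has two halves of very different status. The clean-up lemma is correct (recoloring an independent deficient class cannot destroy a b-vertex of any other class, so the discarded classes are always among the initially deficient ones), and your proof of the upper bound is complete: when $v$ misses a color the bound is immediate, and otherwise $\deg(v)\ge t'$, the class of $v$ survives, and the injection $i\mapsto w_i$ from discarded classes into non-neighbors of $v$ gives $\varphi(G)\ge 2t'-n+2$, after which your two subcases on $t'$ versus $\lceil n/2\rceil$ do produce $\lfloor n/2\rfloor-2$. (A caveat that is the paper's fault rather than yours: as quoted the statement fails for small $n$ --- $K_5$ already violates the lower bound, and your Case 1 conclusion $\varphi(G-v)\le\varphi(G)$ only yields the upper bound when $\lfloor n/2\rfloor\ge2$ --- so a hypothesis on $n$ from the source has evidently been dropped, and ``small values of $n$ are checked directly'' cannot literally succeed.)

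The genuine gap is the lower bound, and you flag it yourself. What your argument actually establishes is $d\le\deg(v)+1$ (every discarded class has its designated b-vertex in $N[v]$), i.e.\ $\varphi(G-v)\ge\varphi(G)-n$, which is far from $\varphi(G)-\lfloor n/2\rfloor+2$. The two remarks intended to bridge the distance are never converted into an estimate: the rescue effect only helps those damaged $w_j$ adjacent to the recolored $w_i$, so in the worst case the classes actually discarded correspond to an independent set of b-vertices inside $N(v)$; one must then show, say, that each such class contains a second vertex to charge against (each $w_j$ needs a color-$i$ neighbor other than $w_i$), treat the class of $v$ separately, and reconcile the resulting bound of the form $d\le n-t$ (useful only when $t$ is large) with the regime where $t$ is small. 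None of that charging argument is carried out, and it is exactly where the constant $\lfloor n/2\rfloor-2$ has to come from. As written, the proposal proves the upper bound and only outlines a programme for the lower one.
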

Moreover, they have determined the families of graphs for which there exists a vertex $v$ such that $\varphi(G-v)=\varphi(G)- \lfloor \frac{|V(G)|}{2}\rfloor+2$ or $\varphi(G-v)= \varphi(G)+ \lfloor \frac{|V(G)|}{2}\rfloor-2$. In contrast with the b-chromatic number, we have the following property about the b-relaxed number.
\begin{prop}\label{vcritic}
Let $G$ be a graph. If a vertex $v$ is b-critical, then $\varphi_r(G-v)= \varphi_r (G)-1$.
\end{prop}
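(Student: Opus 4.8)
Write $t=\varphi_r(G)$. Since $G-v$ is an induced subgraph of $G$ we have $\varphi_r(G-v)\le t$, and since a b-critical vertex forces $\varphi(G)\ge 2$ we may assume $t\ge 2$. I would establish the two bounds $\varphi_r(G-v)\ge t-1$ and $\varphi_r(G-v)\le t-1$ separately.

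The lower bound holds for any vertex of any graph, and I would prove it by descending one level through the atom construction. By Theorem~\ref{bsensd2} there is an induced minimal b-$t$-atom $A\subseteq G$, with independent classes $D_1,\dots,D_t$ and centres $c_1,\dots,c_t$. If $v\notin V(A)$ then $A\subseteq G-v$ and $\varphi_r(G-v)\ge t$. Otherwise $v\in D_p$ for some $p$; I delete $D_p$ entirely, and inside each remaining class $D_j$ keep $c_j$ together with one neighbour of $c_i$ in $D_j$ for every $i\in\{1,\dots,t\}\setminus\{j,p\}$ (these neighbours exist because $A$ is a b-$t$-atom). The induced subgraph of $G-v$ on the retained vertices has $t-1$ independent classes, each of size at most $(t-2)+1=t-1$, with each retained centre adjacent to each other retained class; it is thus a b-$(t-1)$-atom, so $\varphi_r(G-v)\ge t-1$ by Theorem~\ref{bsensd2}.

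For the upper bound, by Theorem~\ref{bsensd2} it suffices to show that $v$ lies in every induced minimal b-$t$-atom of $G$, since an induced minimal b-$t$-atom of $G-v$ would be one of $G$ missing $v$. Suppose not: let $A$ be an induced minimal b-$t$-atom of $G$ with $v\notin V(A)$; I would then derive $\varphi(G-v)\ge\varphi(G)$, contradicting b-criticality. Write $s=\varphi(G)\le t$. The main lever is that $N_{G-v}(A')\subseteq N_G(A')$ for every induced $A'$ avoiding $v$, so any colouring of $A'$ extending over $N_G(A')$ without new colours also extends over $N_{G-v}(A')$; hence a minimal b-$s$-atom that avoids $v$ and is feasible in $G$ remains feasible in $G-v$. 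If $s<t$, I would pass (Lemma~\ref{recbtat}) to a minimal b-$s$-atom inside $A$ --- still $v$-free --- carry its feasibility into $G-v$, and conclude $\varphi(G-v)=s$ from Proposition~\ref{bsensi} (there are no feasible atoms of higher index in $G-v$ because $\varphi_r(G-v)\le t$). If $s=t$, then $A$ cannot be feasible in $G$ (otherwise $v$ would lie outside a feasible minimal b-$\varphi(G)$-atom, contradicting Proposition~\ref{propcrit1}), whereas $G$ does have a feasible minimal b-$t$-atom $B$ (Proposition~\ref{bsensd1}) with $v\in V(B)$, and one would rebuild a feasible minimal b-$t$-atom inside $G-v$ using the $v$-free atom $A$.

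The step I expect to be the real obstacle is precisely this last one, in the tight case $\varphi(G)=\varphi_r(G)$: turning the coexistence of a $v$-free minimal $\varphi_r(G)$-atom $A$ and a ($v$-containing, feasible) minimal $\varphi(G)$-atom $B$ into a b-$\varphi(G)$-colouring of $G-v$ --- equivalently, showing that b-criticality of $v$ is incompatible with a non-feasible minimal $\varphi_r(G)$-atom that misses $v$. The case $\varphi(G)<\varphi_r(G)$ falls out of Lemma~\ref{recbtat} and the neighbourhood-monotonicity above with no difficulty; the tight case needs a careful analysis of how feasibility and the class structure of the atom interact, while the remaining steps are routine applications of Theorem~\ref{bsensd2} and Propositions~\ref{bsensi},~\ref{bsensd1},~\ref{propcrit1}.
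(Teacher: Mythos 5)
Your split into the two inequalities $\varphi_r(G-v)\ge t-1$ and $\varphi_r(G-v)\le t-1$ is the right way to read the statement, and your lower bound is correct: it is essentially the paper's own construction (delete the class $D_p$ containing $v$ and prune to a b-$(t-1)$-atom), and as you note it needs no b-criticality at all. The obstacle you flag in the upper bound, however, is not a technical difficulty to be worked around: the inequality $\varphi_r(G-v)\le\varphi_r(G)-1$ does not follow from b-criticality. Proposition~\ref{propcrit1} only places $v$ in every \emph{feasible} minimal b-$\varphi(G)$-atom, whereas by Theorem~\ref{bsensd2} you would need $v$ to lie in every minimal b-$\varphi_r(G)$-atom, feasible or not, and at the possibly larger index $\varphi_r(G)$. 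A concrete counterexample: let $G=K^{-}_{3,3}\cup K_3$ and let $v$ be a vertex of the triangle. Then $\varphi(G)=3$ (the triangle supplies the three b-vertices, and a b-$4$-coloring would require four b-vertices of degree at least $3$ while $G$ has only two vertices of degree $3$), whereas $\varphi(G-v)=2$: in $K^{-}_{3,3}\cup K_2$ all three b-vertices of a hypothetical b-$3$-coloring would have degree at least $2$ and hence lie in the component $K^{-}_{3,3}$, inducing a b-$3$-coloring of $K^{-}_{3,3}$ and contradicting $\varphi(K^{-}_{3,3})=2$. So $v$ is b-critical. But with missing edges $a_1b_1$ and $a_2b_2$, the path $a_1$-$b_2$-$a_3$-$b_1$-$a_2$ is an induced $P_5$ inside $K^{-}_{3,3}$, i.e.\ a minimal b-$3$-atom disjoint from $v$, so $\varphi_r(G-v)=3=\varphi_r(G)$ rather than $\varphi_r(G)-1$.

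So the gap is in the statement, not merely in your argument. For comparison, the paper's proof is a one-liner that asserts ``$v$ is in every b-$\varphi(G)$-atom'' as a consequence of Proposition~\ref{propcrit1} and then performs only your class-deletion step; it drops the word ``feasible'' and conflates $\varphi$ with $\varphi_r$, and the example above shows this cannot be repaired. Your diagnosis --- that everything reduces to showing a b-critical vertex meets every minimal b-$\varphi_r(G)$-atom --- identifies exactly the point at which the claim breaks. Two smaller remarks on your text: in the branch $s<t$ you use the minimal b-$s$-atom inside $A$ as if it were feasible in $G$, which Lemma~\ref{recbtat} does not give you; and the conclusion that can safely be salvaged is only the universally valid $\varphi_r(G)-1\le\varphi_r(G-v)\le\varphi_r(G)$, which holds for every vertex $v$.
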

\begin{proof}
By Proposition \ref{propcrit1}, $v$ is in every b-$\varphi(G)$-atom. Let $i$ be the integer associated to $v$ in the construction of this b-$\varphi(G)$-atom. By removing the vertices with associated integer $i$, we obtain a b-$(\varphi(G)-1)$-atom and thus $\varphi_r(G-v)= \varphi_r(G)-1$.
\end{proof}
Note that this proposition was already proved for trees \cite{BL2013}.

\begin{lem}\label{lembc}
Let $G$ be a graph with $4\le |V(G)|\le 5$ and $E(G)\neq\emptyset$. We have $\varphi_r(G-v)=\varphi_r(G)+ \lfloor \frac{|V(G)|}{2}\rfloor-2$, for every vertex $v$ of $V(G)$, if and only if $G$ contains two disjoint edges but no induced minimal b-3-atom.
\end{lem}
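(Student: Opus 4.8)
Since $\lfloor|V(G)|/2\rfloor-2=0$ for $|V(G)|\in\{4,5\}$, the claimed identity is simply $\varphi_r(G-v)=\varphi_r(G)$ for every vertex $v$, and directly from the definition of $\varphi_r$ (every induced subgraph of $G-v$ is an induced subgraph of $G$) one has $\varphi_r(G-v)\le\varphi_r(G)$; so the condition to characterise is that for every $v$ the graph $G-v$ still contains an induced minimal b-$\varphi_r(G)$-atom. The first step is to record the possible values of $\varphi_r$ on a graph of order at most $5$. By Theorem \ref{bsensd2} and the list of small minimal b-$t$-atoms recalled after Figure \ref{batom}, the minimal b-$t$-atoms of order at most $5$ are exactly $K_1$, $P_2$, $C_3$, $P_5$, $C_5$ (for $t=1,2,3$), together with $K_4$ and $K_5$: checking the definition shows that the centre of any b-$4$-atom (resp.\ b-$5$-atom) of order at most $5$ induces a $K_4$ (resp.\ equals $K_5$), so these are the unique minimal ones. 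Hence, writing $k=\varphi_r(G)$ and using that $E(G)\neq\emptyset$ forces $k\ge2$: $k=2$ iff $G$ has no induced $C_3$, $P_5$, or $C_5$; $k=3$ iff $G$ has one of these induced but no induced $K_4$; $k=4$ iff $G$ has an induced $K_4$ but $G\neq K_5$; and $k=5$ iff $G=K_5$.

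Next I would settle the case $k=2$, which also yields the implication ``$\Leftarrow$''. Here $G$ is triangle-free with no induced minimal b-$3$-atom, so the right-hand side of the lemma reduces to ``$G$ has two disjoint edges'', and, since $\varphi_r(G-v)\le2$ for every $v$, the left-hand side is equivalent to ``$G-v$ has an edge for every $v$''. To finish this case one notes that, in a triangle-free graph with at least one edge, ``$G-v$ has an edge for every $v$'' is equivalent to ``$G$ has two disjoint edges'': if the maximum matching of $G$ were a single edge $ab$, then every edge meets $\{a,b\}$, and an edge at $a$ avoiding $b$ together with an edge at $b$ avoiding $a$ would produce either two disjoint edges (if vertex-disjoint) or a triangle (if sharing a vertex), a contradiction; hence all edges have a common endpoint, whose deletion leaves $G$ edgeless. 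As the right-hand side of the lemma forces $k=2$, this also proves ``$\Leftarrow$''.

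For ``$\Rightarrow$'' it then suffices to show that $k\ge3$ always makes the left-hand side fail, i.e.\ produces a vertex $v$ with $\varphi_r(G-v)<k$; combined with the case $k=2$ this gives the remaining implication. If $k=5$ then $G-v=K_4$ has $\varphi_r=4$. If $k=4$, $G$ has an induced $K_4$ on a set $S$ while $G\neq K_5$, so some non-edge has an endpoint $e\notin S$ non-adjacent to some $x\in S$; deleting any vertex of $S\setminus\{x\}$ then leaves a non-complete graph on $4$ vertices, which has no induced $K_4$, hence $\varphi_r<4$. If $k=3$, one needs a vertex $v$ with $G-v$ triangle-free (then $\varphi_r(G-v)\le2$, as no $P_5$ or $C_5$ fits into $4$ vertices); for $|V(G)|=4$ this is immediate, since $G\neq K_4$ means some triple of vertices is not a triangle, so deleting the remaining vertex works.

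The main obstacle is the case $|V(G)|=5$, $k=3$, for which I would prove the auxiliary claim that a $5$-vertex graph in which every vertex-deleted subgraph contains a triangle must contain an induced $K_4$ — which contradicts $k=3$ and thereby finishes the proof. Pick two distinct triangles $T_1,T_2$; in a $5$-vertex graph they share one or two vertices. If every pair of triangles shared at most one vertex, then $T_1\cup T_2=V(G)$ with $|T_1\cap T_2|=1$, and a triangle avoiding the common vertex — which exists, since that vertex cannot lie in every triangle — would, by pigeonhole, share two vertices with $T_1$ or with $T_2$, a contradiction. Hence some pair shares two vertices: write $T_1=\{a,b,c\}$, $T_2=\{a,b,d\}$, with fifth vertex $e$. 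If $cd\in E(G)$ then $\{a,b,c,d\}$ induces $K_4$; otherwise $cd\notin E(G)$, and since neither $a$ nor $b$ lies in every triangle, a triangle avoiding $a$ must be $\{b,c,e\}$ or $\{b,d,e\}$ and a triangle avoiding $b$ must be $\{a,c,e\}$ or $\{a,d,e\}$ (these being the only triples in $\{b,c,d,e\}$, resp.\ $\{a,c,d,e\}$, that do not use the non-edge $cd$); hence $ae,be\in E(G)$ and also $ce\in E(G)$ or $de\in E(G)$, and one checks that correspondingly $\{a,b,c,e\}$ or $\{a,b,d,e\}$ induces a $K_4$. Collecting the cases $k=2,3,4,5$ completes the argument.
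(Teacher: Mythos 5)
Your proof is correct and follows essentially the same route as the paper: reduce to $\varphi_r(G-v)=\varphi_r(G)$, classify the minimal b-$t$-atoms on at most $5$ vertices ($C_3$, $P_5$, $C_5$, $K_4$, $K_5$), and run a case analysis on $\varphi_r(G)\in\{2,3,4,5\}$, finishing the $\varphi_r=2$ case with the disjoint-edges argument. The one place you go beyond the paper is that you actually prove the key combinatorial fact in the $\varphi_r(G)=3$, $|V(G)|=5$ case (that a $K_4$-free graph on at most $5$ vertices has a vertex meeting all triangles), which the paper only asserts; your verification of it is sound.
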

\begin{proof}
We can note that we have $\varphi_r(G-v)=\varphi_r(G)+ \lfloor \frac{|V(G)|}{2}\rfloor-2$ if and only if $\varphi_r(G-v)=\varphi_r(G)$.

First, if $G$ contains no minimal b-3-atom and contains an edge, then $\varphi_r(G)=2$. Moreover, if $G$ contains two disjoint edges, then for any vertex $v$, $G-v$ contains $P_2$ and $\varphi_r(G-v)=2$.

Second, suppose that for every vertex $v$, $\varphi_r(G-v)=\varphi_r(G)$.
The only minimal b-3-atoms that contains at most five vertices are $K_3$, $C_5$ and $P_5$. Moreover, the only minimal b-4-atoms and b-5-atoms that contain at most five vertices are $K_4$ and $K_5$.
We are going to show that $G$ is not one of these graphs
\begin{description}
\item[Case 1:] $\varphi_r(G)=5$. If $G$ is a $K_5$, then, by removing any vertex $v$, we obtain $\varphi_r(G-v)=4$.
\item[Case 2:] $\varphi_r(G)=4$. If $G$ is a $K_4$, then, by removing any vertex $v$, we obtain $\varphi_r(G-v)=3$. If $G$ contains an induced $K_4$, $|V(G)|=5$ and $G$ is not $K_5$, then there exists a vertex $v$ such $G-v$ has no induced $K_4$ and $\varphi_r(G-v)=3$.
\item[Case 3:] $\varphi_r(G)=3$. If $G$ contains an induced $K_3$ and no induced $K_4$, then, since the induced $K_3$ in $G$ have a common vertex $v$, we obtain $\varphi_r(G-v)=2$. Moreover, if $G$ is $P_5$ or $C_5$, then, by removing any vertex $v$, we obtain $\varphi_r(G-v)=2$.
\end{description}
Thus, we can suppose that $\varphi_r(G)=2$.
If $G$ contains only edges with a common vertex $v$, then $\varphi_r(G-v)=1$. Hence, $G$ contains no b-3-atom and contain two disjoint edges.
\end{proof}
The following theorem is a generalization of a conjecture of Blidia et al. \cite{BL2012} for the parameter $\varphi_r$. Note that the graphs $P_4$, $C_4$ and $P_2 \cup P_2$ do not contain any induced minimal b-3-atom and contain two disjoint edges.
\begin{theo}
Let $G$ be a graph. We have $\varphi_r(G-v)=\varphi_r(G)+ \lfloor \frac{|V(G)|}{2}\rfloor-2$, for every vertex $v$ of $V(G)$, if and only one of these conditions is true about $G$:
\begin{enumerate}
\item[i)] $G$ is $P_2$ or $C_3$.
\item[ii)] $E(G)=\emptyset$ and $4\le |V(G)|\le 5$.
\item[iii)] $4\le |V(G)|\le 5$ and $G$ contains two disjoint edges but no b-3-atom.
\end{enumerate}
\end{theo}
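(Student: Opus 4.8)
The plan is to verify the equivalence by analyzing the value of $\varphi_r(G)+\lfloor |V(G)|/2\rfloor-2$ according to the order of $G$, reducing most cases to the already-established Lemma \ref{lembc}. First I would observe that the quantity $\lfloor |V(G)|/2\rfloor-2$ is negative precisely when $|V(G)|\le 3$, is $0$ when $|V(G)|\in\{4,5\}$, and is strictly positive (at least $1$) when $|V(G)|\ge 6$; since $\varphi_r(G-v)\ge \varphi_r(G)-1$ always holds (removing the color class of the associated integer $i$ of $v$ in a b-$\varphi_r(G)$-atom, exactly as in the proof of Proposition \ref{vcritic}), the condition $\varphi_r(G-v)=\varphi_r(G)+\lfloor|V(G)|/2\rfloor-2$ can only be satisfied for all $v$ when $|V(G)|\le 5$. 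This immediately rules out every graph on at least $6$ vertices, so the classification only needs to treat $|V(G)|\le 5$.

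Next I would dispose of the small orders directly. If $|V(G)|\le 3$ the required equality reads $\varphi_r(G-v)=\varphi_r(G)+\lfloor|V(G)|/2\rfloor-2$, which for $|V(G)|=1$ is $\varphi_r(G-v)=\varphi_r(G)-2$, impossible; for $|V(G)|=2$ it is $\varphi_r(G-v)=\varphi_r(G)-1$, forcing $G=P_2$ (the edgeless graph on two vertices fails, giving case i); for $|V(G)|=3$ it is $\varphi_r(G-v)=\varphi_r(G)-1$, and a short check of the four graphs on three vertices ($\overline{K_3}$, $P_2\cup K_1$, $P_3$, $C_3$) shows only $C_3$ works, since $\varphi_r(C_3)=3$ and $\varphi_r(C_3-v)=\varphi_r(P_2)=2$, while for the others some deletion keeps $\varphi_r$ unchanged. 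This yields exactly case i). For $|V(G)|\in\{4,5\}$ the required equality is $\varphi_r(G-v)=\varphi_r(G)$, exactly the hypothesis of Lemma \ref{lembc} when $E(G)\ne\emptyset$: that lemma gives precisely condition iii), i.e. $G$ contains two disjoint edges but no induced minimal b-$3$-atom. The remaining subcase is $E(G)=\emptyset$ with $4\le|V(G)|\le 5$: then $\varphi_r(G)=1=\varphi_r(G-v)$ for every $v$, so the equality holds, giving case ii).

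Finally I would check the converse direction for each listed family to confirm sufficiency: for $P_2$ and $C_3$ the computation above already shows the equality; for $E(G)=\emptyset$ with $4\le |V(G)|\le 5$ both sides equal $1$; and for case iii) one invokes the \textbf{if}-direction of Lemma \ref{lembc} directly, noting (as remarked before the theorem statement) that $P_4$, $C_4$ and $P_2\cup P_2$ are concrete instances. The main obstacle is not conceptual but the bookkeeping in the $|V(G)|\le 3$ and $E(G)=\emptyset$ corner cases, which fall outside the scope of Lemma \ref{lembc} and must be argued by hand; everything of substance for orders $4$ and $5$ has already been absorbed into Lemma \ref{lembc}, so the theorem is essentially a packaging of that lemma together with the trivial small-order and edgeless observations.
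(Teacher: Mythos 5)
Your proposal is correct and follows essentially the same route as the paper: rule out $|V(G)|\ge 6$ (and $|V(G)|=1$) via the fact that $\varphi_r$ can change by at most one under vertex deletion, settle $2\le|V(G)|\le 3$ by inspection (only the minimal b-atoms $P_2$ and $C_3$ survive), handle the edgeless case directly, and delegate all of the substance for orders $4$ and $5$ with edges to Lemma \ref{lembc}. Your explicit justification that $\varphi_r(G-v)\ge\varphi_r(G)-1$ by deleting a whole class $D_i$ of a minimal b-atom is exactly the mechanism behind Proposition \ref{vcritic}, which is what the paper cites at that point.
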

\begin{proof}
Note that if $|V(G)|\ge 6$, then, by Proposition \ref{vcritic}, we can not have $\varphi_r(G-v)= \varphi_r(G)+ \lfloor |V(G)|/ 2\rfloor-2$.
Note also that if $G$ contains only one vertex, then it can not satisfy $\varphi_r(G-v)= \varphi_r(G)+ \lfloor |V(G)|/ 2\rfloor-2$.

First, if $2 \le |V(G)|\le 3$, then we have $\varphi_r(G-v)=\varphi_r(G)-1$ if and only if $G$ is a minimal b-$t$-atom. Hence, if and only if $G$ is $P_2$ or $C_3$.
Second, if $G$ contains no edges, then $\varphi_r(G)=1$ and for any vertex $v$, $\varphi_r(G-v)=1$.
The third condition is obtained by Lemma \ref{lembc}.
\end{proof}
\begin{de}
Let $t$ be a positive integer and $A$ be a b-$t$-atom. An edge $e$ is \emph{b-atom-critical} in $A$ if $A-e$ is not a b-$t$-atom.
\end{de}
\begin{prop}\label{propcrit2}
Let $G$ be a graph. An edge $e$ is b-critical if and only if it is b-atom-critical in every feasible minimal b-$\varphi(G)$-atom and $e$ is not a b-$\varphi(G)$-trap.
\end{prop}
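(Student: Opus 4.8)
Write $t=\varphi(G)$ and $e=xy$. The plan is to follow the proof of Proposition \ref{propcrit1} almost verbatim; the one new feature is that, unlike vertex deletion, edge deletion produces induced subgraphs of $G-e$ that are \emph{not} induced subgraphs of $G$ --- precisely the graphs $G[W]-e$ with $\{x,y\}\subseteq W$ --- and the notions of b-atom-critical edge and of b-$t$-trap are tailored to track exactly these. The recurring tool is the reformulation obtained from Propositions \ref{bsensi} and \ref{bsensd1}: for every graph $H$ one has $\varphi(H)\ge t$ if and only if $H$ contains an induced feasible minimal b-$k$-atom for some $k\ge t$ (Proposition \ref{bsensd1} gives one direction; for the other, take such an atom with $k$ maximum and invoke Proposition \ref{bsensi}). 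Hence $e$ is b-critical if and only if $G-e$ contains no induced feasible minimal b-$k$-atom with $k\ge t$. Note also that, since $A-e=A$ whenever $e\notin E(A)$, the hypothesis ``$e$ is b-atom-critical in every feasible minimal b-$t$-atom of $G$'' already forces every such atom to contain $e$ as an edge.

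\noindent\textbf{Necessity.}
Assume $e$ is b-critical. If $e$ were a b-$t$-trap, some b-$t$-atom of $G$ would become feasible after deleting $e$; replacing it by a minimal sub-atom as in Lemma \ref{recbtat} (keeping the colour classes, so the inherited colouring still extends), $G-e$ would contain an induced feasible minimal b-$t$-atom, contrary to b-criticality. If $e$ were not b-atom-critical in some feasible minimal b-$t$-atom $A$ of $G$, then on $V(A)$ the graph $G-e$ induces either $A$ (if $e\notin E(A)$) or the b-$t$-atom $A-e$ (if $e\in E(A)$); its outside-neighbourhood in $G-e$ lies inside $N_G(A)$, on which the colouring of $A$ already extends without new colours, so $G-e$ contains an induced feasible b-$t$-atom, hence (shrinking to a minimal sub-atom) an induced feasible minimal b-$t$-atom, again contrary to b-criticality. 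Thus both required conditions hold.

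\noindent\textbf{Sufficiency.}
Conversely, assume $e$ is b-atom-critical in every feasible minimal b-$t$-atom of $G$ and is not a b-$t$-trap; suppose for contradiction that $G-e$ contains an induced feasible minimal b-$k$-atom $A$ with $k\ge t$, chosen with $|V(A)|$ minimum, and let $D_1,\dots,D_k$ with centres $c_1,\dots,c_k$ be its colour classes. If $\{x,y\}\not\subseteq V(A)$, then $A=G[V(A)]$ is an induced b-$k$-atom of $G$, which is either feasible in $G$ --- in which case a feasible minimal sub-atom of it is a feasible minimal b-$t$-atom of $G$ on a vertex set omitting $x$ or $y$, hence not containing the edge $e$, contradicting the hypothesis --- or feasible in $G-e$ but not in $G$, which makes $e$ a b-$t$-trap, a contradiction. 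If $\{x,y\}\subseteq V(A)$ with $x\in D_i$, $y\in D_j$ and $i\ne j$, then adding $e$ back leaves every class independent and only adds neighbours, so $A+e=G[V(A)]$ is a b-$k$-atom, feasible in $G$ since $N_G(A+e)=N_{G-e}(A)$ and the canonical colouring stays proper; passing to a feasible minimal sub-atom $B$ of $A+e$ inside $G$ and using that $A$ is minimal in $G-e$, one shows that if $B\ne A+e$ then $\{x,y\}\subseteq V(B)$ and $B-e=A[V(B)]$ is a proper induced subgraph of $A$, which forces contradictions with either the minimality of $A$ or the hypothesis (and if $B=A+e$ then $e$ is not b-atom-critical in $B$ and $\varphi(G)\ge k\ge t$, again contradicting the hypothesis and $\varphi(G)=t$).

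\noindent\textbf{Main obstacle.}
The genuinely delicate case is $\{x,y\}\subseteq V(A)$ with $x,y$ in the \emph{same} colour class of the feasible minimal b-$t$-atom $A$ of $G-e$: then putting $e$ back destroys the canonical partition, so $A+e$ is useless, and one must instead manufacture --- by splitting that class, or by a bounded local recolouring around $x$ and $y$ that exploits the bounded order of b-$t$-atoms (Observation \ref{taillbatom}) --- a feasible minimal b-$t$-atom of $G$ in which $e$ fails to be b-atom-critical, or else exhibit $e$ as a b-$t$-trap. More broadly, the entire proof hinges on one technical point: keeping accurate track of minimality and feasibility of b-atoms when passing between $G$ and $G-e$ and between atoms of different orders. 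Once this bookkeeping is arranged, the argument is a transcription of the proof of Proposition \ref{propcrit1}.
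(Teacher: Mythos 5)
Your overall route is the same as the paper's: both arguments mirror the proof of Proposition \ref{propcrit1} and reduce b-criticality of $e$ to the non-existence of feasible minimal b-atoms in $G-e$, with the two hypotheses accounting for the two ways such an atom could survive the deletion of $e$. Your preliminary reformulation via Propositions \ref{bsensd1} and \ref{bsensi} (namely $\varphi(H)\ge t$ if and only if $H$ contains an induced feasible minimal b-$k$-atom for some $k\ge t$) is a genuine improvement on the paper, which never addresses the possibility $\varphi(G-e)>t$; your necessity direction is sound.

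The sufficiency direction, however, is not a complete proof. You yourself isolate the case where both endpoints of $e$ lie in the same color class of a feasible minimal b-$k$-atom $A$ of $G-e$, and there you only describe what ``one must'' do (split the class, recolor locally, or exhibit a trap) without doing it; similarly, the subcase $B\ne A+e$ is settled by the unargued assertion that it ``forces contradictions.'' This is the crux rather than bookkeeping: when $x$ and $y$ share a class, $G[V(A)]$ need not be a b-$k$-atom at all, so $A$ is an induced feasible atom of $G-e$ that is neither of the form $A'$ nor $A'-e$ for an induced b-atom $A'$ of $G$, and consequently neither hypothesis --- b-atom-criticality, which quantifies over atoms of $G$, nor the b-$t$-trap condition, which is phrased in terms of b-$t$-atoms of $G$ --- applies to it directly. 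For what it is worth, the paper's own one-line justification (``every feasible minimal b-$t$-atom in $G$ does not contain any other feasible minimal b-$t$-atom as subgraph in $G-e$'') silently assumes this configuration away, so you have not overlooked an argument that the paper actually supplies; but as submitted, your proof of the backward implication is incomplete at exactly this point.
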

\begin{proof}
Let $t=\varphi(G)$.
First, if $e$ is not b-atom-critical in a feasible minimal b-$t$-atom, then $G-e$ contains a feasible minimal b-$t$-atom and $\varphi(G-e)=t$.
If $e$ is a b-$t$-trap, then, by definition, $\varphi(G-e)=t$.
Second, suppose that $e$ is not a b-$t$-trap.
If $e$ is b-atom-critical in every feasible minimal b-$t$-atom, then, since every feasible minimal b-$t$-atom in $G$ does not contain any other feasible minimal b-$t$-atom as subgraph in $G-e$, the graph $G-e$ does not contain a feasible minimal b-$t$-atom. Thus, $e$ is b-critical.
\end{proof}

\begin{cor}
If a graph $G$ contains two induced feasible minimal b-$\varphi(G)$-atoms with disjoint sets of b-atom-critical edges, then $G$ contains no b-critical edge.
\end{cor}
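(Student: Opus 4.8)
The plan is to obtain this corollary directly from Proposition~\ref{propcrit2}, in exactly the same way the analogous vertex corollary follows from Proposition~\ref{propcrit1}. Write $t=\varphi(G)$ and let $A_1$ and $A_2$ be the two induced feasible minimal b-$t$-atoms of $G$ whose sets of b-atom-critical edges are disjoint. I would argue by contradiction: assume some edge $e$ of $G$ is b-critical, and derive a contradiction with the disjointness hypothesis.

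The key step is the forward implication of Proposition~\ref{propcrit2}: if $e$ is b-critical then $e$ is b-atom-critical in \emph{every} feasible minimal b-$\varphi(G)$-atom of $G$ (and $e$ is not a b-$\varphi(G)$-trap, though only the first half is needed here). Applying this to $A_1$ and then to $A_2$ shows that $e$ is b-atom-critical in $A_1$ and b-atom-critical in $A_2$; equivalently, $e$ belongs to the set of b-atom-critical edges of $A_1$ and also to the set of b-atom-critical edges of $A_2$. Since these two sets are disjoint by assumption, no such $e$ can exist, so $G$ has no b-critical edge.

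The only subtlety worth a remark is that "$e$ b-atom-critical in $A_i$" implicitly requires $e\in E(A_i)$: if $e\notin E(A_i)$ then $A_i-e=A_i$ is still a b-$t$-atom, hence $e$ is not b-atom-critical in $A_i$, and Proposition~\ref{propcrit2} already rules out $e$ being b-critical in that case. So there is no real obstacle here; the statement is an immediate consequence of Proposition~\ref{propcrit2}, and the proof consists of the short contradiction argument above together with this one-line check.

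\begin{proof}
Let $t=\varphi(G)$ and let $A_1$ and $A_2$ be two induced feasible minimal b-$t$-atoms of $G$ whose sets of b-atom-critical edges are disjoint. Suppose, for contradiction, that some edge $e$ of $G$ is b-critical. By Proposition~\ref{propcrit2}, $e$ is b-atom-critical in every feasible minimal b-$\varphi(G)$-atom of $G$; in particular $e$ is b-atom-critical in $A_1$ and in $A_2$. Hence $e$ lies in the set of b-atom-critical edges of $A_1$ and also in the set of b-atom-critical edges of $A_2$, contradicting the assumption that these sets are disjoint. Therefore $G$ contains no b-critical edge.
\end{proof}
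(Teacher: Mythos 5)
Your proof is correct and matches the paper's intent: the corollary is stated as an immediate consequence of Proposition~\ref{propcrit2} (the paper gives no separate proof), and your contradiction argument via the forward implication of that proposition is exactly the intended reasoning. The remark that an edge outside $E(A_i)$ cannot be b-atom-critical in $A_i$ is a sensible clarification and does not change the argument.
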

\section{b-perfect graphs}
A b-perfect graph is a graph for which every induced subgraph satisfies that its b-chromatic number is equal to its chromatic number. More generally, we present the following definitions.
\begin{de}[\cite{HO2005}]
A graph $G$ is \emph{b-$\chi$-$k$-bounded}, for $k$ a positive integer, if $ \varphi(G')-\chi(G') \le k$, for every induced subgraph $G'$ of $G$.
A graph $G$ is a \emph{$\chi$-$k$-unbounded b-atom}, for $k$ a positive integer, if $ \varphi(G)-\chi(G) > k$ and $G$ is a b-$t$-atom for some integer $t$.
A graph $G$ is an \emph{imperfect b-atom}, for $k$ a positive integer, if $ \varphi(G)>\chi(G)$ and $G$ is a b-$t$-atom for some integer $t$.
\end{de}
\begin{figure}[t]
\begin{center}
\begin{tikzpicture}
\draw (0,0) -- (0,10);
\draw (2,0) -- (2,10);
\draw (4,0) -- (4,10);
\draw (6,0) -- (6,10);
\draw (8,0) -- (8,10);
\draw (10,0) -- (10,10);
\draw (0,0) -- (10,0);
\draw (0,2) -- (10,2);
\draw (0,4) -- (10,4);
\draw (0,6) -- (10,6);
\draw (0,8) -- (10,8);
\draw (0,10) -- (10,10);
\node at (1,8.2) {$F_{1}$};
\draw (0.4,8.8) -- (0.7,9.3);
\draw (0.7,9.3) -- (1,8.8);
\draw (1,8.8) -- (1.3,9.3);
\draw (1.3,9.3) -- (1.6,8.8);
\node at (0.4,8.8) [circle,draw=black,fill=black, scale=0.3] {};
\node at (1,8.8) [circle,draw=black,fill=black, scale=0.3] {};
\node at (1.6,8.8) [circle,draw=black,fill=black, scale=0.3] {};
\node at (0.7,9.3) [circle,draw=black,fill=black, scale=0.3] {};
\node at (1.3,9.3) [circle,draw=black,fill=black, scale=0.3] {};
\node at (3,8.2) {$F_{2}$};
\draw (2.2,8.8) -- (2.5,9.3);
\draw (2.5,9.3) -- (2.8,8.8);
\draw (3,8.8) -- (3,9.3);
\draw (3.5,9.3) -- (3,9.3);
\draw (3.5,9.3) -- (3.5,8.8);
\node at (2.2,8.8) [circle,draw=black,fill=black, scale=0.3] {};
\node at (2.5,9.3) [circle,draw=black,fill=black, scale=0.3] {};
\node at (2.8,8.8) [circle,draw=black,fill=black, scale=0.3] {};
\node at (3,8.8) [circle,draw=black,fill=black, scale=0.3] {};
\node at (3,9.3) [circle,draw=black,fill=black, scale=0.3] {};
\node at (3.5,8.8) [circle,draw=black,fill=black, scale=0.3] {};
\node at (3.5,9.3) [circle,draw=black,fill=black, scale=0.3] {};
\node at (5,8.2) {$F_{3}$};
\draw (4.5,8.5) -- (4.5,9.5);
\draw (5,8.5) -- (5,9.5);
\draw (5.5,8.5) -- (5.5,9.5);
\node at (4.5,8.5) [circle,draw=black,fill=black, scale=0.3] {};
\node at (4.5,9) [circle,draw=black,fill=black, scale=0.3] {};
\node at (4.5,9.5) [circle,draw=black,fill=black, scale=0.3] {};
\node at (5,8.5) [circle,draw=black,fill=black, scale=0.3] {};
\node at (5,9) [circle,draw=black,fill=black, scale=0.3] {};
\node at (5,9.5) [circle,draw=black,fill=black, scale=0.3] {};
\node at (5.5,8.5) [circle,draw=black,fill=black, scale=0.3] {};
\node at (5.5,9) [circle,draw=black,fill=black, scale=0.3] {};
\node at (5.5,9.5) [circle,draw=black,fill=black, scale=0.3] {};
\node at (7,8.2) {$F_{4}$};
\draw (6.75,8.75) -- (6.75,9.25);
\draw (7.25,8.75) -- (7.25,9.25);
\draw (6.75,8.75) -- (7.25,8.75);
\draw (6.75,9.25) -- (7.25,9.25);
\draw (6.75,8.75) -- (7.25,9.25);
\draw (6.75,8.75) -- (6.25,9);
\draw (6.75,9.25) -- (6.25,9);
\draw (7.25,8.75) -- (7.75,9);
\draw (7.25,9.25) -- (7.75,9);
\node at (6.75,8.75) [circle,draw=black,fill=black, scale=0.3] {};
\node at (6.75,9.25) [circle,draw=black,fill=black, scale=0.3] {};
\node at (7.25,8.75) [circle,draw=black,fill=black, scale=0.3] {};
\node at (7.25,9.25) [circle,draw=black,fill=black, scale=0.3] {};
\node at (6.25,9) [circle,draw=black,fill=black, scale=0.3] {};
\node at (7.75,9) [circle,draw=black,fill=black, scale=0.3] {};
\node at (9,8.2) {$F_{5}$};
\draw (9,9) -- (9.4,9.4);
\draw (9,9) -- (8.6,9.4);
\draw (9,9) -- (9.4,8.6);
\draw (9,9) -- (8.6,8.6);
\draw (9.8,9) -- (9.4,9.4);
\draw (8.2,9) -- (8.6,9.4);
\draw (9.8,9) -- (9.4,8.6);
\draw (8.2,9) -- (8.6,8.6);
\draw (9,9) -- (9.8,9);
\draw (8.6,8.6) -- (8.6,9.4);
\node at (9,9) [circle,draw=black,fill=black, scale=0.3] {};
\node at (9.4,9.4) [circle,draw=black,fill=black, scale=0.3] {};
\node at (8.6,9.4) [circle,draw=black,fill=black, scale=0.3] {};
\node at (9.4,8.6) [circle,draw=black,fill=black, scale=0.3] {};
\node at (8.6,8.6) [circle,draw=black,fill=black, scale=0.3] {};
\node at (9.8,9) [circle,draw=black,fill=black, scale=0.3] {};
\node at (8.2,9) [circle,draw=black,fill=black, scale=0.3] {};
\node at (1,6.2) {$F_{6}$};
\draw (0.2,7) -- (0.9,7);
\draw (0.2,7) -- (0.55,7.35);
\draw (0.9,7) -- (0.55,7.35);
\draw (0.2,7) -- (0.55,6.65);
\draw (0.9,7) -- (0.55,6.65);
\draw (1.1,7) -- (1.8,7);
\draw (1.1,7) -- (1.45,7.35);
\draw (1.8,7) -- (1.45,7.35);
\draw (1.1,7) -- (1.45,6.65);
\draw (1.8,7) -- (1.45,6.65);
\node at (0.2,7) [circle,draw=black,fill=black, scale=0.3] {};
\node at (0.9,7) [circle,draw=black,fill=black, scale=0.3] {};
\node at (0.55,7.35) [circle,draw=black,fill=black, scale=0.3] {};
\node at (0.55,6.65) [circle,draw=black,fill=black, scale=0.3] {};
\node at (1.1,7) [circle,draw=black,fill=black, scale=0.3] {};
\node at (1.8,7) [circle,draw=black,fill=black, scale=0.3] {};
\node at (1.45,7.35) [circle,draw=black,fill=black, scale=0.3] {};
\node at (1.45,6.65) [circle,draw=black,fill=black, scale=0.3] {};
\node at (3,6.2) {$F_{7}$};
\draw (2.2,7) -- (2.9,7);
\draw (2.2,7) -- (2.55,7.35);
\draw (2.9,7) -- (2.55,7.35);
\draw (2.2,7) -- (2.55,6.65);
\draw (2.9,7) -- (2.55,6.65);
\draw (2.9,7) -- (3.8,7);
\draw (3.1,7) -- (3.45,7.35);
\draw (3.8,7) -- (3.45,7.35);
\draw (3.1,7) -- (3.45,6.65);
\draw (3.8,7) -- (3.45,6.65);
\node at (2.2,7) [circle,draw=black,fill=black, scale=0.3] {};
\node at (2.9,7) [circle,draw=black,fill=black, scale=0.3] {};
\node at (2.55,7.35) [circle,draw=black,fill=black, scale=0.3] {};
\node at (2.55,6.65) [circle,draw=black,fill=black, scale=0.3] {};
\node at (3.1,7) [circle,draw=black,fill=black, scale=0.3] {};
\node at (3.8,7) [circle,draw=black,fill=black, scale=0.3] {};
\node at (3.45,7.35) [circle,draw=black,fill=black, scale=0.3] {};
\node at (3.45,6.65) [circle,draw=black,fill=black, scale=0.3] {};
\node at (5,6.2) {$F_{8}$};
\draw (4.4,6.5) -- (5.6,6.5);
\draw (4.4,6.5) -- (5,6.9);
\draw (4.8,6.5) -- (5,6.9);
\draw (5.2,6.5) -- (5,6.9);
\draw (5.6,6.5) -- (5,6.9);
\draw (5,6.9) -- (5,7.3);
\draw (5,7.3) -- (4.6,7.7);
\draw (5,7.3) -- (5.4,7.7);
\draw (4.6,7.7) -- (5.4,7.7);
\node at (4.4,6.5) [circle,draw=black,fill=black, scale=0.3] {};
\node at (4.8,6.5) [circle,draw=black,fill=black, scale=0.3] {};
\node at (5.2,6.5) [circle,draw=black,fill=black, scale=0.3] {};
\node at (5.6,6.5) [circle,draw=black,fill=black, scale=0.3] {};
\node at (5,6.9) [circle,draw=black,fill=black, scale=0.3] {};
\node at (5,7.3) [circle,draw=black,fill=black, scale=0.3] {};
\node at (5.4,7.7) [circle,draw=black,fill=black, scale=0.3] {};
\node at (4.6,7.7) [circle,draw=black,fill=black, scale=0.3] {};
\node at (7,6.2) {$F_{9}$};
\draw (6.2,6.5) -- (6.9,6.5);
\draw (7.1,6.5) -- (7.8,6.5);
\draw (6.2,6.5) -- (7,6.9);
\draw (6.55,6.5) -- (7,6.9);
\draw (6.9,6.5) -- (7,6.9);
\draw (7.1,6.5) -- (7,6.9);
\draw (7.45,6.5) -- (7,6.9);
\draw (7.8,6.5) -- (7,6.9);
\draw (7,6.9) -- (7,7.3);
\draw (7,7.3) -- (6.6,7.7);
\draw (7,7.3) -- (7.4,7.7);
\draw (6.6,7.7) -- (7.4,7.7);
\node at (6.2,6.5) [circle,draw=black,fill=black, scale=0.3] {};
\node at (6.55,6.5) [circle,draw=black,fill=black, scale=0.3] {};
\node at (6.9,6.5) [circle,draw=black,fill=black, scale=0.3] {};
\node at (7.1,6.5) [circle,draw=black,fill=black, scale=0.3] {};
\node at (7.45,6.5) [circle,draw=black,fill=black, scale=0.3] {};
\node at (7.8,6.5) [circle,draw=black,fill=black, scale=0.3] {};
\node at (7,6.9) [circle,draw=black,fill=black, scale=0.3] {};
\node at (7,7.3) [circle,draw=black,fill=black, scale=0.3] {};
\node at (7.4,7.7) [circle,draw=black,fill=black, scale=0.3] {};
\node at (6.6,7.7) [circle,draw=black,fill=black, scale=0.3] {};
\node at (9,6.2) {$F_{10}$};
\draw (8.7,6.9) -- (9.3,6.9);
\draw (8.7,7.5) -- (9.3,7.5);
\draw (8.7,6.9) -- (8.7,7.5);
\draw (9.3,6.9) -- (9.3,7.5);
\draw (8.7,6.9) -- (9.3,7.5);
\draw (9.3,6.9) -- (9.7,6.5);
\draw (9.3,7.5) -- (9.7,6.5);
\draw (8.7,6.9) -- (8.3,6.5);
\draw (8.7,7.5) -- (8.3,6.5);
\draw (9.7,6.5) -- (8.3,6.5);
\node at (9.7,6.5) [circle,draw=black,fill=black, scale=0.3] {};
\node at (8.3,6.5) [circle,draw=black,fill=black, scale=0.3] {};
\node at (8.7,6.9) [circle,draw=black,fill=black, scale=0.3] {};
\node at (8.7,7.5) [circle,draw=black,fill=black, scale=0.3] {};
\node at (9.3,6.9) [circle,draw=black,fill=black, scale=0.3] {};
\node at (9.3,7.5) [circle,draw=black,fill=black, scale=0.3] {};
\node at (1,4.2) {$F_{11}$};
\draw (1,5) -- (1.4,5.4);
\draw (1,5) -- (0.6,5.4);
\draw (1,5) -- (1.4,4.6);
\draw (1,5) -- (0.6,4.6);
\draw (1.8,5) -- (1.4,5.4);
\draw (0.2,5) -- (0.6,5.4);
\draw (1.8,5) -- (1.4,4.6);
\draw (0.2,5) -- (0.6,4.6);
\draw (1,5) -- (1.8,5);
\draw (0.6,4.6) -- (0.6,5.4);
\draw (0.2,5) -- (1.4,5.4);
\draw (0.2,5) -- (1.4,4.6);
\node at (1,5) [circle,draw=black,fill=black, scale=0.3] {};
\node at (1.4,5.4) [circle,draw=black,fill=black, scale=0.3] {};
\node at (0.6,5.4) [circle,draw=black,fill=black, scale=0.3] {};
\node at (1.4,4.6) [circle,draw=black,fill=black, scale=0.3] {};
\node at (0.6,4.6) [circle,draw=black,fill=black, scale=0.3] {};
\node at (1.8,5) [circle,draw=black,fill=black, scale=0.3] {};
\node at (0.2,5) [circle,draw=black,fill=black, scale=0.3] {};
\node at (3,4.2) {$F_{12}$};
\draw (2.2,4.7) -- (3.8,4.7);
\draw (2.2,5.3) -- (3.8,5.3);
\draw (2.2,4.7) -- (2.8,5.3);
\draw (2.8,4.7) -- (2.2,5.3);
\draw (3.2,4.7) -- (3.8,5.3);
\draw (3.8,4.7) -- (3.2,5.3);
\draw (2.8,4.7) -- (2.8,5.3);
\draw (3.2,4.7) -- (3.2,5.3);
\node at (2.2,5.3) [circle,draw=black,fill=black, scale=0.3] {};
\node at (2.8,5.3) [circle,draw=black,fill=black, scale=0.3] {};
\node at (3.2,5.3) [circle,draw=black,fill=black, scale=0.3] {};
\node at (3.8,5.3) [circle,draw=black,fill=black, scale=0.3] {};
\node at (2.2,4.7) [circle,draw=black,fill=black, scale=0.3] {};
\node at (2.8,4.7) [circle,draw=black,fill=black, scale=0.3] {};
\node at (3.2,4.7) [circle,draw=black,fill=black, scale=0.3] {};
\node at (3.8,4.7) [circle,draw=black,fill=black, scale=0.3] {};
\node at (5,4.2) {$F_{13}$};
\draw (4.1,4.7) -- (5.9,4.7);
\draw (4.1,5.3) -- (5.9,5.3);
\draw (4.1,4.7) -- (4.7,5.3);
\draw (4.7,4.7) -- (4.1,5.3);
\draw (5.3,4.7) -- (5.9,5.3);
\draw (5.9,4.7) -- (5.3,5.3);
\draw (4.7,4.7) -- (5.3,5.3);
\draw (5.3,4.7) -- (4.7,5.3);
\draw (4.1,4.7) -- (4.1,5.3);
\draw (5.9,4.7) -- (5.9,5.3);
\node at (4.1,5.3) [circle,draw=black,fill=black, scale=0.3] {};
\node at (4.7,5.3) [circle,draw=black,fill=black, scale=0.3] {};
\node at (5.3,5.3) [circle,draw=black,fill=black, scale=0.3] {};
\node at (5.9,5.3) [circle,draw=black,fill=black, scale=0.3] {};
\node at (4.1,4.7) [circle,draw=black,fill=black, scale=0.3] {};
\node at (4.7,4.7) [circle,draw=black,fill=black, scale=0.3] {};
\node at (5.3,4.7) [circle,draw=black,fill=black, scale=0.3] {};
\node at (5.9,4.7) [circle,draw=black,fill=black, scale=0.3] {};
\node at (7,4.2) {$F_{14}$};
\draw (6.1,4.6) -- (7.9,4.6);
\draw (6.1,5.4) -- (7.9,5.4);
\draw (6.7,5.2) -- (7.3,5.2);
\draw (6.7,5.2) -- (6.1,5.4);
\draw (7.3,5.2) -- (7.9,5.4);
\draw (6.1,4.6) -- (6.7,5.2);
\draw (6.7,4.6) -- (6.1,5.4);
\draw (7.3,4.6) -- (7.9,5.4);
\draw (7.9,4.6) -- (7.3,5.2);
\draw (6.7,4.6) -- (7.3,5.2);
\draw (7.3,4.6) -- (6.7,5.2);
\draw (6.1,4.6) -- (6.1,5.4);
\draw (7.9,4.6) -- (7.9,5.4);
\node at (6.1,5.4) [circle,draw=black,fill=black, scale=0.3] {};
\node at (6.7,5.2) [circle,draw=black,fill=black, scale=0.3] {};
\node at (7.3,5.2) [circle,draw=black,fill=black, scale=0.3] {};
\node at (7.9,5.4) [circle,draw=black,fill=black, scale=0.3] {};
\node at (6.1,4.6) [circle,draw=black,fill=black, scale=0.3] {};
\node at (6.7,4.6) [circle,draw=black,fill=black, scale=0.3] {};
\node at (7.3,4.6) [circle,draw=black,fill=black, scale=0.3] {};
\node at (7.9,4.6) [circle,draw=black,fill=black, scale=0.3] {};
\node at (9,4.2) {$F_{15}$};
\draw (8.1,4.5) -- (9.9,4.5);
\draw (8.1,4.5) -- (8.7,4.7);
\draw (9.3,4.7) -- (9.9,4.5);
\draw (8.7,4.7) -- (9.3,4.7);
\draw (8.1,5.5) -- (9.9,5.5);
\draw (8.7,5.3) -- (9.3,5.3);
\draw (8.7,5.3) -- (8.1,5.5);
\draw (9.3,5.3) -- (9.9,5.5);
\draw (8.1,4.5) -- (8.7,5.3);
\draw (8.7,4.7) -- (8.1,5.5);
\draw (9.3,4.7) -- (9.9,5.5);
\draw (9.9,4.5) -- (9.3,5.3);
\draw (8.7,4.7) -- (9.3,5.3);
\draw (9.3,4.7) -- (8.7,5.3);
\draw (8.1,4.5) -- (8.1,5.5);
\draw (9.9,4.5) -- (9.9,5.5);
\node at (8.1,5.5) [circle,draw=black,fill=black, scale=0.3] {};
\node at (8.7,5.3) [circle,draw=black,fill=black, scale=0.3] {};
\node at (9.3,5.3) [circle,draw=black,fill=black, scale=0.3] {};
\node at (9.9,5.5) [circle,draw=black,fill=black, scale=0.3] {};
\node at (8.1,4.5) [circle,draw=black,fill=black, scale=0.3] {};
\node at (8.7,4.7) [circle,draw=black,fill=black, scale=0.3] {};
\node at (9.3,4.7) [circle,draw=black,fill=black, scale=0.3] {};
\node at (9.9,4.5) [circle,draw=black,fill=black, scale=0.3] {};
\node at (1,2.2) {$F_{16}$};
\draw (0.7,2.5) -- (1.3,2.5);
\draw (0.7,2.5) -- (0.15,3.3);
\draw (1.3,2.5) -- (1.85,3.3);
\draw (1,3.9) -- (0.15,3.3);
\draw (1,3.9) -- (1.85,3.3);
\draw (1,3.1) -- (0.15,3.3);
\draw (1,3.1) -- (1.85,3.3);
\draw (1,3.1) -- (1,3.9);
\node at (0.7,2.5) [circle,draw=black,fill=black, scale=0.3] {};
\node at (1.3,2.5) [circle,draw=black,fill=black, scale=0.3] {};
\node at (1.85,3.3) [circle,draw=black,fill=black, scale=0.3] {};
\node at (0.15,3.3) [circle,draw=black,fill=black, scale=0.3] {};
\node at (1,3.1) [circle,draw=black,fill=black, scale=0.3] {};
\node at (1,3.9) [circle,draw=black,fill=black, scale=0.3] {};
\node at (3,2.2) {$F_{17}$};
\draw (2.7,2.5) -- (3.3,2.5);
\draw (2.7,2.5) -- (2.15,3.3);
\draw (3.3,2.5) -- (3.85,3.3);
\draw (3,3.9) -- (2.15,3.3);
\draw (3,3.9) -- (3.85,3.3);
\draw (3,3.1) -- (2.15,3.3);
\draw (3,3.1) -- (3.85,3.3);
\draw (3,3.1) -- (3,3.9);
\draw (3,3.1) -- (3.3,2.5);
\node at (2.7,2.5) [circle,draw=black,fill=black, scale=0.3] {};
\node at (3.3,2.5) [circle,draw=black,fill=black, scale=0.3] {};
\node at (3.85,3.3) [circle,draw=black,fill=black, scale=0.3] {};
\node at (2.15,3.3) [circle,draw=black,fill=black, scale=0.3] {};
\node at (3,3.1) [circle,draw=black,fill=black, scale=0.3] {};
\node at (3,3.9) [circle,draw=black,fill=black, scale=0.3] {};
\node at (5,2.2) {$F_{18}$};
\draw (4.7,2.5) -- (5.3,2.5);
\draw (4.7,2.5) -- (4.15,3.3);
\draw (5.3,2.5) -- (5.85,3.3);
\draw (5,3.9) -- (4.15,3.3);
\draw (5,3.9) -- (5.85,3.3);
\draw (4.7,2.5) -- (4.7,3);
\draw (5.3,2.5) -- (5.3,3);
\draw (4.7,3) -- (5,3.4);
\draw (5.3,3) -- (5,3.4);
\draw (4.15,3.3) -- (4.7,3);
\draw (4.15,3.3) -- (5.3,3);
\draw (5.85,3.3) -- (4.7,3);
\draw (5.85,3.3) -- (5.3,3);
\node at (4.7,2.5) [circle,draw=black,fill=black, scale=0.3] {};
\node at (5.3,2.5) [circle,draw=black,fill=black, scale=0.3] {};
\node at (5.85,3.3) [circle,draw=black,fill=black, scale=0.3] {};
\node at (4.15,3.3) [circle,draw=black,fill=black, scale=0.3] {};
\node at (4.7,3) [circle,draw=black,fill=black, scale=0.3] {};
\node at (5.3,3) [circle,draw=black,fill=black, scale=0.3] {};
\node at (5,3.4) [circle,draw=black,fill=black, scale=0.3] {};
\node at (5,3.9) [circle,draw=black,fill=black, scale=0.3] {};
\node at (7,2.2) {$F_{19}$};
\draw (6.7,2.5) -- (7.3,2.5);
\draw (6.7,2.5) -- (6.15,3.3);
\draw (7.3,2.5) -- (7.85,3.3);
\draw (7,3.9) -- (6.15,3.3);
\draw (7,3.9) -- (7.85,3.3);
\draw (6.7,2.5) -- (6.7,3);
\draw (7.3,2.5) -- (7.3,3);
\draw (6.7,3) -- (7,3.4);
\draw (7.3,3) -- (7,3.4);
\draw (7.3,3) -- (6.7,3);
\draw (6.7,3) -- (7,3.9);
\draw (7.3,3) -- (7,3.9);
\draw (7,3.4) -- (6.15,3.3);
\draw (7,3.4) -- (7.85,3.3);
\node at (6.7,2.5) [circle,draw=black,fill=black, scale=0.3] {};
\node at (7.3,2.5) [circle,draw=black,fill=black, scale=0.3] {};
\node at (7.85,3.3) [circle,draw=black,fill=black, scale=0.3] {};
\node at (6.15,3.3) [circle,draw=black,fill=black, scale=0.3] {};
\node at (6.7,3) [circle,draw=black,fill=black, scale=0.3] {};
\node at (7.3,3) [circle,draw=black,fill=black, scale=0.3] {};
\node at (7,3.4) [circle,draw=black,fill=black, scale=0.3] {};
\node at (7,3.9) [circle,draw=black,fill=black, scale=0.3] {};
\node at (9,2.2) {$F_{20}$};
\draw (8.7,2.5) -- (9.3,2.5);
\draw (8.7,2.5) -- (8.15,3.3);
\draw (9.3,2.5) -- (9.85,3.3);
\draw (9,3.9) -- (8.15,3.3);
\draw (9,3.9) -- (9.85,3.3);
\draw (8.7,2.5) -- (9.3,3);
\draw (9.3,2.5) -- (8.7,3);
\draw (8.7,3) -- (9,3.4);
\draw (9.3,3) -- (8.7,3);
\draw (9.3,3) -- (9,3.9);
\draw (9,3.4) -- (8.15,3.3);
\draw (9,3.4) -- (9.85,3.3);
\draw (8.7,3) -- (8.15,3.3);
\draw (9.3,3) -- (9.85,3.3);
\node at (8.7,2.5) [circle,draw=black,fill=black, scale=0.3] {};
\node at (9.3,2.5) [circle,draw=black,fill=black, scale=0.3] {};
\node at (9.85,3.3) [circle,draw=black,fill=black, scale=0.3] {};
\node at (8.15,3.3) [circle,draw=black,fill=black, scale=0.3] {};
\node at (8.7,3) [circle,draw=black,fill=black, scale=0.3] {};
\node at (9.3,3) [circle,draw=black,fill=black, scale=0.3] {};
\node at (9,3.4) [circle,draw=black,fill=black, scale=0.3] {};
\node at (9,3.9) [circle,draw=black,fill=black, scale=0.3] {};
\node at (1,0.2) {$F_{21}$};
\draw (0.7,0.5) -- (1.3,0.5);
\draw (0.7,0.5) -- (0.15,1.3);
\draw (1.3,0.5) -- (1.85,1.3);
\draw (1,1.9) -- (0.15,1.3);
\draw (1,1.9) -- (1.85,1.3);
\draw (0.7,0.5) -- (1.3,1);
\draw (1.3,0.5) -- (0.7,1);
\draw (0.7,1) -- (1,1.4);
\draw (1.3,1) -- (0.7,1);
\draw (1.3,1) -- (1,1.9);
\draw (1,1.4) -- (0.15,1.3);
\draw (1,1.4) -- (1.85,1.3);
\draw (0.7,1) -- (0.15,1.3);
\draw (1.3,1) -- (1.85,1.3);
\draw (1,1.4) -- (0.7,0.5);
\node at (0.7,0.5) [circle,draw=black,fill=black, scale=0.3] {};
\node at (1.3,0.5) [circle,draw=black,fill=black, scale=0.3] {};
\node at (1.85,1.3) [circle,draw=black,fill=black, scale=0.3] {};
\node at (0.15,1.3) [circle,draw=black,fill=black, scale=0.3] {};
\node at (0.7,1) [circle,draw=black,fill=black, scale=0.3] {};
\node at (1.3,1) [circle,draw=black,fill=black, scale=0.3] {};
\node at (1,1.4) [circle,draw=black,fill=black, scale=0.3] {};
\node at (1,1.9) [circle,draw=black,fill=black, scale=0.3] {};
\node at (3,0.2) {$F_{22}$};
\draw (2.7,0.5) -- (3.3,0.5);
\draw (2.7,0.5) -- (2.15,1.3);
\draw (3.3,0.5) -- (3.85,1.3);
\draw (3,1.9) -- (2.15,1.3);
\draw (3,1.9) -- (3.85,1.3);
\draw (2.7,1.4) -- (2.15,1.3);
\draw (3.3,1.4) -- (3.85,1.3);
\draw (2.7,1.4) -- (3.3,1.4);
\draw (3.3,1.4) -- (3,1.9);
\draw (2.7,1.4) -- (3,1.9);
\draw (3.3,1.4) -- (2.7,0.5);
\draw (2.7,1.4) -- (3.3,0.5);
\draw (3.3,0.5) -- (3,0.7);
\draw (2.7,0.5) -- (3,0.7);

\draw (3,1.9) -- (3,0.7);
\node at (2.7,0.5) [circle,draw=black,fill=black, scale=0.3] {};
\node at (3.3,0.5) [circle,draw=black,fill=black, scale=0.3] {};
\node at (3.85,1.3) [circle,draw=black,fill=black, scale=0.3] {};
\node at (2.15,1.3) [circle,draw=black,fill=black, scale=0.3] {};
\node at (3,1.9) [circle,draw=black,fill=black, scale=0.3] {};
\node at (3.3,1.4) [circle,draw=black,fill=black, scale=0.3] {};
\node at (2.7,1.4) [circle,draw=black,fill=black, scale=0.3] {};
\node at (3,0.7) [circle,draw=black,fill=black, scale=0.3] {};
\end{tikzpicture}
\end{center}
\caption{The family $\mathcal{F}$: the imperfect b-atoms \cite{HO2012}.}
\label{bfamily}
\end{figure}
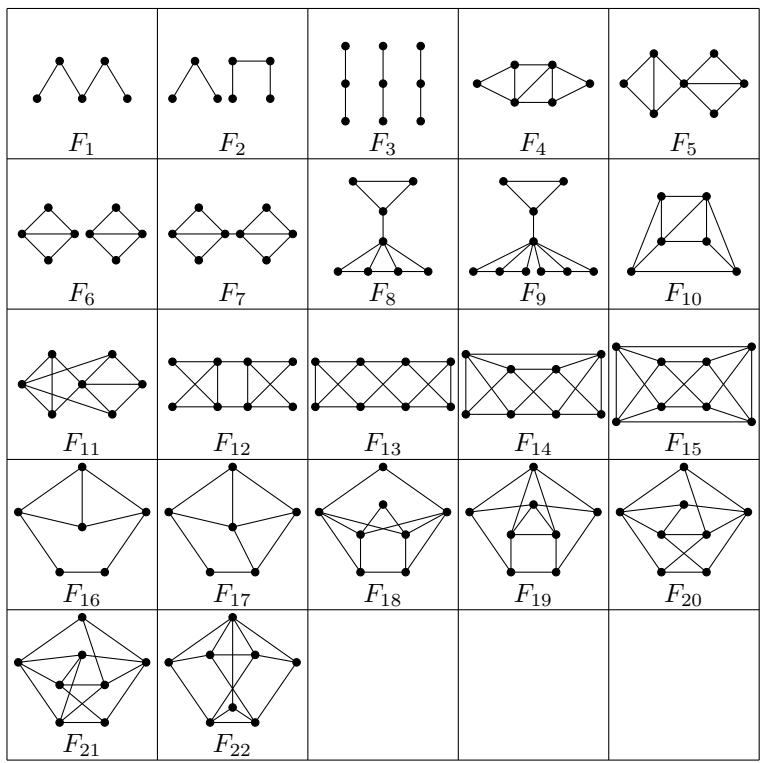

Hoang et al. \cite{HO2012} characterized b-perfect graphs by giving the family $\mathcal{F}$ of forbidden induced subgraphs depicted in Figure \ref{bfamily}.
We recall the following theorem:
\begin{theo}[\cite{HO2012}]\label{theoperfect}
A graph is b-perfect if and only if it contains no graph from $\mathcal{F}$ as induced subgraph.
\end{theo}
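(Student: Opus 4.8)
The plan is to prove the two directions separately. For the easy direction, if $G$ contains some $F\in\mathcal{F}$ as an induced subgraph then, $F$ being an imperfect b-atom, $\varphi(F)>\chi(F)$, so neither $F$ nor $G$ is b-perfect. For the converse, assume $G$ is not b-perfect and choose an induced subgraph $H$ of $G$ that is minimal, for the induced-subgraph order, among those satisfying $\varphi(H)>\chi(H)$ (a minimally b-imperfect graph); we show $H\in\mathcal{F}$. Put $t=\varphi(H)$, so $t\ge\chi(H)+1$. By Proposition~\ref{bsensd} applied to $H$, the graph $H$ contains an induced minimal b-$t$-atom $A$. Then $\chi(A)\le\chi(H)$ because $A$ is an induced subgraph of $H$, whereas $\varphi(A)\ge t$ because the classes $D_1,\dots,D_t$ of a b-$t$-atom always form a b-$t$-coloring; hence $\varphi(A)\ge t>\chi(H)\ge\chi(A)$, so $A$ is itself b-imperfect, and minimality of $H$ forces $A=H$. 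Thus $H$ is a b-$t$-atom with $\varphi(H)>\chi(H)$, i.e.\ an imperfect b-atom, and it is minimal as such, since a proper induced subgraph of $H$ that were an imperfect b-atom would be a strictly smaller induced subgraph with $\varphi>\chi$. So $H$ is a minimal imperfect b-atom; the same argument shows in addition that the minimal imperfect b-atoms are exactly the minimally b-imperfect graphs.

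It remains to show that the minimal imperfect b-atoms are precisely the graphs $F_1,\dots,F_{22}$ of Figure~\ref{bfamily}; this is the substance of \cite{HO2012}. The plan has three steps. First, show that every minimally b-imperfect graph $H$ satisfies $\varphi(H)=\chi(H)+1$: if $\varphi(H)\ge\chi(H)+2$, delete a suitably chosen vertex — using that $H$ is a b-$\varphi(H)$-atom together with a vertex-deletion estimate in the spirit of Theorem~\ref{cogenial} and Proposition~\ref{cogenial2} — and check that $\varphi>\chi$ still holds, contradicting minimality. Second, and this is the crux, bound the order of $H$ by an absolute constant; since $|V(H)|\le\varphi(H)^2$ by Observation~\ref{taillbatom} and $\varphi(H)=\chi(H)+1$, it suffices to bound $\chi(H)$, and for this one must prove that for all sufficiently large $t$ every b-$t$-atom either satisfies $\varphi=\chi$ or properly contains an imperfect b-atom, so that the parameter $t=\varphi(H)$ of a minimal imperfect b-atom cannot exceed some fixed $t_0$. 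Third, once the order is bounded only finitely many b-$t$-atoms remain; one enumerates them, discards those with $\varphi=\chi$ and those containing a strictly smaller imperfect b-atom, and verifies that exactly $F_1,\dots,F_{22}$ survive.

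The main obstacle is the second step. The reduction carried out above is essentially a reformulation of Propositions~\ref{bsensd} and~\ref{bsensi} in terms of the difference $\varphi-\chi$, and once an absolute size bound is available the remaining classification is a large but routine verification; the genuine difficulty is to prove that the gap $\varphi-\chi>0$ can be forced only by induced configurations of bounded order. This is the lengthy structural case analysis of \cite{HO2012}, which we import here as a black box rather than reprove.
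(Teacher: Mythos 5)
The paper does not prove this statement at all: it is recalled verbatim from \cite{HO2012} and used as an imported result, so there is no internal proof to match your attempt against. Your first paragraph --- reducing the theorem to the claim that the minimal imperfect b-atoms are exactly the minimally b-imperfect graphs, via Proposition~\ref{bsensd} and a minimality argument --- is correct, and it is essentially the paper's own Theorem~\ref{minnn} specialized to $k=0$; indeed the paper explicitly remarks at the end of Section~5 that showing every minimal b-$4$-atom other than $K_4$ contains a member of $\mathcal{F}$ would, combined with Theorem~\ref{minnn}, yield ``another proof'' of Theorem~\ref{theoperfect}. So the reduction is sound and well aligned with the paper's framework.

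The genuine gap is that everything after the reduction --- the entire content of the theorem, namely that the minimal imperfect b-atoms are precisely $F_1,\dots,F_{22}$ --- is declared a black box imported from \cite{HO2012}, which is the very reference whose theorem you are trying to prove. That makes the argument circular as a proof: you have shown that Theorem~\ref{theoperfect} is equivalent to the classification of minimal imperfect b-atoms, and then cited the classification from the paper that establishes Theorem~\ref{theoperfect}. Two further points in your sketch would not survive scrutiny if you did try to fill the box: (1) your proposed derivation of $\varphi(H)=\chi(H)+1$ from ``a vertex-deletion estimate in the spirit of Theorem~\ref{cogenial}'' cannot work, since that theorem only bounds $\varphi(G-v)$ below by $\varphi(G)-\lfloor|V(G)|/2\rfloor+2$, far too weak to preserve $\varphi>\chi$ under deletion; (2) the bound $|V(H)|\le\varphi(H)^2$ from Observation~\ref{taillbatom} applies to b-$t$-atoms as constructed, so bounding $\chi(H)$ is indeed the crux, and no mechanism for that is offered. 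Either carry out the classification (at least the bipartite case and the minimal b-$4$-atom case, as the paper's remark suggests) or present the statement as the paper does --- as a cited theorem --- rather than as something proved.
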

Note that every graph in the family $\mathcal{F}$ is a b-$t$-atom for some $t$. More precisely, $F_1$, $F_2$ and $F_3$ are the only minimal bipartite b-3-atoms. The remaining graphs are minimal b-4-atoms that do not contain $F_1$, $F_2$ and $F_3$ as induced subgraph and which admit a proper coloring with three colors (as mentioned in \cite{HO2009}). We can state the following property about b-$t$-atoms.
\begin{theo}\label{minnn}
Let $k$ be a positive integer.
A graph $G$ is not b-$\chi$-$k$-bounded if and only if it contains a minimal $\chi$-$k$-unbounded b-atom.
\end{theo}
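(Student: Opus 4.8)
The plan is to prove the two implications separately: the ``if'' direction is immediate from the definitions, and the ``only if'' direction combines Proposition \ref{bsensd} (localization of a b-coloring inside a b-$t$-atom) with the monotonicity of $\chi$ under induced subgraphs, followed by a routine minimization step.

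For the implication ``$G$ contains a minimal $\chi$-$k$-unbounded b-atom $\Rightarrow$ $G$ is not b-$\chi$-$k$-bounded'': if $H$ is such an induced subgraph of $G$, then $H$ is itself an induced subgraph of $G$ with $\varphi(H)-\chi(H)>k$, so $G$ fails the defining inequality of being b-$\chi$-$k$-bounded. Nothing more is needed here.

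For the converse, suppose $G$ is not b-$\chi$-$k$-bounded and fix an induced subgraph $G'$ of $G$ with $\varphi(G')-\chi(G')>k$. Set $t=\varphi(G')$. By Proposition \ref{bsensd}, $G'$ — and hence $G$ — contains an induced minimal b-$t$-atom $A$. The key observation is that $A$ carries a b-$t$-coloring, namely the one induced by its parts $D_1,\ldots,D_t$, so $\varphi(A)\ge t$; on the other hand $\chi(A)\le \chi(G')$ because $A$ is a subgraph of $G'$ and the chromatic number is monotone under subgraphs. Combining these two facts gives $\varphi(A)-\chi(A)\ge \varphi(G')-\chi(G')>k$, so $A$ is a $\chi$-$k$-unbounded b-atom (it is a b-$t$-atom for the integer $t$). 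Finally, let $A'$ be an induced subgraph of $A$ that is a $\chi$-$k$-unbounded b-atom and is inclusion-minimal with this property; since every induced subgraph of $A'$ is an induced subgraph of $A$, the graph $A'$ is a minimal $\chi$-$k$-unbounded b-atom, and $A'$ is an induced subgraph of $G$, as required.

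The only step that requires genuine care is checking that the localized b-$t$-atom $A$ really attains $\varphi(A)-\chi(A)>k$; this rests on the asymmetry that passing from $G'$ to the induced subgraph $A$ can only decrease $\chi$, while $\varphi$ of the atom stays at least $t=\varphi(G')$ thanks to the built-in b-$t$-coloring of $A$. Everything else — invoking Proposition \ref{bsensd} to extract $A$, and then passing to an inclusion-minimal $\chi$-$k$-unbounded b-atom inside $A$ — is routine and parallels the arguments already used for Theorem \ref{bsensd2} and Proposition \ref{bsensd}.
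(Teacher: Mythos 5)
Your proposal is correct and follows essentially the same route as the paper: both extract a b-$\varphi(G')$-atom from an induced subgraph $G'$ witnessing $\varphi(G')-\chi(G')>k$, and both rely on the key asymmetry that passing to the atom can only decrease $\chi$ while the atom's built-in b-coloring keeps $\varphi$ at least $\varphi(G')$. Your version is in fact slightly more careful than the paper's, since you make the final passage to an inclusion-minimal $\chi$-$k$-unbounded b-atom explicit.
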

\begin{proof}
First, if $G$ contains a minimal $\chi$-$k$-unbounded b-atom, then, by definition, $G$ is not $\chi$-$k$-bounded.

Second, suppose $G$ is not b-$\chi$-$k$-bounded. Then, there exists an induced subgraph $A$ of $G$ of minimal order which is not b-$\chi$-$k$-bounded. By removing vertices of $A$ we can only decrease the chromatic number.
Thus, by removing vertices we can obtain a b-$\varphi(A)$-atom which is $\chi$-$k$-unbounded.
\end{proof}
\begin{cor}\label{corrr}
The graphs with b-chromatic number $t$ which are b-$\chi$-$k$-bounded,, for fixed integers $k$ and $t$, can be defined by forbidding a finite family of induced subgraphs: the $\chi$-$k$-unbounded b-atoms.
Hence, a graph $G$ is b-perfect if and only if it does not contain imperfect b-atoms.
\end{cor}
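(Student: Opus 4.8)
The plan is to read off the forbidden-subgraph characterization itself from Theorem \ref{minnn}, and then to supply the two extra ingredients the corollary asserts: the finiteness of the forbidden family for graphs of b-chromatic number $t$, and the specialization to the b-perfect case. For the characterization, among graphs $G$ with $\varphi(G)=t$ the equivalence ``$G$ is b-$\chi$-$k$-bounded $\iff$ $G$ has no induced $\chi$-$k$-unbounded b-atom'' is immediate in one direction (any induced subgraph of $G$, in particular any induced b-atom, satisfies $\varphi-\chi\le k$) and is exactly the contrapositive of Theorem \ref{minnn} in the other; moreover containing a $\chi$-$k$-unbounded b-atom is the same as containing a minimal one, by passing to a vertex-minimal induced witness.

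The substantive part is to show that only finitely many minimal $\chi$-$k$-unbounded b-atoms can be induced in a graph $G$ with $\varphi(G)=t$. I would argue in three steps. First, such a b-atom $A$ may be taken to be a b-$\varphi(A)$-atom: by Proposition \ref{bsensd}, $A$ contains an induced b-$\varphi(A)$-atom $B$, and since $\chi(B)\le\chi(A)$ while $\varphi(B)\ge\varphi(A)>k+\chi(A)\ge k+\chi(B)$, the graph $B$ is again $\chi$-$k$-unbounded, so minimality forces $B=A$. Second, minimality also gives $\varphi(A)\le\chi(A)+k+1$: writing $s=\varphi(A)$ and applying the construction of Lemma \ref{recbtat} to delete $D_s$ (and then the vertices no longer adjacent to any remaining center) from the b-$s$-atom $A$, one obtains a proper induced b-$(s-1)$-subatom $A'$ with $\chi(A')\le\chi(A)$ and $\varphi(A')\ge s-1$, so if $s-1>\chi(A)+k$ then $A'$ is a strictly smaller $\chi$-$k$-unbounded b-atom, a contradiction. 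Third, since $A$ is an induced subgraph of $G$, the standard inequalities give $\chi(A)\le\chi(G)\le\varphi(G)=t$. Combining the three steps, $\varphi(A)\le t+k+1$, hence $|V(A)|\le (t+k+1)^2$ by Observation \ref{taillbatom}, and there are only finitely many graphs of that order.

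For the b-perfect statement I would take $k=0$. Since $\chi(G')\le\varphi(G')$ always holds, $G$ is b-perfect if and only if every induced subgraph $G'$ satisfies $\varphi(G')\le\chi(G')$; and the argument of Theorem \ref{minnn} applies verbatim with ``$\chi$-$k$-unbounded'' replaced by ``$\varphi>\chi$'' to show that a failure of this is witnessed by an induced b-atom with $\varphi>\chi$, that is, by an imperfect b-atom. Hence $G$ is b-perfect if and only if it contains no imperfect b-atom, and (as already noted before the statement) this finite forbidden family is precisely the family $\mathcal{F}$ of Theorem \ref{theoperfect}.

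The step I expect to need the most care is the second one of the finiteness argument: making precise that deleting $D_s$ together with the vertices that are subsequently non-adjacent to every remaining center, as in Lemma \ref{recbtat}, really produces a b-$(s-1)$-atom, that its chromatic number has not increased (which is clear, as it is an induced subgraph), and that this contradicts minimality of $A$ unless $s-1\le\chi(A)+k$. The remaining steps are bookkeeping with the definitions and with Theorems \ref{minnn} and \ref{theoperfect}.
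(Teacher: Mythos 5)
Your proof is correct and follows the route the paper intends: the characterization itself is read off from Theorem \ref{minnn}, and the b-perfect case is the specialization $k=0$ (using that $\chi\le\varphi$ always holds, so b-perfect coincides with b-$\chi$-$0$-bounded and imperfect b-atoms with $\chi$-$0$-unbounded b-atoms). The paper states the corollary with no proof at all, so the one genuinely substantive thing you add is the finiteness argument, which the paper leaves entirely implicit: you show a minimal $\chi$-$k$-unbounded b-atom $A$ may be assumed to be a b-$\varphi(A)$-atom (via Proposition \ref{bsensd} and minimality), that peeling off $D_{\varphi(A)}$ as in Lemma \ref{recbtat} forces $\varphi(A)\le\chi(A)+k+1$ (in fact equality, since $A$ is $\chi$-$k$-unbounded), and that $\chi(A)\le\chi(G)\le\varphi(G)=t$, giving $|V(A)|\le(t+k+1)^2$ by Observation \ref{taillbatom}. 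All three steps check out; the point about needing the hypothesis $\varphi(G)=t$ only to bound $\chi(A)$ (and not $\varphi(A)$ directly, since $\varphi$ is not monotone under induced subgraphs) is exactly the right subtlety to isolate. Your bound also quietly shows that the exponent in the paper's subsequent algorithmic corollary should really depend on $k$ as well as $\varphi(G)$ (namely $(\varphi(G)+k+1)^2$ rather than $\varphi(G)^2$), which is a defect of the paper's statement, not of your argument.
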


Let b-$\chi$-BOUNDED be the following decision problem and let $k$ be an integer, with $0\le k< \varphi(G)$.
\begin{center}
\parbox{10cm}{
\setlength{\parskip}{.05cm}
\textbf{b-$\chi$-$k$-BOUNDED}

\textbf{Instance} : A graph $G$.

\textbf{Question}: Does $\varphi(G)-\chi(G)\ge k$?}

\end{center}
By Corollary \ref{corrr}, we obtain the following corollary:
\begin{cor}
Let $G$ be a graph and $k$ be an integer, with $0\le k< \varphi(G)$. 
There exists an algorithm in time $O(n^{\varphi(G)^2})$ to solve b-$\chi$-$k$-BOUNDED.
\end{cor}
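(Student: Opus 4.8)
The plan is to recast b-$\chi$-$k$-BOUNDED as the detection of a bounded-order induced subgraph and then to brute-force over all vertex subsets of that size. We may assume $k\ge 1$, since $\chi(G)\le\varphi(G)$ makes the answer trivially ``yes'' when $k=0$. By Corollary~\ref{corrr} the relevant obstructions are $\chi$-$(k-1)$-unbounded b-atoms — b-$t$-atoms $A$ with $\varphi(A)-\chi(A)\ge k$ — and by Observation~\ref{taillbatom} such an atom has order at most $t^{2}$. The real content of the proof is to pin $t$ down to at most $\varphi(G)$ for the obstructions one actually has to inspect, so that the search stays within induced subgraphs of order at most $\varphi(G)^{2}$.

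The mechanism I would use is the extraction step already present in the proof of Proposition~\ref{bsensd}, together with the feasibility machinery of Propositions~\ref{bsensi} and~\ref{bsensd1}. If $\varphi(G)-\chi(G)\ge k$, then picking one b-vertex per colour class in a b-$\varphi(G)$-colouring of $G$ and taking the subgraph $A$ induced by the union of their closed neighbourhoods yields a b-$\varphi(G)$-atom that is \emph{feasible} in $G$ (the b-colouring of $G$ itself extends it past $N(A)$) and satisfies $\chi(A)\le\chi(G)\le\varphi(G)-k$; such an $A$ has order at most $\varphi(G)^{2}$ and is a $\chi$-$(k-1)$-unbounded b-atom. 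For the other direction one starts from a candidate feasible b-$t$-atom $A$ found by the search: feasibility lets one push the b-$t$-colouring of $A$ to $N(A)$ and then greedily to all of $G$, exactly as in Proposition~\ref{bsensi}, which forces $\varphi(G)\ge t$; combining this with $\varphi(A)-\chi(A)\ge k$ and with a bound on $\chi(G)$ obtained from the extension is what recovers $\varphi(G)-\chi(G)\ge k$, and this is the delicate point of the argument.

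The algorithm then proceeds in rounds $t=1,2,\dots$: round $t$ enumerates the $O(n^{t^{2}})$ induced subgraphs of $G$ on at most $t^{2}$ vertices and tests each for being a feasible b-$t$-atom $A$ with $\varphi(A)-\chi(A)\ge k$; it returns ``yes'' on the first one encountered, and ``no'' as soon as a round $t$ is reached in which $G$ has no feasible b-$t$-atom at all, which by Propositions~\ref{bsensi} and~\ref{bsensd1} is the round $t=\varphi(G)+1$. Since each round costs $O(n^{t^{2}})$ and there are at most $\varphi(G)+1$ rounds, the total running time is $O(n^{\varphi(G)^{2}})$, so b-$\chi$-$k$-BOUNDED is in XP when parameterized by $\varphi(G)$.

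The main obstacle is the step that turns a found atom back into a statement about $G$: because $\varphi$ is not monotone under induced subgraphs (e.g.\ $K^{-}_{n,n}$ contains induced subgraphs with much larger b-chromatic number than itself), merely exhibiting a $\chi$-$(k-1)$-unbounded b-atom inside $G$ does not by itself certify $\varphi(G)-\chi(G)\ge k$; one genuinely has to carry the feasibility condition through, and has to keep track of the chromatic number of the extension, so that the atom found in the bounded search really does reflect the gap $\varphi(G)-\chi(G)$. The secondary point, routine once Propositions~\ref{bsensi}, \ref{bsensd1} and Observation~\ref{taillbatom} are in hand, is that the enumeration can be cut off at $t=\varphi(G)$, which is exactly what keeps the exponent equal to $\varphi(G)^{2}$.
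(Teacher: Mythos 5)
Your proposal has a genuine gap at exactly the step you yourself flag as ``the delicate point'': you never show how to recover $\varphi(G)-\chi(G)\ge k$ from a found feasible b-$t$-atom $A$ with $\varphi(A)-\chi(A)\ge k$. Two things go wrong there. First, a single feasible b-$t$-atom does not force $\varphi(G)\ge t$: Proposition~\ref{bsensi} only concludes something about $\varphi(G)$ under the additional hypothesis that $G$ has \emph{no} feasible minimal b-$t'$-atom for $t'>t$, and its proof, when the greedy extension gets stuck, merely produces a larger feasible atom rather than a b-$t$-colouring. Second, and more fundamentally, the ``bound on $\chi(G)$ obtained from the extension'' that your argument needs does not exist: $\chi(G)$ is not controlled by the chromatic number of any induced subgraph of order at most $\varphi(G)^2$ (triangle-free graphs of large chromatic number already defeat this), and even when the greedy extension succeeds it only certifies $\chi(G)\le\varphi(G)$, i.e.\ the trivial case $k=0$. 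So the ``yes''-certificate returned by your search does not certify the answer to b-$\chi$-$k$-BOUNDED, and the proposal as written does not prove the statement.

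For comparison, the paper proves this corollary only by a one-line appeal to Corollary~\ref{corrr}: the intended algorithm is simply to enumerate the induced subgraphs of order at most $\varphi(G)^2$ and test each for being a $\chi$-$(k-1)$-unbounded b-atom, with no feasibility check at all. Your observation that exhibiting such an atom does not by itself decide $\varphi(G)-\chi(G)\ge k$ for $G$ --- because $\varphi$ is not monotone under induced subgraphs and $\chi(G)$ is not locally determined --- is a legitimate criticism of that terse argument; what the atom search naturally decides is the hereditary property of Theorem~\ref{minnn} (whether some induced subgraph has gap exceeding $k-1$), not the gap of $G$ itself. But your attempted repair via feasibility is left unfinished at its crucial step, so it does not close that hole.
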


Since a graph $G$ is b-perfect if and only if it does not contain imperfect b-atoms, we have the following theorem:
\begin{theo}
The number of imperfect b-atoms is finite. A graph is an imperfect b-atom if and only if it is in the family $\mathcal{F}$ ( Figure \ref{bfamily}).
\end{theo}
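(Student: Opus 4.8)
The plan is to prove the characterization first, since the finiteness assertion then follows for free: once the imperfect b-atoms are shown to be precisely the graphs $F_1,\dots,F_{22}$ of $\mathcal F$, there are exactly $22$ of them. I would establish the equivalence by proving the two inclusions separately, leaning throughout on the two characterizations of b-perfection already recorded, namely Theorem \ref{theoperfect} (b-perfection via the forbidden family $\mathcal F$) and Corollary \ref{corrr} (b-perfection via the absence of induced imperfect b-atoms).

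For the inclusion $\mathcal F\subseteq\{\text{imperfect b-atoms}\}$ I would invoke the note immediately preceding Theorem \ref{theoperfect}: each $F_i$ is a b-$t$-atom for some $t$, hence admits a b-$t$-coloring and satisfies $\varphi(F_i)\ge t$. It then suffices to compare with the chromatic number. For $F_1,F_2,F_3$, the minimal bipartite b-$3$-atoms, we have $\chi=2<3\le\varphi$. For $F_4,\dots,F_{22}$, which are b-$4$-atoms admitting a proper $3$-coloring, we have $\chi\le 3<4\le\varphi$ (and indeed $\chi=3$, since a bipartite non-b-perfect graph would contain one of $F_1,F_2,F_3$, which these graphs avoid). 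In every case $\varphi(F_i)>\chi(F_i)$ and $F_i$ is a b-$t$-atom, so $F_i$ is an imperfect b-atom.

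For the converse, let $G$ be an imperfect b-atom, so $G$ is a b-$t$-atom with $\varphi(G)>\chi(G)$. Taking $G$ itself as the induced subgraph witnessing $\varphi>\chi$ shows that $G$ is not b-perfect, so Theorem \ref{theoperfect} furnishes an induced member $F_i\in\mathcal F$ of $G$. The remaining step is to upgrade this \emph{containment} to the \emph{equality} $G=F_i$. Here I would combine Theorem \ref{theoperfect} with Corollary \ref{corrr}: both $\mathcal F$ and the set of imperfect b-atoms are forbidden-induced-subgraph descriptions of the same class (the b-perfect graphs), so their minimal obstructions coincide; since the graphs of $\mathcal F$ are pairwise incomparable for the induced-subgraph order and each is already an imperfect b-atom, one argues that an imperfect b-atom containing $F_i$ must in fact equal $F_i$.

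The hard part will be exactly this last upgrade, and I expect it, rather than the two easy inclusions, to carry the whole weight of the argument (and to be what really delivers finiteness). The equivalence "$G$ has no induced imperfect b-atom iff $G$ has no induced $F_i$" only forces the \emph{minimal} elements of the two families to agree; to reach the literal statement one must show that every imperfect b-atom is itself minimal, i.e. that it cannot properly contain a smaller one. I would attempt this by a structural analysis of the b-$t$-atom partition: enlarging a fixed $F_i$ while keeping the b-atom structure (bounded in order by $t^2$ via Observation \ref{taillbatom}, with each center of degree at least $t-1$) should either destroy the partition into the classes $D_1,\dots,D_t$ or raise $\chi$ to meet $\varphi$, in either case leaving the class of imperfect b-atoms. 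This is the delicate point where the proof must be made airtight.
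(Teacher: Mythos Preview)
The paper offers no real proof here: the entire argument is the single sentence ``The previous theorem is a consequence of Theorem~\ref{theoperfect}.'' So there is nothing to compare your proposal against on the level of technique; your write-up is already far more careful than the paper's.

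More importantly, you have put your finger exactly on the weak point, and it is genuinely weak: the ``upgrade from containment to equality'' that you flag as the hard step is in fact \emph{impossible}, because the theorem as literally stated is false. Take $G=P_5\cup P_2$. Colouring the $P_5$ by $1$--$2$--$3$--$1$--$2$ and the $P_2$ by $1$--$2$ shows that $G$ is a b-$3$-atom with $D_1,D_2$ of size $3$ and $D_3=\{c_3\}$, and that $\varphi(G)=3$; since $G$ is bipartite, $\chi(G)=2$. Hence $G$ is an imperfect b-atom, yet $G\notin\mathcal F$ (the only $7$-vertex bipartite member of $\mathcal F$ is $F_2=P_3\cup P_4$). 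Your proposed structural analysis ``enlarging a fixed $F_i$ while keeping the b-atom structure should either destroy the partition or raise $\chi$'' therefore cannot be made airtight; this example does neither.

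What \emph{does} follow from Theorem~\ref{theoperfect} and Corollary~\ref{corrr} is the statement with ``minimal'' inserted: the \emph{minimal} imperfect b-atoms are exactly the graphs of $\mathcal F$. Both families are forbidden-induced-subgraph descriptions of the same hereditary class, so their minimal elements coincide; the $F_i$ are pairwise incomparable (this is part of the Ho\`ang et al.\ result), and each is an imperfect b-atom by the first half of your argument, which is entirely correct. That is almost certainly the intended reading, and your outline up to the ``hard part'' already proves it. The finiteness assertion then refers to the minimal obstructions, and is immediate.
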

The previous theorem is a consequence of Theorem \ref{theoperfect}.
Remark that if we can prove that every minimal b-4-atom except $K_4$ contains an induced subgraph of the family $\mathcal{F}$, then, using Theorem \ref{minnn}, we obtain another proof of Theorem \ref{theoperfect}. 
\section{b-chromatic and b-relaxed chromatic numbers}
In this section we consider the b-relaxed number relatively to the b-chromatic number and prove equality for trees and graphs of girth ar least $7$.
\begin{lem}\label{prout}
A minimal b-$t$-atom has at most $t$ connected components.
\end{lem}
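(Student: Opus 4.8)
The plan is to argue directly from the definition of a b-$t$-atom together with the connectivity forced by the centers $c_1,\ldots,c_t$. Recall that a b-$t$-atom $A$ has vertex set partitioned into $D_1,\ldots,D_t$, each $D_i$ an independent set containing the center $c_i$, and for every pair $i\neq j$ the vertex $c_i$ has a neighbor in $D_j$. First I would observe that the center $C(A)=\{c_1,\ldots,c_t\}$ is contained in a single connected component of $A$: indeed, for any $i\neq j$, $c_i$ has a neighbor $d_{ij}\in D_j$, so $c_i$ and $d_{ij}$ lie in the same component, and since $c_j$ also has a neighbor in $D_i$ the picture is that all of $c_1,\ldots,c_t$ are pairwise joined by paths of length $2$ through the $D_j$'s; hence $C(A)$ lies in one component $K_0$.

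Next I would bound the number of \emph{other} components. Let $K$ be a connected component of $A$ different from $K_0$. Then $K$ contains no center, so $K\subseteq \bigcup_{i=1}^t (D_i\setminus\{c_i\})$, and in particular $K$ meets at least one set $D_i$. The key point is that $K$ cannot be an isolated vertex if $A$ is \emph{minimal}: a vertex $v\in D_i$ that is isolated in $A$ contributes nothing to any adjacency required by the definition, so deleting it leaves a b-$t$-atom, contradicting minimality. Hence every component of $A$, whether $K_0$ or not, contains at least one edge, so it meets at least two of the sets $D_1,\ldots,D_t$ (an edge has its two endpoints in different parts since each $D_i$ is independent). Actually, to get the bound $t$ I would instead count as follows: assign to each component $K\neq K_0$ an index $i(K)$ such that $c_{i(K)}$ has a neighbor inside $K$ — but centers all live in $K_0$, so that approach needs adjustment; the cleaner route is to use minimality to show each non-$K_0$ component must be "attached" in a way that consumes a distinct color.

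Here is the argument I would actually push through. Suppose $A$ is a minimal b-$t$-atom with components $K_0\supseteq C(A)$ and $K_1,\ldots,K_s$. For each $K_\ell$ with $\ell\geq 1$, pick any vertex $v_\ell\in K_\ell$; it lies in some $D_{i_\ell}$. I claim the map $\ell\mapsto i_\ell$ can be chosen injective, and moreover no $i_\ell$ equals an index "already used up" — more precisely, I would show by minimality that if two components $K_\ell,K_{\ell'}$ ($\ell\neq\ell'$) were entirely contained in $\bigcup_{i\in S}D_i$ for a set $S$ of size $<2$ shared between them, one of them could be deleted while preserving all required center-adjacencies (since centers are in $K_0$, deleting vertices of $K_\ell$ never destroys an adjacency "$c_i$ has a neighbor in $D_j$" unless $K_\ell$ supplies the \emph{only} such neighbor for some pair). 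This last clause is the crux: a non-$K_0$ component $K_\ell$ is \emph{needed} only if for some pair $i,j$ it contains the unique neighbor of $c_i$ in $D_j$ — but $c_i\in K_0$, and $K_\ell$ is a different component, so $K_\ell$ contains no neighbor of $c_i$ at all. Therefore every non-$K_0$ component is deletable, forcing $s=0$ and $A=K_0$ connected — which would give the far stronger statement that a minimal b-$t$-atom is connected.

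Since the lemma only claims $\le t$ components, I expect the intended proof is the weaker counting one: each component, by minimality, must contain a vertex that is a \emph{b-vertex} of the induced coloring (otherwise its vertices are superfluous), a b-vertex sees all $t$ colors hence lies in a component meeting all $t$ parts, so in fact there can be at most... again this over-proves. The honest plan, then, is: (1) show each component of a minimal b-$t$-atom contains at least one edge (delete isolated vertices); (2) more carefully, show each component must contain a vertex incident to the "essential" structure, and count colors. \textbf{The main obstacle} I anticipate is pinning down exactly which vertices minimality allows us to delete: the definition of b-$t$-atom requires only that \emph{centers} see all other parts, so a whole peripheral component looks deletable, and I would need to check whether the authors' notion of "minimal" (no b-$t$-atom is a proper induced subgraph) really does collapse peripheral components, or whether the bound $t$ is simply the trivial bound "at most $t$ parts, at least one center-containing part, and at most $t-1$ further parts each forced to be nonempty and in its own component" — in which case the proof is a one-line pigeonhole on the $t$ color classes. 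I would resolve this by first settling the isolated-vertex reduction, then arguing that each of the $t$ color classes $D_i$, after this reduction, lies in the component of $c_i$, so the components are exactly among $\{$component of $c_1,\ldots,$ component of $c_t\}$, giving at most $t$.
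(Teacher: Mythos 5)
Your proposal contains the right deletion mechanism but attaches it to two false claims, so as written it does not close. The first false claim is that the center $C(A)$ lies in a single connected component: the condition ``$c_i$ has a neighbor in $D_j$'' only places $c_i$ in the same component as \emph{some} vertex of $D_j$, not in the same component as $c_j$. The paper's own Figure \ref{batom} gives a counterexample: $P_3\cup P_3\cup P_3$ is a minimal b-$3$-atom whose three centers sit in three different components (each center is the middle vertex of one of the paths). Consequently your ``far stronger statement that a minimal b-$t$-atom is connected'' is false, and the fallback claim at the end --- that after deleting isolated vertices each color class $D_i$ lies in the component of $c_i$ --- fails on the same example, where $D_1$ meets all three components.

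What does work, and is essentially the paper's proof, is the deletion step you state in the middle of the proposal, applied to the correct premise. If a minimal b-$t$-atom $A$ had more than $t$ components, then since $|C(A)|=t$, pigeonhole gives a component $K$ containing no center vertex. Every adjacency required by the definition has the form ``$c_i$ is adjacent to some vertex of $D_j$'', and any witness of such an adjacency lies in the component of $c_i$, hence not in $K$; moreover each $D_j\setminus V(K)$ still contains $c_j$ and is still an independent set of size at most $t$. So $A-K$ is a b-$t$-atom and a proper induced subgraph of $A$, contradicting minimality. You wrote exactly this step (``$K_\ell$ contains no neighbor of $c_i$ at all, therefore every non-$K_0$ component is deletable''), but tied it to the nonexistent single center-containing component $K_0$ rather than to a component missing all $t$ centers; with that substitution, and dropping the connectivity and $D_i$-localization claims, your argument becomes the paper's one-line proof.
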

\begin{proof}
Suppose that a minimal b-$t$-atom $G$ has more than $t$ connected component. By definition, at least one connected component $A$ of $G$ does not contain a vertex of $C(G)$. Since $G-A$ is also a b-$t$-atom, $G$ is not minimal.
\end{proof}
Note that a minimal b-$t$-atom $G$ contains a center $C(G)$ and the remaining vertices of $G$ are neighbors of vertices of $C(G)$.
\begin{prop}\label{tree44}
For a tree $T$, we have $\varphi(T)=\varphi_r(T)$.
\end{prop}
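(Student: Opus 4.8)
The plan is to use Proposition \ref{poutineroidumonde}: it suffices to show that when $t=\varphi_r(T)$, every minimal b-$t$-atom contained in $T$ is feasible in $T$. So let $A$ be a minimal b-$t$-atom which is an induced subgraph of the tree $T$, with center $C(A)=\{c_1,\ldots,c_t\}$ and defining partition $D_1,\ldots,D_t$. I must produce a b-$t$-coloring of $V(A)$ that extends to $N(A)$ without new colors; the natural candidate is the coloring induced by $D_1,\ldots,D_t$ (giving $c_i$ color $i$), and the task is to recolor, if necessary, and then argue the extension to $N(A)$ goes through.

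First I would exploit the tree structure of $A$ itself. Since $A\subseteq T$, $A$ is a forest; by Lemma \ref{prout} it has at most $t$ components, and each non-center vertex is (by the remark after Lemma \ref{prout}) a neighbor of some $c_i$. So $A$ looks like a bounded collection of "stars with centers linked sparsely" — in fact each $D_i$ is independent and in a tree two centers $c_i,c_j$ share at most ... here I would extract the crucial sparsity fact: in a tree, a vertex $v\in N(A)$ has at most one neighbor in $A$ unless that neighbor-set induces a path, but $N(A)$ vertices adjacent to two vertices $u,w$ of $A$ would create a cycle through the $A$-path from $u$ to $w$ plus $v$ — impossible unless $u,w$ are in different components of $A$. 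This is the key leverage: \textbf{every vertex of $N(A)$ has at most one neighbor in each connected component of $A$, and if it has neighbors in two different components it still has at most one per component.} More importantly, for the greedy extension I want: each $w\in N(A)$ sees few colors of $A$, so a free color remains — but a priori $w$ could be adjacent to up to $t-1$ differently-colored $A$-vertices if $A$ has many components, so this alone is not enough and I will need to also recolor outside-$A$ vertices greedily in a good order.

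The cleaner route, which I would actually carry out: argue directly that $\varphi_r(T)\le\varphi(T)$ by taking a b-$\varphi_r(T)$-coloring of a minimal b-$t$-atom $A\subseteq T$ and extending it to all of $T$ greedily, processing the vertices of $T\setminus V(A)$ in order of increasing distance from $V(A)$ (a BFS order from the set $V(A)$). When a vertex $w$ at distance $d\ge 1$ is reached, all of its neighbors already colored are either its unique parent in the BFS tree (distance $d-1$) plus possibly neighbors at distance $d$ that come earlier — but in a tree $w$ has exactly one neighbor at distance $d-1$ from $V(A)$ (its parent), and any other neighbor is at distance $d+1$, hence not yet colored; so when $w$ is processed it has at most one colored neighbor. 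Wait: if $w$ itself lies in $N(A)$, i.e. $d=1$, then $w$ may have several neighbors in $V(A)$, but — and here is where the forest structure of $A$ re-enters — those neighbors all lie in pairwise distinct components of $A$ (two neighbors in one component would close a cycle), so in the worst case $w$ has as many colored neighbors as $A$ has components, which is $\le t$. That is the only genuinely delicate case. So the main obstacle is precisely this: ruling out that a vertex of $N(A)$ is blocked by seeing all $t$ colors. To handle it I would strengthen the choice of $A$ to a \emph{minimal} b-$t$-atom and show a minimal b-$t$-atom in a tree is connected, or more precisely that one can always pick the b-$t$-coloring of $A$ so that each $w\in N(A)$ misses a color; this is where I'd invoke Lemma \ref{prout} together with a counting/parity argument on how colors can be distributed among the $\le t$ components, permuting colors per-component (each component's coloring can be composed with an arbitrary color permutation as long as centers still see all colors across the whole atom) to dodge the single conflict at each $w$. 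Once every vertex of $T\setminus V(A)$ gets a legal color from $\{1,\ldots,t\}$, we have a proper $t$-coloring of $T$ in which colors $1,\ldots,t$ each still have a b-vertex (the $c_i$'s, whose neighborhoods in $A$ are untouched), so $\varphi(T)\ge t=\varphi_r(T)$, and the reverse inequality $\varphi(T)\le\varphi_r(T)$ is immediate from the definition; hence equality.

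I expect the BFS-extension to be routine for all vertices at distance $\ge 2$ from $A$ (one colored neighbor, $t\ge 2$ colors available), so the entire weight of the proof sits in the distance-$1$ analysis — showing the coloring of the atom can be chosen (using component-wise color permutations and the bound of $\le t$ components, with at least one component containing $\ge 1$ center but the count of centers being exactly $t$ spread over $\le t$ components forcing most components to be single paths) so that no $w\in N(A)$ is surrounded by all $t$ colors. A slick way to finish that sub-case: if some $w$ were adjacent to vertices of all $t$ colors in $A$, then since those $t$ neighbors lie in $t$ distinct components, $A$ has exactly $t$ components each contributing one neighbor of $w$; adding $w$ and these edges to $A$ would give a connected graph, and one checks $w$ together with $N[w]\cap V(A)$ already forms (or is contained in) a smaller b-$t$-atom — here a $K_{1,t}$-like structure whose leaves are the $c_i$ — contradicting minimality of $A$ unless $A$ itself was this small configuration, which is easily colored so as to extend. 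That contradiction-with-minimality step is the part I'd need to nail down carefully.
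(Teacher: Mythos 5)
Your setup is sound and matches the paper's: reduce via Proposition \ref{poutineroidumonde} (or, equivalently, extend a b-$t$-coloring of a minimal b-$t$-atom $A$ to all of $T$; your BFS observation that every vertex at distance $\ge 2$ from $A$ has only one already-colored neighbor is correct and disposes of everything except $N(A)$). You also correctly isolate the one hard case: a vertex $w\in N(A)$ adjacent to $t$ vertices of $A$ carrying all $t$ colors, which forces $A$ to have exactly $t$ components, one neighbor of $w$ in each. But both of your proposed resolutions of that case fail. First, a minimal b-$t$-atom induced in a tree need not be connected: $P_3\cup P_3\cup P_3$ is a minimal b-$3$-atom and is easily realized as an induced subgraph of a tree, so "show a minimal b-$t$-atom in a tree is connected" is not available. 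Second, the "slick" finish is wrong: the star consisting of $w$ and its $t$ differently-colored neighbors is $K_{1,t}$, which is \emph{not} a b-$t$-atom for $t\ge 3$ (its leaves are adjacent to only one vertex, so they cannot serve as centers $c_i$ required to have a neighbor in every other $D_j$), so no contradiction with the minimality of $A$ arises. The step you flagged as needing to be "nailed down carefully" is precisely the crux of the proof, and the route you propose for it does not go through.

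The paper resolves this case differently, by recoloring rather than by contradiction. When some $u\in N(A)$ has $t$ neighbors in $N[A]$, the $t$ components of $A$ are all stars. If $u$ were adjacent to all $t$ centers, then $A\cup\{u\}$ would be a b-$(t+1)$-atom (each $c_i$ gains the neighbor $u$, and $u$ sees all $t$ colors), contradicting $\varphi_r(T)=t$; hence $u$ has a neighbor $v$ that is a leaf of some star $N[c_i]$ (or a vertex of $N(A)$ with at most two colored neighbors). Since $t\ge 3$, the center $c_i$ has another leaf $w\in N(c_i)\setminus\{v\}$, and swapping the colors of $v$ and $w$ keeps every $c_j$ a b-vertex (leaves have no other neighbor inside $A$) while freeing a color for $u$. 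Your vaguer suggestion of "per-component color permutations" points in this direction but is also delicate as stated, since a global permutation of colors inside one component changes which color classes its centers represent and can leave some color without a b-vertex; the paper's swaps are local to a single star's leaves precisely to avoid this. As written, your argument has a genuine gap at its central step.
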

\begin{proof}
Let $t=\varphi_r(T)$. By Proposition \ref{poutineroidumonde}, it suffices to prove that every minimal b-$t$-atom is feasible to have $\varphi(T)=\varphi_r(T)$. Let $T'$ be a minimal b-$t$-atom and let $N[T']=V(T')\cup N(T')$. By Lemma \ref{prout}, $T'$ has at most $t$ connected components. 
Let $u$ be a vertex of $N(T')$ with a maximal number of neighbors in $N[T']$.
Since $T'$ has at most $t$ connected components and $T$ is a tree, $u$ has at most $t$ neighbors in $N[T']$.

Our proof consists in extending the coloring of $T'$ induced by $D_1$, \ldots, $D_t$ to $N(T')$ using colors from $\{1,\ldots t\}$. For $t=2$, the proof is trivial since the only minimal b-2-atom is $P_2$ and we can easily extend the coloring to $N(P_2)$. Thus we can suppose that $t\ge 3$.
If $u$ has at most $t-1$ neighbors in $N[T']$, then we can extend the coloring. Thus, we suppose that $u$ has $t$ neighbors in $N[T']$. In this case, $T'$ has $t$ connected components which are all stars.
Each vertex of $N(u)\cap N[T']$ is either a vertex of a connected component of $T'$ or a vertex in $N(T')$ which is adjacent to one vertex of $V(T')$. In these two cases the vertices of $N(u)\cap N[T']$ should be in or be adjacent to vertices of disjoint connected components of $T'$.
Thus the vertices of $N(u)\cap N(T')$ have at most two neighbors in $N[T']$: the vertex $u$ and another vertex of $T'$ (otherwise, there is a cycle in $T$) . We begin by giving a color from $\{1,\ldots,t\}$ to the vertices of $N(T')\setminus \{u\}$.
The vertex $u$ can not be adjacent to all vertices of $C(T')$ since otherwise it would contradict $t=\varphi_r(T)$. Let $v\in N[T']\setminus C(T')$ be a neighbor of $u$. If $v\in N(T')$, then $v$ has at most two neighbors in $N[T']$ and $v$ can be recolored in order to color $u$.
If all neighbors of $u$ are in $T'$, then $v\in N(c_i)$, for $i\in\{1,\ldots, t\}$ and we can exchange the color of $v$ with the color of a vertex $w\in N(c_i)\setminus \{v\}$ in order to color $u$ (since $t\ge 3$, $N(c_i)\setminus \{v\}$ is not empty). Finally, the vertices of $N(w)\cap N(T')$ can be recolored if we have obtained an improper coloring by recoloring $w$.
\end{proof}
The \emph{girth} of a graph $G$ is the length of a smallest cycle in $G$.
We finish this paper by proving that when a graph $G$ has sufficiently large girth, we have $\varphi(G)=\varphi_r(G)$, thus extending Proposition \ref{tree44}.
\begin{theo}
Let $G$ be a graph with girth $g$ and $\varphi_r(G)\ge 3$. If $g\ge 7$, then $\varphi(G)=\varphi_r(G)$.
\end{theo}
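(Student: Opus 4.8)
The plan is to follow the proof of Proposition~\ref{tree44} closely, using the girth hypothesis wherever acyclicity was used there. Set $t=\varphi_r(G)$, so $t\ge 3$. By Proposition~\ref{poutineroidumonde} it suffices to show that every minimal b-$t$-atom $A$ of $G$ is feasible, that is, that the colouring of $A$ by its classes $D_1,\dots,D_t$ extends to $N(A)$ using only the colours $1,\dots,t$; write $N[A]=V(A)\cup N(A)$. First I would record the consequences of girth~$7$ that will be used: $G$ has no cycle of length at most $6$, so $A$ is neither $C_3$ nor $C_5$; for every vertex $x$, $N(x)$ is independent and any two vertices of $N(x)$ are at distance at least $5$ in $G-x$; and every ball of radius $2$ in $G$ induces a tree. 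Combined with Lemma~\ref{prout} (so $A$ has at most $t$ connected components) and with the fact that every non-centre vertex of $A$ lies in the neighbourhood of a centre vertex, this makes $N[A]$ locally tree-like.

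Next I would run the greedy extension of the colouring of $A$ to $N(A)$, giving each vertex the least colour missing from its already-coloured neighbourhood, and suppose it fails at some $u\in N(A)$, so that $u$ has coloured neighbours of all $t$ colours in $N[A]$. As in Proposition~\ref{tree44}, I would first bound $|N(u)\cap N[A]|$: if $u$ had two neighbours meeting the same component of $A$ --- either two neighbours in $V(A)$, or two neighbours in $N(A)$ each adjacent to that component --- then, since each component of $A$ has small radius around its centres, $G$ would contain a cycle of length at most $6$, a contradiction; hence $u$ meets each component at most once, so $|N(u)\cap N[A]|\le t$, and if this number is at most $t-1$ then $u$ can simply be coloured. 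So one may assume $|N(u)\cap N[A]|=t$, which forces each component of $A$ to be a star and each neighbour of $u$ in $N(A)$ to have at most two neighbours in $N[A]$.

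It remains to settle this tight case, and that is the step I expect to be the crux. As in the tree proof, $u$ cannot be adjacent to all of $C(A)$: otherwise the subgraph of $G$ induced by $V(A)\cup\{u\}$ would be a b-$(t+1)$-atom, contradicting $t=\varphi_r(G)$ by Theorem~\ref{bsensd2}. So $u$ has a neighbour $v\notin C(A)$, and I would free a colour for $u$ by recolouring $v$: directly if $v\in N(A)$, since then $v$ has at most two coloured neighbours and $t\ge 3$; and, if $v\in V(A)$ with $v\in N(c_i)$, by swapping the colour of $v$ with that of another vertex of $N(c_i)$ (non-empty because $t\ge 3$) and repairing the few vertices whose colour this disturbs. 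The delicate point, which is absent in the tree case, is that $A$ and $N(A)$ may contain cycles --- necessarily of length at least $7$ --- so one has to check that the forced recolourings do not propagate around such a cycle and fail to close up. This is exactly where girth~$7$, rather than $5$ or $6$, is needed: every vertex whose colour is disturbed lies in a ball of radius $2$ about a recoloured vertex, such balls are acyclic, and so the local repair from the proof of Proposition~\ref{tree44} still applies and terminates. Finally, the hypothesis $t\ge 3$ is essential --- for $t=2$ one checks directly that $\varphi(G)=\varphi_r(G)=2$ --- and girth~$7$ removes $C_3$ and $C_5$, leaving $P_5$, $P_3\cup P_4$ and $P_3\cup P_3\cup P_3$ as the only possible minimal b-$3$-atoms, all of which are trees, so the base case $t=3$ is subsumed by the argument above.
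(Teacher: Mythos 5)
Your high-level skeleton matches the paper's (reduce to feasibility of every minimal b-$t$-atom via Proposition~\ref{poutineroidumonde}, use the girth to bound how a vertex of $N(A)$ meets $N[A]$, recolour to free a colour for a stuck vertex, and extract a b-$(t+1)$-atom when recolouring is impossible), but two steps do not survive the passage from trees to girth-$7$ graphs. First, the claim that a stuck vertex $u$ meets each connected component of $A$ at most once, and the ensuing conclusion that every component of $A$ is a star, is unjustified and in general false: a component of a minimal b-$t$-atom may contain several centres (already $P_5$, a connected minimal b-$3$-atom, is not a star), and $u$ may have two neighbours in the same component provided they are at distance at least $5$ inside it, which creates no cycle shorter than $7$. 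What the girth actually gives, and what the paper proves, is the weaker statement that $u$ has at most one neighbour in each closed neighbourhood $N[c_i]$, whence $|N(u)\cap N[A]|\le t$; nothing about stars follows.

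Second, and more seriously, the step you yourself flag as the crux --- freeing a colour for $u$ by swapping the colour of a neighbour $v\in N(c_i)$ with that of another vertex $w\in N(c_i)$ and ``repairing'' the damage --- is precisely where the tree argument breaks down and where the paper does almost all of its work. In a tree, $w$ has no other neighbour inside the atom, so the swap preserves properness on $A$ and keeps every centre a b-vertex; in a girth-$7$ graph, $w$ may have a neighbour $\overline{w}$ elsewhere in $V(A)$ carrying exactly the colour you want to give it, so the naive swap produces an improper colouring of $A$ itself, and no radius-$2$ acyclicity argument repairs that. The paper's proof handles this with the structural properties i)--iii), the partition of $N(A)$ into $N_1,N_2,N_3$, and the coordinated exchanges of Subcases 1.2.1--1.2.3 (swapping $v_1$ with $v_2$ and $c^{v_1}$ with $c^{v_2}$ simultaneously, or cyclically permuting colours on $A_*^i$), none of which is a ``local repair'' inherited from Proposition~\ref{tree44}. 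Asserting that the tree repair ``still applies and terminates'' is where your proof has a genuine hole rather than a deferred routine verification.
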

\begin{proof}
Let $t=\varphi_r(G)$.
By Proposition \ref{poutineroidumonde}, it suffices to prove that every minimal b-$t$-atom is feasible to have $\varphi(G)=\varphi_r(G)$. Let $A_t$ be a minimal b-$t$-atom. 
Our proof consists in extending the coloring of $A_t$ induced by $D_1$, \ldots, $D_t$ to $N(A_t)$ using colors from $\{1,\ldots t\}$. Thus, we consider that the vertices of $A_t$ are already colored.

For a vertex $u\in N(A_t)$, we denote by $I_c(u)$ the set $\{ i\in\{1,\ldots,t\} |\ \exists v\in N(u)\cap N[c_i]\}$. For a vertex $u\in V(A_t)$, we denote by $c^u$ a neighbor of $u$ in $C(A_t)$ if $u\notin C(A_t)$ or the vertex $u$ itself if $u\in C(A_t)$.
Finally, we denote by $N[A_t]$, the set of vertices $V(A_t)\cup N(A_t)$.
In the different cases, when we describe a cycle of length at most $k$ by $u_1$-$\ldots$-$u_k$, it is assumed that, depending the configuration, consecutive symbols can denote the same vertex.
In this proof, any considered vertex is supposed to be in $N[A_t]$.
We begin by proving the following properties:
\begin{enumerate}
\item[i)] No vertex of $N(A_t)$ is adjacent to two vertices of $N[c_i]$, for $1\le i\le t$;
\item[ii)] If $u,v\in N(A_t)$ and $i\in I_c(u)\cap I_c(v)$, then $u$ and $v$ are not adjacent and have no common neighbor in $N(A_t)-c_i$;
\item[iii)] If $u,v \in N[c_i]$ and $u',v'\in N[c_j]$, $u\neq v$, $u'\neq v'$, for some $i$ and $j$, $1\le i < j \le t$, then the subgraph induced by $\{u,v,u',v'\}$ contains at most one edge.
\end{enumerate}
\begin{description}
\item[i)] If $u$ is adjacent to two vertices of $N[c_i]$, for some $i$, $1\le i\le t$, then $u$ is in a cycle of length at most $4$. This cycle contains $u$, $c_i$ and one or two vertices of $N[c_i]$.
\item[ii)] If $u$ and $v$ are adjacent or have a common neighbor, then $u$ and $v$ belong to a cycle of length at most $6$. This cycle contains $u$, $v$, vertices of $N[c_i]$ and possibly the common neighbor of $u$ and $v$ in $N(A_t)-c_i$, for $i$ an integer such that $i\in I_c(u)\cap I_c(v)$.
\item[iii)] If the subgraph induced by $\{u,v,u',v'\}$ contains at least $2$ edges, then there is a cycle of length at most $6$ in $G$. This cycle is $u$-$v$-$c_i$ if $u$ and $v$ are adjacent, $u'$-$v'$-$c_j$ if $u'$ and $v'$ are adjacent or the cycle $u$-$c_i$-$v$-$u'$-$v'$-$c_j$, otherwise.
\end{description}

We are going to prove that either each vertex $u\in N(A_t)$ can be colored with colors from $\{1,\ldots,t\}$ or the graph $G$ contains a b-$(t+1)$-atom (which contradicts $\varphi_r (G)=t$).
By properties i) and ii), any vertex of $N(A_t)$ has at most $t$ neighbors in $N[A_t]$. 
Hence we may suppose that any vertex $u\in N(A_t)$ with less than $t$ neighbors in $N[A_t]$ is already colored and only consider vertices of $N(A_t)$ with $t$ neighbors in $N[A_t]$.
For a vertex $u\in N[A_t]$, a color $i$ is said to be \emph{available} for $u$ if no vertex has color $i$ in $N(u)\cap N[A_t]$ (and therefore, $u$ has no available color if the colors $1,\ldots,t$ are not available for $u$). Let $N_{*}(A_t)$ be the set of vertices in $N(A_t)$ with no available colors.

We define the following three sets:
\begin{itemize}
\item $N_1= \{u\in N(A_t) |\ N(u)\cap (V(A_t)\setminus C(A_t))\neq \emptyset ,\ N(u)\cap N(A_t)=\emptyset \}$;
\item $N_2= \{u\in N(A_t) |\ N(u)\cap (V(A_t)\setminus C(A_t))\neq \emptyset ,\ N(u)\cap N(A_t)\neq \emptyset \}$;
\item $N_3=\{u\in N(A_t) |\  N(u)\cap (V(A_t)\setminus C(A_t))= \emptyset \}$.
\end{itemize}

We can remark that $N_1\cup N_2\cup N_3=N(A_t)$.

In the remainder of the proof we will first consider the vertices of $N_1$; secondly the vertices of $N_2$; and finally the vertices of $N_3$.

\begin{description}
\item[Case 1:] vertices of $N_1$.\\
Let $u$ be a vertex of $N_1$.
We recall that, by the above assumption, $u$ has exactly $t$ neighbors in $A_t$. 
Moreover, by Property i), $|I_c(u)|=t$.
Let $c_i\in C(A_t)$. We denote by $A_{*}^{i}$ the vertices of $N(c_i)$ which have a neighbor in $N_{*}(A_t)$.
Notice that a vertex $v\in A_{*}^{i}$ can not have a neighbor $x$ in $V(A_t)\setminus \{c_i\}$ since otherwise it would create a cycle $v$-$x$-$c^x$-$v'$-$u$, for $u$ the neighbor of $v$ in $N_1\cap N_{*}(A_t)$ and $v'$ the neighbor of $u$ in $N[c^x]$. This cycle has length at most $5$, contradicting $g\ge 7$.
If for a vertex $c_i\in C(A_t)$ we have $|A_{*}^{i}|\ge 2$, we exchange the colors of the vertices of $A_{*}^{i}$ by doing a cyclic permutation of their colors. Afterwards, we obtain that some vertices of $N_1\cap N_{*}(A_t)$ have now an available color and we recolor them by any available color.
Finally, we color the vertices of $N_1$, when possible, by any available color.
Let $N_{**}(A_t)$ be the set of the remaining uncolored vertices of $N_1$.
In the following subcases, we recolor at most once the vertices of $N[c_i]$, for $i\in\{1,\ldots t\}$, since any two vertices of $N_{**}(A_t)$ can not both have neighbors in $N(c_i)$.

By considering that $N_{**}(A_t)\neq \emptyset$ (or else we have nothing more to do in Case 1)), we can suppose that for every two integers $i$, $j$, $1\le i\neq j\le t$, we have $N[c_i]\cap N[c_j]= \emptyset$. Otherwise, if there exists a vertex $u\in N_{**}(A_t)$ and a vertex $w\in N[c_i]\cap N[c_j]$, there is a cycle $u$-$v$-$c_i$-$w$-$c_j$-$v'$ of length at most $6$, for $v$ a neighbor of $u$ in $N[c_i]$ and $v'$ a neighbor of $u$ in $N[c_j]$. Thus, we obtain that if $N_{**}(A_t)\neq \emptyset$, then every vertex $c_i\in C(A_t)$ has only one neighbor of color $j$, for $1\le i\neq j\le t$, since otherwise it would contradict the minimality of $A_t$ (by removing one vertex of color $j$).

We then consider the two following subcases, for $u\in N_{**}(A_t)$.
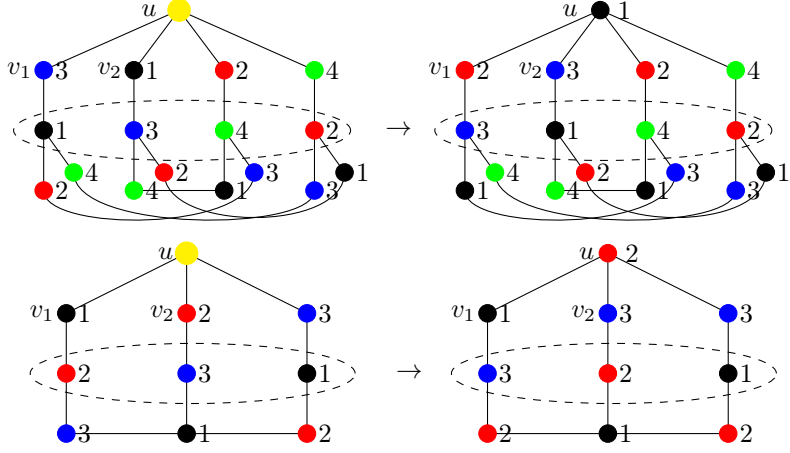
\begin{figure}[t]
\begin{center}
\begin{tikzpicture}[scale=0.8]

\draw (0,0) -- (0,2);
\draw (0,1) -- (0.5,0.3);
\draw (1.5,0) -- (1.5,2);
\draw (1.5,1) -- (2,0.3);
\draw (3,0) -- (3,2);
\draw (3,1) -- (3.5,0.3);
\draw (4.5,0) -- (4.5,2);
\draw (4.5,1) -- (5,0.3);
\draw(2.25,3) -- (0,2);
\draw(2.25,3) -- (1.5,2);
\draw(2.25,3) -- (3,2);
\draw(2.25,3) -- (4.5,2);
\draw(0,0) .. controls (0,-0.7) and (3.5,-0.7).. (3.5,0.3);
\draw(0.5,0.3) .. controls (0.5,-0.7) and (4.5,-0.7).. (4.5,0);
\draw(1.5,0) -- (3,0);
\draw(2,0.3) .. controls (2,-0.7) and (5,-0.7).. (5,0.3);

\node at (0,0) [circle,draw=red,fill=red, scale=0.7] {};
\node at (0,2) [circle,draw=blue,fill=blue, scale=0.7] {};
\node at (0,1) [circle,draw=black,fill=black, scale=0.7] {};
\node at (0.5,0.3) [circle,draw=green,fill=green, scale=0.7] {};
\node at (2,0.3) [circle,draw=red,fill=red, scale=0.7] {};
\node at (1.5,0) [circle,draw=green,fill=green, scale=0.7] {};
\node at (1.5,1) [circle,draw=blue,fill=blue, scale=0.7] {};
\node at (1.5,2) [circle,draw=black,fill=black, scale=0.7] {};
\node at (3,2) [circle,draw=red,fill=red, scale=0.7] {};
\node at (3,1) [circle,draw=green,fill=green, scale=0.7] {};
\node at (3,0) [circle,draw=black,fill=black, scale=0.7] {};
\node at (3.5,0.3) [circle,draw=blue,fill=blue, scale=0.7] {};
\node at (2.25,3) [circle,draw=yellow,fill=yellow, scale=0.9] {};
\node at (4.5,2) [circle,draw=green,fill=green, scale=0.7] {};
\node at (4.5,1) [circle,draw=red,fill=red, scale=0.7] {};
\node at (4.5,0) [circle,draw=blue,fill=blue, scale=0.7] {};
\node at (5,0.3) [circle,draw=black,fill=black, scale=0.7] {};
\node at (0.3,0) {2};
\node at (0.3,1) {1};
\node at (0.3,2) {3};
\node at (0.8,0.3) {4};
\node at (1.8,2) {1};
\node at (1.8,1) {3};
\node at (1.8,0) {4};
\node at (2.3,0.3) {2};
\node at (3.3,0) {1};
\node at (3.3,1) {4};
\node at (3.3,2) {2};
\node at (3.8,0.3) {3};
\node at (4.8,2) {4};
\node at (4.8,1) {2};
\node at (4.8,0) {3};
\node at (5.3,0.3) {1};
\draw[dashed] (2.3,1) ellipse (2.8cm and 0.5cm);
\node at (1.75,3) {$u$};
\node at (-0.4,2) {$v_1$};
\node at (1.1,2) {$v_2$};
\node at (5.9,1) {$\rightarrow$};
\draw (0+7,0) -- (0+7,2);
\draw (0+7,1) -- (0.5+7,0.3);
\draw (1.5+7,0) -- (1.5+7,2);
\draw (1.5+7,1) -- (2+7,0.3);
\draw (3+7,0) -- (3+7,2);
\draw (3+7,1) -- (3.5+7,0.3);
\draw (4.5+7,0) -- (4.5+7,2);
\draw (4.5+7,1) -- (5+7,0.3);
\draw(2.25+7,3) -- (0+7,2);
\draw(2.25+7,3) -- (1.5+7,2);
\draw(2.25+7,3) -- (3+7,2);
\draw(2.25+7,3) -- (4.5+7,2);
\draw(0+7,0) .. controls (0+7,-0.7) and (3.5+7,-0.7).. (3.5+7,0.3);
\draw(0.5+7,0.3) .. controls (0.5+7,-0.7) and (4.5+7,-0.7).. (4.5+7,0);
\draw(1.5+7,0) -- (3+7,0);
\draw(2+7,0.3) .. controls (2+7,-0.7) and (5+7,-0.7).. (5+7,0.3);

\node at (0+7,2) [circle,draw=red,fill=red, scale=0.7] {};
\node at (0+7,1) [circle,draw=blue,fill=blue, scale=0.7] {};
\node at (0+7,0) [circle,draw=black,fill=black, scale=0.7] {};
\node at (0+7.5,0.3) [circle,draw=green,fill=green, scale=0.7] {};
\node at (2+7,0.3) [circle,draw=red,fill=red, scale=0.7] {};
\node at (1.5+7,0) [circle,draw=green,fill=green, scale=0.7] {};
\node at (1.5+7,2) [circle,draw=blue,fill=blue, scale=0.7] {};
\node at (1.5+7,1) [circle,draw=black,fill=black, scale=0.7] {};
\node at (3+7,2) [circle,draw=red,fill=red, scale=0.7] {};
\node at (3+7,1) [circle,draw=green,fill=green, scale=0.7] {};
\node at (3+7,0) [circle,draw=black,fill=black, scale=0.7] {};
\node at (3.5+7,0.3) [circle,draw=blue,fill=blue, scale=0.7] {};
\node at (2.25+7,3) [circle,draw=black,fill=black, scale=0.7] {};
\node at (4.5+7,2) [circle,draw=green,fill=green, scale=0.7] {};
\node at (4.5+7,1) [circle,draw=red,fill=red, scale=0.7] {};
\node at (4.5+7,0) [circle,draw=blue,fill=blue, scale=0.7] {};
\node at (5+7,0.3) [circle,draw=black,fill=black, scale=0.7] {};
\node at (0.3+7,2) {2};
\node at (0.3+7,0) {1};
\node at (0.3+7,1) {3};
\node at (0.8+7,0.3) {4};
\node at (1.8+7,1) {1};
\node at (1.8+7,2) {3};
\node at (1.8+7,0) {4};
\node at (2.3+7,0.3) {2};
\node at (3.3+7,0) {1};
\node at (3.3+7,1) {4};
\node at (3.3+7,2) {2};
\node at (3.8+7,0.3) {3};
\node at (4.8+7,2) {4};
\node at (4.8+7,1) {2};
\node at (4.8+7,0) {3};
\node at (5.3+7,0.3) {1};
\node at (2.65+7,3) {1};
\draw[dashed] (2.3+7,1) ellipse (2.8cm and 0.5cm);
\node at (1.75+7,3) {$u$};
\node at (-0.4+7,2) {$v_1$};
\node at (1.1+7,2) {$v_2$};
\end{tikzpicture}
\begin{tikzpicture}[scale=0.8]
\draw (0,0) -- (0,2);
\draw (2,0) -- (2,2);
\draw (4,0) -- (4,2);
\draw(0,0) -- (4,0);
\draw(2,3) -- (0,2);
\draw(2,3) -- (2,2);
\draw(2,3) -- (4,2);
\node at (0,1) [circle,draw=red,fill=red, scale=0.7] {};
\node at (0,0) [circle,draw=blue,fill=blue, scale=0.7] {};
\node at (0,2) [circle,draw=black,fill=black, scale=0.7] {};
\node at (2,2) [circle,draw=red,fill=red, scale=0.7] {};
\node at (2,1) [circle,draw=blue,fill=blue, scale=0.7] {};
\node at (2,0) [circle,draw=black,fill=black, scale=0.7] {};
\node at (4,0) [circle,draw=red,fill=red, scale=0.7] {};
\node at (4,2) [circle,draw=blue,fill=blue, scale=0.7] {};
\node at (4,1) [circle,draw=black,fill=black, scale=0.7] {};
\node at (2,3) [circle,draw=yellow,fill=yellow, scale=0.9] {};
\node at (0.3,2) {1};
\node at (0.3,0) {3};
\node at (0.3,1) {2};
\node at (2.3,2) {2};
\node at (2.3,0) {1};
\node at (2.3,1) {3};
\node at (4.3,2) {3};
\node at (4.3,0) {2};
\node at (4.3,1) {1};
\draw[dashed] (2.1,1) ellipse (2.7cm and 0.5cm);
\node at (1.65,3) {$u$};
\node at (-0.4,2) {$v_1$};
\node at (1.6,2) {$v_2$};
\node at (5.7,1) {$\rightarrow$};
\draw (0+7,0) -- (0+7,2);
\draw (2+7,0) -- (2+7,2);
\draw (4+7,0) -- (4+7,2);
\draw(0+7,0) -- (4+7,0);
\draw(2+7,3) -- (0+7,2);
\draw(2+7,3) -- (2+7,2);
\draw(2+7,3) -- (4+7,2);
\node at (2+7,1) [circle,draw=red,fill=red, scale=0.7] {};
\node at (2+7,2) [circle,draw=blue,fill=blue, scale=0.7] {};
\node at (0+7,2) [circle,draw=black,fill=black, scale=0.7] {};
\node at (0+7,0) [circle,draw=red,fill=red, scale=0.7] {};
\node at (0+7,1) [circle,draw=blue,fill=blue, scale=0.7] {};
\node at (2+7,0) [circle,draw=black,fill=black, scale=0.7] {};
\node at (4+7,0) [circle,draw=red,fill=red, scale=0.7] {};
\node at (4+7,2) [circle,draw=blue,fill=blue, scale=0.7] {};
\node at (4+7,1) [circle,draw=black,fill=black, scale=0.7] {};
\node at (2+7,3) [circle,draw=red,fill=red, scale=0.7] {};
\node at (0.3+7,2) {1};
\node at (2.3+7,2) {3};
\node at (2.3+7,1) {2};
\node at (0.3+7,0) {2};
\node at (2.3+7,0) {1};
\node at (0.3+7,1) {3};
\node at (4.3+7,2) {3};
\node at (4.3+7,0) {2};
\node at (4.3+7,1) {1};
\node at (2.4+7,3) {2};
\draw[dashed] (2.1+7,1) ellipse (2.7cm and 0.5cm);
\node at (1.65+7,3) {$u$};
\node at (-0.4+7,2) {$v_1$};
\node at (1.6+7,2) {$v_2$};
\end{tikzpicture}
\end{center}
\caption{Possible configurations in Subcases 1.2.1 (on the top) and 1.2.2 (on the bottom) before (on the left) and after (on the right) the recoloring process.}
\label{lastproof}
\end{figure}

\item[Subcase 1.1:] $u$ has exactly one neighbor in $V(A_t)\setminus C(A_t)$.\\
Let $v'$ be the neighbor of $u$ in $V(A_t)\setminus C(A_t)$ and let $c'$ be the color of $v'$.
Notice that no vertex $x$ from $N[c^{v'}]$ has a neighbor $y$ in $V(A_t)\setminus N[c^{v'}]$, since otherwise it would create a cycle $u$-$v'$-$c^{v'}$-$x$-$y$-$c^y$ of length at most $6$. Consequently, we can exchange the color of $v'$ with the color of one vertex from $N(c^{v'})$ and color $u$ by $c'$.

\item[Subcase 1.2:] $u$ has more than one neighbor in $V(A_t)\setminus C(A_t)$.\\
Let $v_1$ and $v_2$ be two neighbors of $u$ in $V(A_t)\setminus C(A_t)$.
Let $c'$ be the color of $v_1$ and let $c''$ be the color of $v_2$.

If $v_1$ has a neighbor $x\in V(A_t)\setminus N[c^{v_1}]$, then there exists a cycle $u$-$v_1$-$x$-$c^x$-$v'$ in $G$, with $v'$ a neighbor of $c^x$ in $N(u)$ (in the case $c^x$ is not a neighbor of $u$). Similarly if $v_2$ has a neighbor in $V(A_t)\setminus N[c^{v_2}]$, then there is a cycle of length at most $5$ in $G$.
Consequently, we can suppose that $v_1$ has no neighbor in $V(A_t)\setminus N[c^{v_1}]$ and that $v_2$ has no neighbor in $V(A_t)\setminus N[c^{v_2}]$. If there exists a vertex of $N(c^{v_1}) \setminus \{v_1\}$ with no neighbor of color $c'$, then we exchange the color of $v_1$ with the color of this vertex and color $u$ by $c'$. If there exists a vertex of $N(c^{v_2})\setminus \{v_2\}$ with no neighbor of color $c''$, then we exchange the color of $v_2$ with the color of this vertex and color $u$ by $c''$. 
Thus, we may suppose that every vertex $w$ of $N(c^{v_1})\setminus \{v_1\}$ (of $N(c^{v_2})\setminus \{v_2\}$, respectively) has a neighbor $\overline{w}$ of color $c'$ ($c''$, respectively) in $V(A_t)$.
We consider three subscases in order to color to $u$.

\item[Subcase 1.2.1:] the vertices $v_1$ and $c^{v_2}$ have the same color and the vertices $v_2$ and $c^{v_1}$ have the same color. \\
Notice that no vertex $w\in N(c^{v_1})$ is adjacent to $c^{v_2}$ since otherwise $u$-$v_1$-$c^{v_1}$-$w$-$c^{v_2}$-$v_2$ would be a cycle of length at most $6$ in $G$. For the same reason, no vertex $w\in N(c^{v_2})$ is adjacent to $c^{v_1}$.
Thus, by Property iii), no vertex $w \in N[c^{v_1}]\cup N[c^{v_2}]$ has a neighbor $x\in V(A_t)\setminus (N[c^{v_1}]\cup N[c^{v_2}]\cup \{\overline{w}\})$, since there exists a vertex $y\in N(c^{w})$ with neighbor $\overline{y}\in N(c^{x})$.
There could exist two adjacent vertices $w$ and $w'$ with $w\in N(c^{v_1})$ and $w'\in N(c^{v_2})$. 
However, the vertex $w'$ has no neighbor of color $c''$ in $A_t$ since $w'$ and $v_2$ can not be adjacent and there does not exist a second vertex of color $c''$ in $N(c^{v_2})$.
Consequently, we can exchange the color of $v_1$ with the color of $v_2$, the color of $c^{v_1}$ with the color of $c^{v_2}$ and afterward we can exchange the color of one vertex from $N(c^{v_1})\setminus \{v_1\}$ with the color of $v_1$ and color $u$ by $c''$. The top of Figure~\ref{lastproof} illustrates this recoloring process on a minimal b-$4$-atom fulfilling the hypothesis of Subcase 1.2.1.

\item[Subcase 1.2.2:] the vertices $v_1$ and $c^{v_2}$ do not have the same color.\\
Let $i$ be the color of $c^{v_1}$ and $j$ be the color of $c^{v_2}$.
In this case, we exchange the color of $c^{v_1}$ with the color of $c^{v_2}$ and the color of the vertex $w$ of color $j$ in $N(c^{v_1})$ with the color of the vertex $w'$ of color $i$ in $N(c^{v_2})$. For this, we have to suppose that $w$ is not adjacent to a vertex of color $i$ and that $w'$ is not adjacent to a vertex of color $j$. For $t\ge 4$, such vertices $w$ and $w'$ exist since at most one vertex of $N(c^{v_1})$ has a neighbor of color $j$ (otherwise, it would contradict Property iii) since every vertex of $N(c^{v_1})\setminus \{v_1\}$ has already a neighbor in $V(A_t)$ of color $c'$) and at most one vertex of $N(c^{v_2})$ has a neighbor of color $i$. If $t=3$, then the only (up to isomorphism) b-$3$-atom with a coloring fulfilling all these hypothesis (up to color permutation) is illustrated at the bottom of Figure~\ref{lastproof}, along with the recoloring process. In this b-$3$-atom, no more edge can be added (otherwise, it would create a cycle of length at most 6).

\item[Subcase 1.2.3:] the vertices $v_2$ and $c^{v_1}$ do not have the same color.\\
We proceed as for the previous subcase by considering $v_2$ instead of $v_1$ and $c^{v_1}$ instead of $c^{v_2}$.

\item[Case 2:] vertices of $N_2$.\\
Since each pair of adjacent vertices $u,v\in N(A_t)$ satisfies Property ii), we obtain that $I_c(u)\cap I_c(v)= \emptyset$.
We color each vertex $u\in N_2$ by a color $i\in I_c(u)$ such that $u$ and $c_i$ are not adjacent.

\item[Case 3:] vertices of $N_3$.\\
Notice that, by definition, a vertex of $C(A_t)$ has no available color.
Let $u\in N_3$.
We begin by coloring $u$ with any available color if it has some. If $u$ has no available color, there could exist a color $i$ such that every vertex of $N(u)$ with color $i$ has an available color (these vertices should be in $N(A_t)$). If such color $i$ exists, we recolor these vertices of color $i$ by any available color and give color $i$ to $u$. If such color $i$ does not exist, then the set of vertices at distance at most $2$ from $u$ induces a b-$(t+1)$-atom with center $N[u]$. It can be noted that the recolored vertices are in $N(A_t)$ since $N(u)\cap V(A_t)\subseteq C(A_t)$.
\end{description}
We finish this proof by illustrating that the obtained coloring is a b-$t$-coloring of $N[A_t]$. In case 1, we have modified the coloring of $A_t$. However, since we have exchanged the colors of well-chosen vertices in order that every vertex of $C(A_t)$ still has neighbor of every color from $\{1,\ldots,t\}$ except its own color, this coloring remains a b-$t$-coloring. In case 3, we have only changed the color of vertices from $N(A_t)$. 

\end{proof}
We think that the previous theorem can be useful to determine the family of graphs of girth at least 7 satisfying $\varphi(G)= m(G)$. It has already been proven that graphs of girth at least 7 have b-chromatic number at least $m(G)-1$ \cite{CAM2013}.
\begin{cor}
Let $G$ be a graph of girth at least $7$ and of order $n$ and let $t$ be an integer.
There exists an algorithm in time $O(n^{t^2})$ to determine if $\varphi(G)\ge t$.
\end{cor}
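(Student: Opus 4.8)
The plan is to derive the corollary directly from the previous theorem together with the XP algorithm already established for the b-relaxed number, so almost nothing new is needed. First I would dispose of the trivial range $t\le 2$: the condition $\varphi(G)\ge 1$ is just $V(G)\neq\emptyset$, and $\varphi(G)\ge 2$ is just $E(G)\neq\emptyset$, both decidable in time $O(n^2)=O(n^{t^2})$. So from now on assume $t\ge 3$.

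The key step is to show that for $t\ge 3$ the question ``$\varphi(G)\ge t$?'' is equivalent to ``$\varphi_r(G)\ge t$?''. If $\varphi_r(G)\le 2$ then, since $\varphi(G)\le\varphi_r(G)$, both inequalities fail simultaneously. If instead $\varphi_r(G)\ge 3$, then the previous theorem applies and gives $\varphi(G)=\varphi_r(G)$, so again the two inequalities are equivalent. Hence in all cases (for $t\ge 3$) one has $\varphi(G)\ge t$ if and only if $\varphi_r(G)\ge t$, and it suffices to decide the latter.

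To decide $\varphi_r(G)\ge t$ I would invoke Theorem~\ref{bsensd2}: this holds precisely when $G$ contains an induced minimal b-$t$-atom. By Observation~\ref{taillbatom} every b-$t$-atom, in particular every minimal one, has at most $t^2$ vertices, so one enumerates all $O(n^{t^2})$ subsets of $V(G)$ of size at most $t^2$ and tests whether any of them induces a minimal b-$t$-atom (this is exactly the algorithm behind the earlier proposition stating that b-r-COL is in XP). This runs in time $O(n^{t^2})$, which is the claimed bound.

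There is no genuine obstacle here; the only point that needs a word of care is the hidden hypothesis $\varphi_r(G)\ge 3$ in the previous theorem. This is handled for free by the case split above: whenever that hypothesis is violated the inequality $\varphi(G)\le\varphi_r(G)\le 2<t$ already answers the question negatively, so the algorithm (which always correctly decides $\varphi_r(G)\ge t$) still outputs the right answer for $\varphi(G)\ge t$. Everything else is a direct appeal to results proved earlier in the paper.
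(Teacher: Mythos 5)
Your proof is correct and follows exactly the route the paper intends: combine the girth-$7$ theorem (which forces $\varphi(G)=\varphi_r(G)$ when $\varphi_r(G)\ge 3$) with the $O(n^{t^2})$ atom-search algorithm for b-r-COL, handling the small values of $t$ and the case $\varphi_r(G)\le 2$ separately. The explicit case split covering the hidden hypothesis $\varphi_r(G)\ge 3$ is a welcome bit of care that the paper leaves implicit.
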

\section{Open questions}
We conclude this article by listing few open questions.
\begin{enumerate}
\item For which family of graphs are the b-relaxed number and the b-chromatic number equal?
\item Does there exist an easy characterization of feasible b-$t$-atoms?
\item Does there exist an FPT algorithm, with parameter $t$, to determine if $\varphi(G)\ge t$?
\end{enumerate}
\bibliographystyle{abbrv}
\bibliography{bib}
\end{document}